\journal{Information and Computation} 
\DeclareMathAlphabet{\mathpzc}{OT1}{pzc}{m}{normal} 
\setlist[enumerate,1]{label=(\arabic*),font=\normalfont,align=left,leftmargin=0pt,labelindent=0pt,listparindent=\parindent,labelwidth=0pt,itemindent=!,topsep=3pt,parsep=0pt,itemsep=3pt,start=1}
\setlist[enumerate,2]{label=(\alph*),font=\normalfont,labelindent=*,leftmargin=*,start=1}
\setlist[itemize]{labelindent=*,leftmargin=*}
\setlist[description]{labelindent=*,leftmargin=*,itemindent=-1 em}
\newdimen\tightpaperwidth
\newdimen\tightpaperheight
\renewcommand*\showkeyslabelformat[1]{%
\@ifundefined{hideNextShowKeysLabel}{%
\noexpandarg%
\StrSubstitute{#1}{ }{\textvisiblespace}[\TEMP]%
\parbox[t]{\marginparwidth}{\raggedright\normalfont\small\ttfamily\(\{\){\color{red!50!black}\expandafter\seqsplit\expandafter{\TEMP}}\(\}\)}%
}{}
}
\newenvironment{nicearray}[1]
    {\begin{array}{@{}#1@{}}\toprule\noalign{\smallskip}}
    {\\\bottomrule\end{array}}
\declaretheorem[name=Definition,style=definition,numberwithin=section]{definition}
\declaretheorem[name=Example,style=definition,sibling=definition]{example}
\declaretheorem[name=Remark,style=definition,sibling=definition]{remark}
\declaretheorem[name=Assumption,style=definition,sibling=definition]{assumption}
\declaretheorem[name=Notation,style=definition,sibling=definition]{notation}
\declaretheorem[name=Theorem,sibling=definition]{theorem}
\declaretheorem[sibling=definition]{corollary}
\declaretheorem[sibling=definition]{lemma}
\declaretheorem[sibling=lemma]{proposition}
\newcommand{\twnotei}[1]{\twnote[nofootnote,nomargin,inline]{#1}}
\newcommand{\FinalCoalgDagger}{{\ddagger}}
\newcommand{\LFFDagger}{{\dagger}}
\newcommand{\EquationDagger}{{\dagger}}
\newcommand{\textqt}[1]{``#1''}
\newcommand{\op}[1]{\operatorname{\textsf{\upshape {#1}}}}
\newcommand{\A}{\ensuremath{\mathcal{A}}\xspace}
\newcommand{\B}{\ensuremath{\mathcal{B}}\xspace}
\newcommand{\C}{\ensuremath{\mathcal{C}}\xspace}
\newcommand{\Cfp}{\ensuremath{\mathcal{C}_\mathsf{fp}}\xspace}
\newcommand{\D}{\ensuremath{\mathcal{D}}\xspace}
\newcommand{\Fun}{\ensuremath{\mathsf{Fun}}\xspace}
\newcommand{\Coalg}{\ensuremath{\mathsf{Coalg}}\xspace}
\newcommand{\Coalgfg}{\ensuremath{\mathsf{Coalg}_{\mathsf{fg}}}\xspace}
\newcommand{\Coalglfg}{\ensuremath{\mathsf{Coalg}_{\mathsf{lfg}}}\xspace}
\newcommand{\Coalgfp}{\ensuremath{\mathsf{Coalg}_{\mathsf{fp}}}\xspace}
\newcommand{\Set}{\ensuremath{\mathsf{Set}}\xspace}
\newcommand{\Mon}{\ensuremath{\mathsf{Mon}}\xspace}
\renewcommand{\H}{\ensuremath{\mathpzc{H}}\xspace}
\newcommand{\Hf}{\ensuremath{\mathpzc{H}_\textnormal{\,\sffamily f}}\xspace}
\newcommand{\Funf}{\ensuremath{\mathsf{Fun}_\textnormal{\sffamily f}}\xspace}
\newcommand{\Mndc}{\ensuremath{\mathsf{Mnd}_\textnormal{\sffamily c}}\xspace}
\newcommand{\Mndf}{\ensuremath{\mathsf{Mnd}_\textnormal{\sffamily f}}\xspace}
\newcommand{\EQ}[1][]{\ensuremath{\mathsf{EQ}_\textnormal{#1}}\xspace}
\newcommand{\id}{\ensuremath{\textnormal{id}}\xspace}
\newcommand{\inj}{\ensuremath{\mathsf{in}}}
\newcommand{\inl}{\ensuremath{\mathsf{inl}}}
\newcommand{\inr}{\ensuremath{\mathsf{inr}}}
\newcommand{\Id}{\ensuremath{\textnormal{Id}}\xspace}
\newcommand{\Pot}{\ensuremath{{\mathcal{P}}}\xspace}
\newcommand{\Potf}{\ensuremath{{\mathcal{P}_\textnormal{\sffamily f}}}\xspace}
\newcommand{\lff}{\ensuremath{\ell}}
\newcommand{\FPS}[2][S]{\ensuremath{#1\llangle#2\rrangle}}
\newcommand{\Poly}[2][S]{\ensuremath{#1\langle#2\rangle}}
\newcommand{\colim}{\ensuremath{\operatorname{colim}}\xspace}
\renewcommand{\o}{\ensuremath{\cdot}}
\newcommand{\fgiterative}{fg-iterative\xspace}
\newcommand{\LFF}{\vartheta}
\newcommand{\fpair}[1]{\ensuremath{\langle #1 \rangle}}
\newcommand{\fuse}[1]{\ensuremath{[#1]}}
\newcommand{\N}{\ensuremath{\mathds{N}}}
\newcommand{\Z}{\ensuremath{\mathds{Z}}}
\newcommand{\Q}{\ensuremath{\mathds{Q}}}
\newcommand{\fp}{{\textsf{\upshape{fp}}}}
\newcommand{\lfp}{{\textsf{\upshape{lfp}}}}
\renewcommand{\Im}{\op{Im}}
\newsavebox{\@brx}
\newcommand{\llangle}[1][]{\savebox{\@brx}{\(\m@th{#1\langle}\)}%
  \mathopen{\copy\@brx\kern-0.5\wd\@brx\usebox{\@brx}}}
\newcommand{\rrangle}[1][]{\savebox{\@brx}{\(\m@th{#1\rangle}\)}%
  \mathclose{\copy\@brx\kern-0.5\wd\@brx\usebox{\@brx}}}
\tikzset{
    every diagram/.style={
        row sep=1cm,
        column sep=1cm,
    }
}
\newcommand{\descto}[3][]{
    \arrow[draw=none]{#2}[anchor=center,#1]{#3}
}
\tikzset{oldequal/.style={
    double equal sign distance,
    -,
}}
\tikzset{shiftarr/.style={
        rounded corners,%
        to path={--([#1]\tikztostart.center)
                     -- ([#1]\tikztotarget.center) \tikztonodes
                     -- (\tikztotarget)},
}}
\tikzset{commutative diagrams/diagrams={
    rounded corners,
}}
\tikzstyle{mathnodes}=[
\tikzstyle{anchorcenter}=[
\tikzstyle{lambdatree}=[
\tikzstyle{level}=[
\newbox\xrat@below
\newbox\xrat@above
\newcommand{\xrightarrowtail}[2][]{%
  \setbox\xrat@below=\hbox{\ensuremath{\scriptstyle #1}}%
  \setbox\xrat@above=\hbox{\ensuremath{\scriptstyle #2}}%
  \pgfmathsetlengthmacro{\xrat@len}{max(\wd\xrat@below,\wd\xrat@above)+.6em}%
  \mathrel{\tikz [>->,baseline=-.75ex]
                 \draw (0,0) -- node[below=-2pt] {\box\xrat@below}
                                node[above=-2pt] {\box\xrat@above}
                       (\xrat@len,0) ;}}
\newcommand{\takeout}[1]{\empty}
\newcommand{\Sd}{S}
\def\epito{\twoheadrightarrow}
\def\monoto{\rightarrowtail}
\def\subto{\hookrightarrow}
\newcommand{\Rbeheq}{\mathbin{\mathord{\sim}R\mathord{\sim}}}
\title{A New Foundation for Finitary Corecursion and Iterative Algebras}
\author[fau]{Stefan Milius\fnref{dfg}}
\ead{stefan.milius@fau.de}
\author[anu]{Dirk Pattinson}
\ead{dirk.pattinson@anu.edu.au}
\author[fau]{Thorsten Wißmann\fnref{dfg}}
\ead{thorsten.wissmann@fau.de}
\address[fau]{Friedrich-Alexander-Universität Erlangen-Nürnberg}
\address[anu]{The Australian National University}
\begin{document}
\begin{frontmatter}
\begin{abstract}%
  This paper contributes to a theory of the behaviour of ``finite-state'' systems that is generic in the system type. We propose that such systems are modelled as coalgebras with a finitely generated carrier for an endofunctor on a locally finitely presentable category. Their behaviour gives rise to a new fixpoint of the coalgebraic type functor called \emph{locally finite fixpoint} (LFF). We prove that if the given endofunctor is finitary and preserves monomorphisms then the LFF always exists and is a subcoalgebra of the final coalgebra (unlike the rational fixpoint previously studied by \citeauthor*{iterativealgebras}). Moreover, we show that the LFF is characterized by two universal properties: (1)~as the final locally finitely generated coalgebra, and (2)~as the initial fg-iterative algebra. As instances of the LFF we first obtain the known instances of the rational fixpoint, e.g. regular languages, rational streams and formal power-series, regular trees etc. Moreover, we obtain a number of new examples, e.g.~(realtime deterministic resp.~non-deterministic) context-free languages, constructively $S$-algebraic formal power-series (in general, the behaviour of finite coalgebras under the coalgebraic language semantics arising from the generalized powerset construction by \citeauthor*{sbbr13}), and the monad of Courcelle's algebraic trees.%
\end{abstract}
\begin{keyword}
coalgebra\sep
fixpoints of functors\sep
automata behaviour \sep
algebraic trees\sep 
\end{keyword}
\end{frontmatter}

\section{Introduction}

Coalgebras capture many types of state based system within a uniform and
mathematically rich framework \cite{Rutten:2000:UCT:abbrev}. One outstanding
feature of the general theory is \emph{final semantics} which gives
a fully abstract account of system behaviour, i.e.~it provides
precisely all the behavioural equivalence classes.  For example, the
coalgebraic modelling of deterministic automata (without a finiteness
restriction on state sets) yields the set of all formal languages as a
final model, and restricting to \emph{finite} automata one precisely
obtains the regular languages \cite{Rutten:1998:ACE}. This
correspondence has been generalized to locally finitely presentable
categories~\cite{adamek1994locally,gu71}, where \emph{finitely
  presentable} objects play the role of finite sets, leading to the
notion of \emph{rational fixpoint} that provides final semantics to
all models with finitely presentable carrier~\cite{streamcircuits}. It
is known that the rational fixpoint is fully abstract for these models
as long as finitely presentable objects agree with finitely generated
objects in the base category~\cite[Proposition~3.12]{bms13}. While
this is the case in some categories (e.g. sets, posets, graphs, vector
spaces, commutative monoids), it is currently unknown in other base
categories that are used in the construction of system models, for
example in idempotent semirings (used in the treatment of context-free
grammars \cite{coalgcontextfree}), in algebras for the stack monad
(used for modelling configurations of stack
machines~\cite{coalgchomsky}); or it even fails, for example in the
category of finitary monads on sets (used in the categorical study of
algebraic trees \cite{secondordermonad}), or Eilenberg-Moore
categories for a monad in general (the target category of generalized
determinization \cite{sbbr13}, in which the above examples
live). Coalgebras over a category of Eilenberg-Moore algebras over
\Set in particular provide a paradigmatic setting: automata that
describe languages beyond the class of regular languages consist of a
finite state set, but their transitions produce side effects such as
the manipulation of a stack.  These can be described by a monad, so
that the (infinite) set of system configurations (machine states plus
stack content) is described by a free algebra (for that monad) that is
generated by the finite set of machine states.  This is formalized by
the generalized powerset construction \cite{sbbr13} and interacts
nicely with the coalgebraic framework we present.

Technically, the shortcoming of the rational fixpoint is due to the fact that
finitely presentable objects are not closed under quotients, so that the
rational fixpoint itself may fail to be a subcoalgebra of the final coalgebra
and so does not identify all behaviourally equivalent states. The main
conceptual contribution of this paper is the insight that also in cases where
finitely presentable and finitely generated do not agree, we have a canonical
domain for finitely generated behaviour. We introduce the \emph{locally finite fixpoint}
which provides a fully abstract model for such behaviour. We
support this claim both by general results and concrete examples: we show that
under mild assumptions, the locally finite fixpoint always exists, and we give a
coalgebraic construction of it (\autoref{thm:final}); we also prove that it is
indeed a subcoalgebra of the final coalgebra (\autoref{lffSubcoalg}). Moreover,
we give a characterization of the locally finite fixpoint as the initial
fg-iterative algebra (\autoref{lffInitialIter}). We then instantiate our results
to several scenarios studied in the literature.

First, we show that the locally finite fixpoint is universal (and fully
abstract) for the class of systems produced by the generalized powerset
construction over \Set: every determinized finite-state system
induces a
unique homomorphism to the locally finite fixpoint, and the latter contains precisely
the finite-state behaviours (\autoref{prop:LFFunion}).

Applied to the coalgebraic treatment of context-free languages, we
show that the locally finite fixpoint yields precisely the
context-free languages (\autoref{stackContextFree}), and real-time
deterministic context-free languages (\autoref{stackRealTime}),
respectively, when their accepting machines are modelled as coalgebras
over the category of algebras for the stack monad of
\cite{coalgchomsky}. For context-free languages weighted in a semiring
$S$, or equivalently for constructively $S$-algebraic power series
\cite{Petre:2009:ASP}, the locally finite fixpoint comprises precisely
those (\autoref{lffPowerSeries}), by phrasing the results of
\citet{jcssContextFree} in terms of the generalized powerset
construction.

Our last example shows the applicability of our results to  
Eilenberg-Moore algebras over categories beyond \Set, and we characterize the monad of Courcelle's
algebraic trees over a signature \cite{courcelle,secondordermonad} as the
locally finite fixpoint of an associated functor (on a category of monads)
(\autoref{lffAlgTree}), solving an open problem in~\cite{secondordermonad}.

The work extends the conference paper~\cite{mpw16}. The present paper
is a completely reworked version containing detailed proofs of all our
results. In addition, Section~\ref{sec:iterative} on fg-iterative
algebras is new.

\paragraph{Related Work} The characterization of languages in
terms of (co-)algebraic constructions has been carried out for
various examples, such as (weighted) context-free languages
\cite{Winter:2013:CCC,coalgchomsky} as well as regular languages
\cite{Rutten:1998:ACE} where characterization theorems were
established on a case-by-case basis. We show that the locally finite
fixpoint provides a more general, and conceptual account. 
We have already mentioned the
rational fixpoint~\cite{iterativealgebras,streamcircuits} that
serves a similar purpose and shares many technical similarities with
the locally finite fixpoint, introduced here. Many of the properties
of the rational fixpoint in fact hold, \emph{mutatis
mutandis}, also for locally finite fixpoint, cf.~\emph{op.cit}.

\paragraph{Outline of the paper} The rest of this paper is structured
as follows. In \autoref{sec:prelim} we recall a few basic facts about
the central notions of this paper: locally finitely presentable
categories, coalgebras, and the rational fixpoint of an
endofunctor. Next, in \autoref{sec:locfp} we introduce locally
finitely generated (lfg) coalgebras, and we prove that a final lfg
coalgebra exists, is a fixpoint (called locally finite fixpoint) and a
subcoalgebra of the final coalgebra. The new \autoref{sec:iterative}
provides a characterization of the locally finite fixpoint as an
algebra: it is the initial fg-iterative algebra. Then in
\autoref{sec:relrat} we investigate the relationship of the locally
finite fixpoint to the rational fixpoint. Under slightly stronger
assumptions than before we prove that the locally finite fixpoint is
the image of the rational fixpoint in the final coalgebra. Finally, in
\autoref{sec:app} we consider several examples of the locally finite
fixpoint, and \autoref{sec:con} discusses future work and concludes
the paper.

\paragraph{Acknowledgments} We are grateful to Henning Urbat for
pointing out a mistake in a previous version of the proof of
\autoref{LFFisIterative}, and to him and Ji\v{r}\'i Ad\'amek for many
helpful discussions.\smnote{Do we need to mention Joost Winter?}

\section{Preliminaries and Notation}
\label{sec:prelim}

In this section we briefly recall a number of technical preliminaries
needed throughout the paper. We assume that readers are familiar with
basic category theory and with algebras and coalgebras for
endofunctors. 

\subsection{Eilenberg-Moore-categories}

Given a monad $T\colon \C\to \C$, its Eilenberg-Moore category $\C^T$
is the category whose objects are the algebras for the monad $T$,
i.e.~pairs $(A,a)$ where $A$ is an object of $\C$ (the \emph{carrier}
of the algebra) and $a\colon TA\to A$ a morphism (the \emph{structure}
of the algebra) such that $a \cdot \eta_A = \id_A$ and
$a \cdot Ta = a\cdot \mu_A$, where $\eta: \Id \to T$ and
$\mu: TT \to T$ are the unit and multiplication of the monad $T$.
Morphisms of $T$-algebras are morphisms of $\C$ commuting with algebra structures. More precisely, a $T$-algebra morphism from $(A,a)$ to $(B,b)$ is a morphism $f: A \to B$ of $\C$ such that $f \cdot a = b \cdot Tf$. 
See \citeauthor{awodey2010category}~\cite[Chapter 10]{awodey2010category}
for a more detailed introduction.

\emph{Liftings} are a common way to define endofunctors on $\C^T$. Given a
functor on the base category $H\colon \C\to \C$, a lifting of $H$ is a functor
$H^T\colon \C^T\to \C^T$ such that the square below commutes, where $U: \C^T \to
\C$ denotes the forgetful functor:
\[
    \begin{tikzcd}
      \C^T
      \arrow{d}[swap]{U}
      \arrow{r}{H^T}
      & \C^T
      \arrow{d}{U}
      \\
      \C
      \arrow{r}{H}
      & \C
  \end{tikzcd}
\]
Recall that the forgetful functor $U$ has a left adjoint given by
assigning to an object $X$ of $\C$ the free Eilenberg-Moore algebra
$(TX, \mu_X)$.

The examples of Eilenberg-Moore-categories over \Set include groups,
monoids, (idempotent) semirings, \Set itself, and moreover any variety
of (finitary) algebras, i.e.~a class of algebras specified by 
(finitary) operations and equations.

Note that monos is $\Set^T$ are precisely the injective $T$-algebra
morphisms. However, epis need not be surjective in $\Set^T$; for
example the embedding $\Z \subto \Q$ from the integers to the rationals,
each considered as a monoid w.r.t.~multiplication, is an epi in the
category of monoids. The surjective $T$-algebra morphisms are
precisely the strong epis.\footnote{In $\Set^T$ the classes of
  strong and regular epimorphisms coincide.}

Recall that, in general, an epi $e\colon X\twoheadrightarrow Y$ is called
\emph{strong}, if for every mono $m\colon A \rightarrowtail B$ and morphisms
$f\colon X\to A$, $g\colon Y\to B$ with $g\cdot e=m\cdot f$, there exists a unique
\emph{diagonal fill-in}, i.e.~a unique $d\colon Y\to A$ such that:
\[
  \begin{tikzcd}
    X
    \arrow{d}[swap]{f}
    \arrow[->>]{r}{e}
    & Y \arrow{d}{g}
    \arrow[dashed]{dl}[description]{\exists ! d}
    \\
    A \arrow[>->]{r}{m}
    & B
  \end{tikzcd}
\]
Observe that for strong epis we have the same cancellation law as for
ordinary epis: 
\[
  \text{if $e' \cdot e$ is a strong epi, then $e'$ is a strong epi.}
\]
We will continue to denote monos and strong epis in a category by
$\monoto$ and $\epito$, respectively.

The coproduct of a family $(X_i)_{i \in I}$ of objects is denoted by
$\big(\inj_i\colon X_i \to \coprod_{i \in I} X_i\big)_{i\in I}$, and we call the
morphism $\inj_i$ the \emph{coproduct injections}. Furthermore, for a
family of morphisms $(f_i\colon X_i \to Y)_{i\in I}$, we denote by
\[
  [f_i]_{i\in I}\colon \coprod_{i\in I} X_i \to Y
\]
the unique morphism with $[f_i]_{i \in I} \cdot \inj_i = f_i$ for
every $i$. In the case of binary coproducts we write
\[
  X \xrightarrow{\inl} X_1 + X_2 \xleftarrow{\inr} X_2
  \qquad
  \text{and}
  \qquad
  X_1+X_2 \xrightarrow{[f_1,f_2]} Y.
\]
Similarly, for every colimit cocone $(c_i\colon C_i \to C)_{i \in I}$ we call the
morphisms $c_i$ the \emph{injections} of the colimit (even though they
are not injective maps, in general).

\begin{example}[label = colimitStrongEpi]
  Suppose that $\C$ is a cocomplete category. 
  For every diagram $D\colon \D\to \C$, the injections of the colimit
  cocone $(d_i\colon Di \to \colim D)_{i\in \D}$ yield the strong epi
  \[
    [d_i]_{i\in \D}\colon \coprod_{i\in \D} Di \epito \colim D.
  \]
\end{example}

\subsection{Locally finitely presentable categories} 

A \emph{filtered colimit} is the colimit of a diagram $\D \to \C$
where $\D$ is a filtered category (i.e.~every finite subcategory
$\D_0 \subto \D$ has a cocone in $\D$), and a \emph{directed colimit}
is a colimit of a diagram having a directed poset as its diagram
scheme $\D$. \emph{Finitary functors} preserve filtered (equivalently
directed) colimits. An object $C \in \C$ is called \emph{finitely
  presentable} (fp) if its hom-functor $\C(C, -)$ is finitary and
\emph{finitely generated} (fg) if $\C(C, -)$ preserves directed
colimits of monos (i.e.~all connecting morphisms in $\C$ are
monic). Clearly every fp object is fg, but not conversely in
general. Moreover, fg objects are closed under strong quotients; here,
a strong quotient of an object $X$ is represented by a strong epi
$X \twoheadrightarrow Y$, and closure under strong quotients means
that $Y$ is fg whenever $X$ is. For fp objects this fails in general.

A cocomplete category is called \emph{locally finitely presentable} (lfp) if
the full subcategory $\C_\fp$ of finitely presentable objects is
essentially small, i.e.~is up to isomorphism only a set, and every
object $C \in \C$ is a filtered colimit of a diagram in $\C_\fp$. We
refer to~\cite{gu71,adamek1994locally} for further details.

It is well known that the categories of sets, posets and graphs are
lfp with finitely presentable objects precisely the finite sets,
posets, graphs, respectively.  The category of vector spaces is lfp
with finite-dimensional spaces being the fp objects. Every finitary variety is
lfp.  The finitely generated objects are precisely the finitely
generated algebras, i.e.~those algebras having a finite set of
generators, and finitely presentable objects are precisely those
algebras specified by finitely many generators and relations. This
includes the categories of groups, monoids, (idempotent) semirings,
semi-modules, etc. More generally, for every finitary monad $T$,
i.e.~the underlying functor of $T$ is finitary, on the lfp category
$\C$, the Eilenberg-Moore category $\C^T$ is lfp
again~\cite[Remark~2.78]{adamek1994locally}.

Every lfp category has (strong epi,mono)-factorizations of
morphisms~\cite[Proposition 1.16]{adamek1994locally}, i.e.~every
morphism $f\colon A \to B$ factorizes as $f = m \o e$ for some mono
$m\colon \Im(f) \rightarrowtail B$ and strong epi
$e\colon A \twoheadrightarrow \Im(f)$. We call the subobject of $B$
represented by $m$ the \emph{image} of $f$.

\takeout{
  \twnotei{What's the point of this item? In my opinion, we are not
  interested in a sufficient criterion for fp=fg. Furthermore, we look at
  modules, but not necessarily of notherian rings. So I vote for dropping
  without substitution.} Modules for a Noetherian semiring. Recall that a (semi-)module for a
  semiring $\Sd$ is a commutative monoid $(M,+,0)$ together with an action of
  the semiring $\Sd$ on $M$ satisfying the usual distributive laws $r(m+n) =
  rm+rn$ and $r0 = 0$. Hence, modules for $\Sd$ form a finitary variety. In
  general the classes of fg modules and fp modules do not coincide. The semiring
  $\Sd$ is called \emph{Noetherian} if any submodule of a finitely generated
  module is itself finitely generated. For Noetherian semirings the classes of
  finitely generated and finitely presentable modules coincide (see
  e.g.~\cite[Prop.~2.6]{bms13} for a proof). There are also non-Noetherian
  semirings for which fp and fg modules coincide; e.g.~the module for the
  semiring of natural numbers for which modules are precisely the commutative
  monoids. 
}

We will subsequently make use of the following technical lemma. Recall
that a union of subobjects of some object $B$ is their join in the
poset of all subobjects of $B$. In an lfp category, a \emph{directed
  union} is, equivalently, a directed colimit of monos (see
  e.g.~\cite[Lemma~2.3]{amsw19functor}). 
\begin{lemma}[Ad\'amek, Milius, Sousa, and Wi\ss\/mann~{\cite[Lemma
2.9]{amsw19functor}}]\label{unionsofimages}
  \leavevmode\newline
  Images of filtered colimits in the lfp category $\C$ are directed
  unions of images.
\end{lemma}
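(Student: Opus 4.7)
The plan is to establish the equality $\Im(f) = \bigcup_{i\in\D} \Im(f \cdot c_i)$ whenever $f\colon C\to B$ has as its domain a filtered colimit $(c_i\colon Di \to C)_{i\in\D}$ of a diagram $D\colon \D\to \C$. Both inclusions will be obtained by exploiting the (strong epi, mono)-factorizations and the diagonal fill-in property guaranteed in every lfp category, together with the standard fact that a filtered colimit of monos in an lfp category stays mono.

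First, I would factor $f\cdot c_i = m_i \cdot e_i$ with $m_i\colon I_i \monoto B$ representing $\Im(f\cdot c_i)$ and $e_i\colon Di \epito I_i$ a strong epi. For every connecting morphism $Dh\colon Di \to Dj$ the equation $m_j \cdot e_j \cdot Dh = f \cdot c_i = m_i \cdot e_i$ admits a unique diagonal fill-in $u_{ij}\colon I_i \to I_j$ satisfying $m_j \cdot u_{ij} = m_i$ and $u_{ij} \cdot e_i = e_j \cdot Dh$. This promotes $(I_i, m_i)$ to a filtered diagram of subobjects of $B$. Let $U$ be its colimit with injections $u_i\colon I_i \to U$ and $\bar{m}\colon U \to B$ the induced morphism with $\bar{m} \cdot u_i = m_i$. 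Since directed colimits of monos in an lfp category are monos, $\bar{m}$ is monic, so it represents the directed union $\bigcup_{i\in\D} \Im(f\cdot c_i)$.

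For the inclusion $\bigcup_i \Im(f \cdot c_i) \le \Im(f)$, I would take the image factorization $f = m \cdot e$ and apply diagonal fill-in to $m \cdot e \cdot c_i = f \cdot c_i = m_i \cdot e_i$, obtaining $d_i\colon I_i \to \Im(f)$ with $m \cdot d_i = m_i$. Thus each $I_i$ sits below $\Im(f)$ as a subobject of $B$, and so does their union $U$. For the reverse inclusion $\Im(f) \le U$, I would observe that the morphisms $u_i \cdot e_i\colon Di \to U$ form a cocone over $D$, since $u_j \cdot e_j \cdot Dh = u_j \cdot u_{ij} \cdot e_i = u_i \cdot e_i$ by the construction of the $u_{ij}$. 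The universal property of $C = \colim D$ then yields $\bar{e}\colon C \to U$ with $\bar{e} \cdot c_i = u_i \cdot e_i$, whence $\bar{m} \cdot \bar{e} \cdot c_i = m_i \cdot e_i = f \cdot c_i$ for every $i$, so $\bar{m} \cdot \bar{e} = f$ by uniqueness. Therefore $f$ factors through the mono $\bar{m}$, giving $\Im(f) \le U$ and combining with the first inclusion to yield the claimed equality.

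The main obstacle is the bookkeeping around the family of fill-ins $u_{ij}$: one has to verify the compatibilities $u_{jk} \cdot u_{ij} = u_{ik}$ (so that $I_i$ really forms a filtered diagram) and deduce the cocone property $u_j \cdot e_j \cdot Dh = u_i \cdot e_i$, both of which rest on the uniqueness part of the diagonal fill-in. Once this is in place the two inclusions follow without further difficulty.
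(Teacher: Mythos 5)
The paper does not actually prove this lemma; it only cites it from Ad\'amek, Milius, Sousa and Wi\ss mann~\cite[Lemma~2.9]{amsw19functor} and records the precise formulation in diagram~\eqref{equnionofimages}, so there is no in-paper argument to compare against. Your proof is correct and is essentially the standard one: factor each $f\cdot c_i$, organise the images $I_i$ into a directed diagram of subobjects of $B$ via diagonal fill-ins, and verify the two inequalities $\bigcup_i I_i\le\Im(f)$ and $\Im(f)\le\bigcup_i I_i$. Two small points are worth making explicit. First, the fill-in $u_{ij}$ is independent of the chosen connecting morphism $h\colon i\to j$, since it is already determined by $m_j\cdot u_{ij}=m_i$ and $m_j$ monic; this (rather than just the functoriality check you mention) is what collapses the diagram of the $I_i$ to an essentially directed poset of subobjects, so that its colimit genuinely is a directed union. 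Second, the step ``$\bar m$ is monic'' uses that in an lfp category the canonical morphism from a directed colimit of subobjects into $B$ is again monic (directed colimits of subobjects are unions, cf.~\cite[Proposition~1.62]{adamek1994locally} and \cite[Lemma~2.3]{amsw19functor}); this is slightly more than ``filtered colimits of monos are monos'' but is indeed standard. With these clarifications the argument is complete.
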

More precisely, suppose we have a filtered diagram $D\colon \D\to \C$
with a colimit cocone $(c_i\colon Di \to C)_{i\in \D}$ and a morphism
$f\colon C\to B$. Then the image of $f$ together with the induced
monomorphisms $d_i$ forms the directed union of the images of the
$f\cdot c_i$: 
\begin{equation}\label{equnionofimages}
  \begin{tikzcd}
    Di \arrow[->>]{rr}{e_i} \arrow{d}[left]{c_i} & {} &
    \Im(f \cdot c_i) \arrow[dashed]{dl}[above left]{d_i} 
    \arrow[>->]{d}[right]{m_i} \\
    C \arrow[->>]{r}{e}
    \arrow[shiftarr={yshift=-3ex}]{rr}[below]{f}
    \arrow[->>]{r}{e}
    & \Im(f) \arrow[>->]{r}{m} & B
  \end{tikzcd}
\end{equation}
%
\takeout{
\begin{remark}\label{rem:colimepi}
  Note that for the previous proof we do not need the full strength of
  filteredness of the given diagram $D$. In fact, in order to
  establish that the $A_i$ form a directed diagram it suffices that
  for each pair of object $i, j$ in $\D$ there exists an object
  $k$ in $\D$ and morphisms $i \to k$ and $j \to k$. 
\end{remark}
}

\takeout{
We now establish that the well known characterization of fg algebras
being precisely the quotients of a free algebra on a finite set hold
more generally for every Eilenberg-Moore category over an lfp category
$\C$.
\begin{proposition}\label{prop:fgalgs}
  Let $\C$ be an lfp category and $T\colon\C \to \C$ a finitary monad. Then
  every fg $T$-algebra $(A,a)$ is a strong quotient of a free algebra 
  $(TX,\mu_X)$ where $X$ is fp in $\C$.
\end{proposition}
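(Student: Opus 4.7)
The plan is to exploit the canonical presentation of $(A,a)$ as a strong-epi quotient of $(TA,\mu_A)$ via the algebra structure $a$, and to cut $TA$ down to a free algebra on an fp object using the lfp structure of $\C$ together with the fg hypothesis. First, since $\C$ is lfp, I would write $A = \colim_{i\in\D} X_i$ as a filtered colimit of fp objects $X_i\in\C_\fp$ with injections $\iota_i\colon X_i\to A$. Because the free-algebra functor $F\colon\C\to\C^T$, $X\mapsto(TX,\mu_X)$, is left adjoint to $U$, it preserves this colimit, so $(T\iota_i\colon(TX_i,\mu_{X_i})\to(TA,\mu_A))_{i\in\D}$ is a filtered colimit cocone in $\C^T$. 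Note also that $\C^T$ is lfp, since $T$ is finitary, so \autoref{unionsofimages} is applicable inside $\C^T$.

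Next, define the $T$-algebra morphisms $\hat\iota_i = a\cdot T\iota_i\colon(TX_i,\mu_{X_i})\to(A,a)$, i.e.\ the adjoint transposes of the $\iota_i$. The structure morphism $a\colon(TA,\mu_A)\to(A,a)$ is a strong epi in $\C^T$ (for instance because $(A,a)$ is the canonical coequalizer of $\mu_A$ and $Ta$ in $\C^T$, hence a regular, and therefore strong, epi). Applying \autoref{unionsofimages} in $\C^T$ to the filtered colimit $((TX_i,\mu_{X_i}))_{i\in\D}$ with colimit $(TA,\mu_A)$ and to the morphism $a$, one obtains that $\Im(a)$ is the directed union of the images $\Im(\hat\iota_i)$; since $a$ is a strong epi, $\Im(a) = (A,a)$, so $(A,a)$ itself is this directed union of subobjects.

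Finally, the hypothesis that $(A,a)$ is fg in $\C^T$ forces one of these subobjects to already coincide with $(A,a)$: an fg object equal to the directed union of its own subobjects must coincide with some member of that union (the identity factors through some inclusion, which is mono and hence forced to be an iso). Thus $\Im(\hat\iota_i) = (A,a)$ for some $i\in\D$, the strong-epi part of the (strong epi, mono)-factorization of $\hat\iota_i$ is an isomorphism, and $\hat\iota_i\colon(TX_i,\mu_{X_i})\twoheadrightarrow(A,a)$ is itself a strong epi from the free algebra on the fp object $X := X_i$. I do not foresee a real obstacle; the main subtle point is to justify invoking \autoref{unionsofimages} inside $\C^T$ rather than inside $\C$, which is legitimate precisely because $\C^T$ is lfp.
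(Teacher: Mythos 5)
Your proof is correct and follows essentially the same route as the paper's own argument: present $A$ as a filtered colimit of fp objects, transport it along the free-algebra left adjoint, use that $a\colon (TA,\mu_A)\to(A,a)$ is a strong (regular) epi, apply \autoref{unionsofimages} in the lfp category $\C^T$ to exhibit $(A,a)$ as a directed union of the images of the transposes, and invoke the fg hypothesis to split off one colimit injection as an iso. One small slip at the end: it is the \emph{mono} part of the (strong epi, mono)-factorization of $\hat\iota_i$ that becomes an isomorphism (not the strong-epi part), whence $\hat\iota_i$ is a strong epi as the composite of a strong epi with an iso.
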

\begin{proof}
  Let $a\colon TA \to A$ be an algebra for the finitary monad $T$ on the lfp
  category $\C$. First observe that its algebra structure $a$ is a strong
  epi in $\C^T$; in fact, it follows from the
  axioms of Eilenberg-Moore algebras that $a$ is a $T$-algebra
  morphism and a coequalizer of $Ta$ and $\mu_A$.

  Recall that we can write $A$ as the filtered colimit of the
  canonical diagram $\Cfp/A \to \C$ with colimit injections all
  morphisms $f_i\colon X_i \to A$ in $\C$ with $X_i$ fp. The left-adjoint
  $\C\to \C^T$, which maps $X$ to the free $T$-algebra $(TX,\mu_X)$,
  preserves colimits. Thus, the $Tf_i\colon TX_i \to TA$ form a filtered
  colimit in $\C^T$. Now we take the (strong epi, mono) factorizations
  of the $a \cdot Tm_i\colon TX_i \to A$ in $\C^T$ to obtain a directed
  diagram of subalgebras of $A$:
  \[
    \begin{tikzcd}
      TX_i
      \arrow{d}[left]{Tm_i}\arrow[->>]{r}
      & A_i \arrow[>->]{d}{a_i}
      \\
      TA \arrow[->>]{r}{a}
      & 
      A
    \end{tikzcd}
  \]
  and we can apply \autoref{unionsofimages} in $\C^T$ to the morphism
  $a$ to see that $A$ is the colimit of this diagram. Since $A$ is fg,
  there exists some $i$ such $\id_A$ factors through one of the
  colimit injections $a_i\colon A_i \to A$, i.e.~there is some $f\colon A \to
  A_i$ with $a_i \cdot f = \id_A$. Thus $a_i$ is a split epi as well
  as a mono and therefore an iso, and we see that $A \cong A_i$ is
  strong quotient of $TX_i$.   
\end{proof}
}

\takeout{
We conclude this section with a brief discussion of finitary
adjunctions. Let $\A$ and $\B$ be lfp categories. An adjunction $L
\dashv R\colon \B \to \A$ is called \emph{finitary} if the right adjoint $R$ is
finitary (equivalently, the ensuing monad $T = RL$ on $\A$ is finitary). The
left-adjoint is, of course, always finitary.\smnote{It's surprising
  that the following shouldn't be somewhere in the literature. I can't
  see it in~\cite{adamek1994locally}; maybe Gabriel/Ulmer or Makkai/Par\'e?}
\begin{lemma}\label{lem:fppres}
  Let $L \dashv R\colon \B \to \A$ be a finitary adjunction between the lfp
  categories $\B$ and $\A$. Then we have:
  \begin{enumerate}
  \item $L$ preserves fp and fg objects, 
  \item if $L$ is fully faithful (i.e.~$\A$ may be regarded as a
    coreflective subcategory of $\B$) then $A$ is fp in $\A$ if and
    only if $LA$ is fp in $\B$,
  \item if, moreover $L$ preserves monos, then $A$ is fg in $\A$ if and
    only if $LA$ is fg in $\B$.
  \end{enumerate}
\end{lemma}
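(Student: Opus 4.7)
The plan is to reduce everything to the natural isomorphism $\B(LA, B) \cong \A(A, RB)$ given by the adjunction, and to exploit that $R$ is finitary (hence preserves filtered/directed colimits) and preserves monos (being a right adjoint).

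For part (1), observe that $\B(LA, -) \cong \A(A, R(-))$. If $A$ is fp, then $\A(A, -)$ is finitary, and since $R$ is finitary, the composite $\A(A, R(-))$ is finitary too, so $LA$ is fp. If $A$ is fg, take a directed diagram of monos $D\colon \D \to \B$ with colimit $C$. Since $R$ preserves monos and directed colimits, $RD$ is a directed diagram of monos in $\A$ with colimit $RC$, so $\A(A, -)$ preserves this colimit, whence $\B(LA, -)$ preserves the colimit of $D$. Thus $LA$ is fg.

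For parts (2) and (3), the assumption that $L$ is fully faithful means the unit $\eta_A\colon A \to RLA$ is an iso, which gives the natural isomorphism $\A(A, X) \cong \B(LA, LX)$ in $X\in\A$. For the nontrivial direction of (2), assume $LA$ is fp and let $D\colon \D\to\A$ be a filtered diagram with colimit $C$. Since $L$ is a left adjoint it preserves this colimit, so $LC = \colim LD$, and since $LA$ is fp we get
\[
  \A(A,C) \cong \B(LA, LC) \cong \colim \B(LA, LD(-)) \cong \colim \A(A, D(-)),
\]
so $A$ is fp. For the nontrivial direction of (3), assume additionally that $L$ preserves monos and that $LA$ is fg. Then for a directed diagram $D\colon \D\to \A$ of monos with colimit $C$, the image $LD$ is a directed diagram of monos in $\B$ (as $L$ preserves monos) with colimit $LC$ (as $L$ preserves colimits). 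Since $LA$ is fg, $\B(LA,-)$ preserves this colimit, and exactly the same chain of natural isomorphisms as above shows that $\A(A,C) \cong \colim \A(A, D(-))$; hence $A$ is fg.

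No step looks genuinely hard: the only subtlety is remembering that to transport an fg-property one needs the diagram to remain a directed diagram \emph{of monos} after transport, which is why $R$ must preserve monos (automatic as a right adjoint) in (1), and why $L$ must be assumed to preserve monos in (3). The case of fp in (2) does not require this assumption because $LA$ fp delivers preservation of \emph{all} filtered colimits, not only those with monic connecting morphisms.
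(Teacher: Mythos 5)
Your proof is correct and follows essentially the same route as the paper's own argument: part (1) via the chain of isomorphisms $\B(LA,\colim D)\cong\A(A,R(\colim D))\cong\colim\A(A,RD(-))\cong\colim\B(LA,D(-))$ using that $R$ is finitary and (for fg) preserves monos, and parts (2)--(3) via full faithfulness giving $\A(A,C)\cong\B(LA,LC)$ together with $L$ preserving the relevant colimits (and monos, for (3)). Your closing remark about why mono-preservation is needed exactly where it is matches the paper's use of these hypotheses.
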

\begin{proof}
  \begin{enumerate}
  \item Let $X$ be an fp object of $\A$ and let $D\colon \D \to \B$ be a
    filtered diagram. Then we have the following chain of natural
    isomorphisms
    \begin{align*}
      \B(LX, \colim D) 
      & \cong \A(X, R(\colim D))\\
      & \cong \A(X, \colim RD) \\
      & \cong \colim(\A(X, RD(-)) \\
      & \cong \colim(\B(LX, D(-)).
    \end{align*}
    This shows that $LX$ is fp in $\B$. Now if $X$ is fg in $\A$ and
    $D$ is a directed diagram of monos, then $RD$ is also a directed diagram of
    monos (since the right-adjoint $R$ preserves monos). Thus, the
    same reasoning proves $LX$ to be fg in $\B$. 
  \item Suppose that $LX$ is fp in $\B$ and that $D\colon \D \to \A$ is a
    filtered diagram. Then we have the following chain of natural
    isomorphisms:
    \begin{align*}
      \A(X,\colim D) & \cong \B(LX, L(\colim \D)) \\
      & \cong \B(LX, \colim LD )\\
      & \cong \colim (\B(LX, LD(-)) \\
      & \cong \colim (\A(X,D(-))
    \end{align*}
    Indeed, the first and last step use that $L$ is fully faithful,
    the second step that $L$ is finitary and the remaining one that
    $X$ is fp in $\A$. 
  \item If $LX$ is fg in $\B$ and $D\colon \D \to \A$ a directed diagram of
    monos, then so is $LD$ since $L$ preserves monos by
    assumption. Thus the same reasoning as in point 2.~shows that $X$
    is fg in $\A$. \qedhere
  \end{enumerate}
\end{proof}
}

\subsection{Coalgebras} 
\label{sec:coalgs}

Let $H\colon \C \to \C$ be an endofunctor. An \emph{$H$-coalgebra} is a
pair $(C, c)$, where $C$ is an object of $\C$ called the
\emph{carrier} and $c\colon C \to HC$ is a morphism called the
\emph{structure} of the coalgebra. A homomorphisms
$f\colon (C, c) \to (D, d)$ is a morphism $f\colon C \to D$ of $\C$ such that the following square commutes:
\[
    \begin{tikzcd}
      C \arrow{r}{c}
      \arrow{d}[swap]{f}
      & HC
      \arrow{d}{Hf}
      \\
      D \arrow{r}{d}
      & HD
    \end{tikzcd}
\]

Coalgebras and homomorphisms form a category,
which we denote by $\Coalg H$. 

If this category has a final object, then this final $H$-coalgebra is
denoted by 
\[
  \tau\colon \nu H \to H(\nu H).
\]
 The final coalgebra exists provided $H$ is
a finitary endofunctor on the lfp category $\C$ (see e.g.~\cite[Theorem~6.10]{AdamekEA17}). 

By the universal property, we have for every coalgebra $(C,c)$ a
unique homomorphism
$c^\FinalCoalgDagger\colon (C,c) \to (\nu H, \tau)$.\smnote{We should
  consistently use $\FinalCoalgDagger$ for the final coalgebra, $\LFFDagger$ for
  the LFF, and $\sharp$ for extensions to free $T$-algebras. But we
  also use $\EquationDagger$ for iterative algebras. Not sure this is a good
  idea.}  By Lambek's Lemma~\cite{lambek}, $\nu H$ is a fixpoint of
$H$ (i.e.~$\tau$ is an isomorphism). The final coalgebra represents a
canonical domain of behaviour of systems of type $H$, and the unique
homomorphism $c^\FinalCoalgDagger$ provides the semantics for a system
$(C,c)$. For a concrete category $\C$, i.e.~$\C$ is equipped with a
faithful functor $\mathopen|-\mathclose|\colon \C \to \Set$, we obtain
a notion of semantic equivalence called \emph{behavioural
  equivalence}: given two coalgebras $(C,c)$ and $(D,d)$, two states
$x \in |C|$ and $y \in |D|$ are called behavioural equivalent
(notation: $x \sim y$) whenever $|c^\FinalCoalgDagger(x)| = |d^\FinalCoalgDagger(y)|$.

Next, we recall a few categorical properties of $\Coalg H$. The
forgetful functor $\Coalg H \to \C$ creates colimits and reflects
monos and epis. A morphism $f$ in $\Coalg H$ is \emph{mono-carried}
(resp.~\emph{strong epi-carried}) if the underlying morphism in $\C$
is monic (resp.~a strong epi).  A \emph{directed union of coalgebras}
is a directed colimit of a diagram in $\Coalg H$ whose connecting
morphisms are mono-carried. Furthermore, by a \emph{subcoalgebra} we
mean a subobject in $\Coalg H$ represented by a mono-carried
homomorphism, and a \emph{quotient coalgebra} is represented by a
strong epi-carried homomorphism $(C,c) \twoheadrightarrow (D,d)$, and
we say that $(D,d)$ is a quotient of $(C,c)$.

\subsection{Non-empty Monos}

Recall that endofunctors on $\Set$ preserve all non-empty monomorphisms (because
they are split monos in $\Set$). We will assume a similar property for functors
on general lfp categories. Recall that an initial object $0$ is called
\emph{strict}, if every morphism $I\to 0$ is an isomorphism.
\begin{definition}
  A monomorphism $m\colon X\to Y$ is called \emph{empty} if its domain $X$ is a strict
  initial object.
\end{definition}
That means if $\C$ has no initial object or a non-strict one, all
monos are non-empty. Among the categories that have a strict initial
object are: sets, posets, graphs, topological spaces, nominal sets and
all Grothendieck toposes. In fact, every extensive category in the sense
of Carboni, Lack and Walters~\cite{clw93} has a strict initial object. Categories of
algebras (over \Set) have a strict initial object if the empty set
carries an algebra.

\begin{lemma}
  Let $T\colon \C \to \C$ be a monad and let $H\colon \C \to \C$ be an
  endofunctor. Then if $H$ preserves non-empty monos so does every
  lifting $H^T\colon \C^T\to\C^T$.
\end{lemma}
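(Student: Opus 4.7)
The plan is to exploit the lifting condition $U \circ H^T = H \circ U$ (where $U\colon \C^T \to \C$ is the forgetful functor), together with the fact that $U$ both preserves and reflects monos. The preservation follows because $U$ is a right adjoint (with left adjoint the free algebra functor), and the reflection follows because $U$ is faithful. Given a non-empty mono $m\colon (X,\xi)\monoto(Y,\eta)$ in $\C^T$, I then want to run the following short chain: apply $U$ to obtain a mono $Um\colon X\monoto Y$ in $\C$; argue this mono is non-empty in $\C$; apply the hypothesis on $H$ to get that $H(Um)$ is a mono in $\C$; rewrite this as $U(H^T m)$ using the lifting square; and conclude that $H^T m$ is a mono in $\C^T$ since $U$ reflects monos.

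The only nontrivial step is the second one: verifying that if $(X,\xi)$ is not a strict initial object in $\C^T$, then $X$ is not a strict initial object in $\C$. I would prove the contrapositive: suppose $X$ is strict initial in $\C$; since $\xi\colon TX \to X$ is a morphism into a strict initial object, it must be an isomorphism, so $TX\cong X$ is also a strict initial object of $\C$. Given any algebra $(Z,\zeta)$, initiality of $X$ in $\C$ yields a unique morphism $!\colon X\to Z$, and the algebra-morphism equation $!\cdot\xi = \zeta\cdot T!$ holds automatically since both sides are morphisms out of the initial object $TX$. Hence $(X,\xi)$ is initial in $\C^T$. To see it is strict, take any $f\colon (Z,\zeta)\to(X,\xi)$ in $\C^T$; then $Uf\colon Z\to X$ is an isomorphism by strictness of $X$ in $\C$, and since $U$ reflects isomorphisms (being conservative for a monadic adjunction), $f$ itself is an iso. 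This contradicts $m$ being non-empty in $\C^T$.

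The main potential obstacle is precisely this contrapositive: one needs to know both that $T$ cannot introduce new non-initial algebra structures on a strict initial carrier and that strictness transfers from $\C$ to $\C^T$ via $U$'s reflection of isomorphisms. Once this is established, the rest of the argument is a purely formal diagram chase using the defining equation of a lifting and the mono-preservation/reflection properties of~$U$, and no further assumptions on $T$ or $H$ are needed beyond the hypothesis of the lemma.
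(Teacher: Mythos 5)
Your proposal is correct and follows essentially the same route as the paper's proof: use that the forgetful functor $U$ preserves and reflects monos, reduce to showing that the underlying mono is non-empty, and establish this by arguing that a strict initial carrier would force the algebra itself to be strict initial in $\C^T$, contradicting non-emptiness of $m$. You merely spell out in more detail the step the paper states tersely (that $TA\cong A$ being initial makes $(A,a)$ strict initial in $\C^T$), and that elaboration is accurate.
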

\begin{proof}
  The right-adjoint $U\colon \C^T\to \C$ preserves and reflects
  monos. Given a non-empty monomorphism $m\colon (A,a)\to (B,b)$ in
  $\C^T$, $Um\colon A\to B$ is monic as well.

  Furthermore, $Um\colon A \to B$ is non-empty; for if it were not,
  i.e.~if $A$ were a strict initial object in $\C$, then by
  $a\colon TA\to A$, $TA\cong A$ is initial in $\C$, and it follows
  that $(A,a)$ is a strict initial object in $\C^T$, a contradiction.

  Thus, $UH^Tm = HUm$ is monic, and therefore so is $H^Tm$.
\end{proof}

\begin{lemma} \label{factorizationsCoalgebra} If $H\colon\C\to \C$ preserves
  non-empty monos, then the (strong epi,mono)-factorizations lift from $\C$ to
  (strong epi-carried,mono-carried)-factorizations in $\Coalg H$.
\end{lemma}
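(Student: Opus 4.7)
The plan is to begin with an arbitrary coalgebra homomorphism $f\colon (C,c) \to (D,d)$ and take the (strong epi, mono)-factorization $f = m \cdot e$ in $\C$ guaranteed by the ambient lfp structure, with $e\colon C \epito I$ and $m\colon I \monoto D$. The task then reduces to equipping $I$ with a coalgebra structure $i\colon I \to HI$ such that both $e$ and $m$ are coalgebra homomorphisms. Once this is achieved the lemma follows immediately, since the forgetful functor $\Coalg H \to \C$ reflects monos and strong epis, so $e$ becomes strong epi-carried and $m$ becomes mono-carried in $\Coalg H$.

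Key observation: the two conditions $i \cdot e = He \cdot c$ and $Hm \cdot i = d \cdot m$ asserting that $e$ and $m$ are homomorphisms are precisely the two triangle conditions for a diagonal fill-in in the square
\[
  \begin{tikzcd}
    C \arrow[->>]{r}{e} \arrow{d}[swap]{He \cdot c} & I \arrow{d}{d \cdot m} \arrow[dashed]{dl}[description]{i} \\
    HI \arrow{r}{Hm} & HD
  \end{tikzcd}
\]
The outer square commutes because $f$ is a homomorphism: $Hm \cdot He \cdot c = Hf \cdot c = d \cdot f = d \cdot m \cdot e$. In the principal case where $m$ is non-empty, the hypothesis on $H$ yields that $Hm$ is a mono, so the strong epi $e$ diagonalizes against $Hm$ to deliver the unique $i$.

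The main obstacle is the corner case in which $m$ is empty, i.e.\ $I$ is a strict initial object; there $H$ need not carry $m$ to a monomorphism, so the diagonal fill-in argument breaks down. The way out is to exploit strictness directly: every morphism into a strict initial object is an isomorphism, so $e\colon C \to I$ is iso, making $C$ initial as well. Defining $i\colon I \to HI$ to be the unique morphism out of the initial object $I$, both required equations hold automatically, as their two sides lie in the hom-sets $\C(C, HI)$ and $\C(I, HD)$, both of which are singletons by initiality of $C$ and $I$ respectively. Combining the two cases gives the desired lifted factorization.
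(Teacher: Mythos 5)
Your proof is correct and follows essentially the same route as the paper's: factor in $\C$, and in the non-degenerate case use the diagonal fill-in for the strong epi $e$ against the mono $Hm$ (which is monic since $H$ preserves non-empty monos) to induce the coalgebra structure on the image. The only cosmetic difference is in the strict-initial corner case, where the paper simply takes $h = h\cdot\id_C$ as the factorization (using $(C,c)$ itself as the image), whereas you transport the structure to $I$ via the unique morphism out of the initial object and verify the two homomorphism equations by initiality of $C\cong I$; both resolutions are valid and differ only by the isomorphism $e$.
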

\begin{proof}
  Given a coalgebra homomorphism $h\colon (C,c) \to (D,d)$ and its factorization in $\C$:
  \[
  \begin{tikzcd}
    C
    \arrow[->>]{r}{e}
    \arrow[shiftarr={yshift=6mm}]{rr}{h}
    & I
    \arrow[>->]{r}{m}
    & D
  \end{tikzcd}
  \]
  If $I$ is a strict initial object, then $e$ is an isomorphism, $h$ is monic and so $h =
  h\cdot \id_C$ is the factorization in $\Coalg H$. Otherwise, the mono $m$ is
  non-empty and thus preserved by $H$. By the diagonal fill-in property for the strong
  epi $e$ and the mono $Hm$, we obtain a unique coalgebra structure on $I$
  making $e$ and $m$ into homomorphisms:
  \[
    \begin{tikzcd}[baseline=(HC.base)]
      C
      \arrow[->>]{r}{e}
      \arrow[shiftarr={yshift=6mm}]{rr}{h}
      \arrow{d}[swap]{c}
      & I
      \arrow[>->]{r}{m}
      \arrow[dashed]{d}[left]{\exists!}
      & D
      \arrow{d}{d}
      \\
      |[alias=HC]|
      HC
      \arrow[->>]{r}{He}
      \arrow[shiftarr={yshift=-5mm}]{rr}[swap]{Hh}
      & HI
      \arrow[>->]{r}{Hm}
      & HD
    \end{tikzcd}
    \qedhere
  \]
\end{proof}

\subsection{The Rational Fixpoint} 
\label{sec:rationalfixpoint}

Let $H\colon \C \to \C$ finitary on the lfp category $\C$. We denote by
$\Coalg_\fp H$ the full subcategory of $\Coalg H$ of coalgebras
with fp carrier, and by $\Coalg_\lfp H$ the full subcategory of
$\Coalg H$ of coalgebras that arise as filtered colimits of coalgebras
with fp carrier~\cite[Corollary III.13]{streamcircuits}. The
coalgebras in $\Coalg_\lfp H$ are called \emph{lfp coalgebras}, and for
$\C = \Set$ those are precisely the locally finite coalgebras
(i.e.~those coalgebras where every element is contained in a finite
subcoalgebra). 

The final lfp coalgebra $(\varrho H, r)$ exists and is the colimit of the
inclusion $\Coalg_\fp H \hookrightarrow \Coalg H$. Moreover, it is a
fixpoint of $H$ (see~\cite[Lemma~3.4]{iterativealgebras}) called the
\emph{rational fixpoint} of $H$. Here are some examples: for the
functor $2 \times (-)^\Sigma$, where $\Sigma$ is some input alphabet,
the rational fixpoint is the set of regular languages over
$\Sigma$; the rational fixpoint of a polynomial set functor associated
to a finitary signature $\Sigma$ is the set of rational
$\Sigma$-trees~\cite{iterativealgebras}, i.e.~finite and infinite
$\Sigma$-trees having, up to isomorphism, finitely many subtrees only~\cite{ginali};
one obtains rational weighted languages for Noetherian semirings
$S$ for a functor on the category of $S$-semimodules~\cite{bms13}, and
rational $\lambda$-trees for a functor on the category of presheaves
on finite sets~\cite{highrecursion} or for a related functor on
nominal sets~\cite{MiliusWissmannRatlambda}.  

If the classes of fp and
fg objects coincide in $\C$, then the rational fixpoint is a
subcoalgebra of the final coalgebra~\cite[Theorem~3.12]{bms13}. This
is the case in the above examples, but not in general,
see~\cite[Example~3.15]{bms13} for a concrete example where the
rational fixpoint does not identify behaviourally equivalent
states. However, even if the classes of fp and fg objects differ, the
rational fixpoint can be a subcoalgebra, e.g.~for every constant
endofunctor.

\subsection{Iterative Algebras} 

One important property of the rational fixpoint $\varrho H$ is that,
besides being the final lfp coalgebra, it is also characterized by a
universal property as an algebra for $H$.

Let $H\colon \C \to \C$ be finitary on the lfp category $\C$ once again. An
$H$-algebra $(A, a\colon HA \to A)$ is called \emph{iterative} if every
\emph{flat equation morphism} $e\colon X \to HX + A$ where $X$ is an fp
object has a unique \emph{solution}, i.e.~there exists a unique
morphism $e^\EquationDagger\colon X \to A$ such that
\[
  \begin{tikzcd}
    X \arrow{rr}{e^\EquationDagger} \arrow{d}[left]{e} && A
    \\
    HX + A
    \arrow{rr}{He^\EquationDagger + A}
    &&
    HA + A
    \arrow{u}[right]{[a, \id_A]}
  \end{tikzcd}
\]
Morphisms of iterative algebras are the usual $H$-algebra
homomorphisms. 

The rational fixpoint $\varrho H$ is characterized as the initial iterative
algebra for $H$~\cite[Theorem~3.3]{iterativealgebras}.

Note that this result is the starting point of the coalgebraic
approach to Elgot's iterative theories~\cite{elgot} and to the
iteration theories of Bloom and
\'Esik~\cite{be,iterativealgebras,amv_em1,amv_em2}. For a
well-motivated and much more detailed introduction to iterative
algebras as well as examples see~\cite{iterativealgebras}.

\takeout{
\paragraph{The Rational Fixpoint.} Before we present our main results on the
locally finite fixpoint in the next section let us now first recall some facts
about the rational fixpoint from~\cite{iterativealgebras,streamcircuits}. Assume
that $F\colon \C\to\C$ is a finitary endofunctor on an lfp category. Then one
considers all $F$-coalgebras with a finitely presentable carrier; they are
supposed to capture all systems with a ``finite'' objects of states, and they
form the full subcategory $\Coalgfp F\hookrightarrow \Coalg F$. We further
consider \emph{locally finitely presentable} (\emph{lfp}, for
short) $F$-coalgebras. On $\Set$, they capture precisely
locally finite coalgebras$\mathrlap{\text{,}}$\footnote{leading to the general notion \emph{lfp}
coalgebra, not to be confused with lfp category.} i.e.~those coalgebras where every element of the
carrier lies in a finite subcoalgebra. We do not recall the general definition
but recall from~\cite[Corollary III.13]{streamcircuits} that lfp coalgebras are
characterized as precisely those $F$-coalgebras arising as filtered colimits of
coalgebras from $\Coalgfp F$. It then follows that a final
lfp coalgebra $r\colon \varrho F \to F(\varrho F)$ exists and that it is constructed
as the colimit of the filtered diagram $\Coalgfp F \hookrightarrow \Coalg F$ of
\emph{all} fp-carried $F$-coalgebras. Furthermore, one can prove that $\varrho
F$ is a fixpoint of $F$ (see~\cite[Lemma~3.4]{iterativealgebras}). The ensuing
$F$-algebra $r^{-1}\colon F(\varrho F) \to \varrho F$ also has a universal property,
too. This algebra is the \emph{initial iterative} $F$-algebra, where an
$F$-algebra $a\colon FA \to A$ is called \emph{iterative} if every \emph{flat
equation morphism} $e\colon X \to FX + A$ where $X$ is an fp object has a unique
\emph{solution}, i.e.~given $e$ there exists a unique morphism $e^\EquationDagger\colon X \to
A$ such that $e^\EquationDagger = [a,A] \o (Fe^\EquationDagger + A) \o e$. This universal
property of $\varrho F$ leads to more powerful finitary corecursion schemes,
solutions theorems and specification principles; this has been explored
e.g.~in~\cite{iterativealgebras,bmr12,mbmr13}. It has also been the starting
point of the coalgebraic approach to Elgot's iterative theories~\cite{elgot} and
also the iteration theories of Bloom and \'Esik~\cite{be}
(see~\cite{iterativealgebras,amv_em1,amv_em2}).

\begin{example}
Prominent examples of the rational fixpoint are:
\begin{enumerate}
\item For a signature functor $H_\Sigma$ on \Set, we get the rational
$\Sigma$-trees as $\varrho H_\Sigma$, i.e.~those $\Sigma$-trees with only
finitely many subtrees (up to isomorphism).
\item Rational weighted languages for Noetherian semirings.
\item Rational $\lambda$-trees. two times: in Nom and $\Set^{\mathcal F}$
\end{enumerate}
\end{example}

In all the above examples the rational fixpoint appears as a subcoalgebra of the
final coalgebra, that means that it collects precisely all behaviours of the
coalgebras from $\Coalgfp F$ modulo behavioural equivalence. It has been proved
that this happens whenever the classes of fp and fg objects coincide in $\C$
(see~\cite[Theorem~3.12]{bms3}). However, in arbitrary lfp categories (or
finitary varieties) this is sometimes not true (e.g.~in the categories of
groups, monoids, Heyting algebras or finitary monads on $\Set$), and most often
this seems to be unknown, e.g.~for idempotent semirings. Moreover, in those
cases where it is known the proofs tend to be rather non-trivial making use of
deep results in algebra or somewhat involved combinatorial arguments.\smnote{add
citations here?} And if the two classes of objects do not coincide it may happen
that the rational fixpoint is \emph{not} a subcoalgebra of the final coalgebra,
i.e.~there exist behaviourally equivalent states that are not identified in
$\varrho F$ (see~\cite[Example~3.15]{bms13} for a concrete example). 
} 

\section{The Locally Finite Fixpoint}
\label{sec:locfp}

The locally finite fixpoint can be characterized similarly to the
rational fixpoint, but with respect to coalgebras with finitely
generated (not finitely presentable) carrier. We show that the
locally finite fixpoint always exists, and is a subcoalgebra of the
final coalgebra, i.e.~identifies all behaviourally equivalent
states. As a consequence, the locally finite fixpoint provides a
fully abstract domain of finitely generated behaviour.

\label{sec:lff}
\begin{assumption} \label{basicassumption}
    Throughout the rest of the paper we assume that $\C$ is an lfp category and
    that $H\colon \C→\C$ is finitary and preserves non-empty monos.
\end{assumption}
\noindent
Recall that the last assumption is met by every lifted functor $H^T\colon \Set^T\to\Set^T$ on a
finitary variety $\Set^T$.

As for the rational fixpoint, we denote the full subcategory of
$\Coalg H$ comprising all coalgebras with finitely generated
carrier by $\Coalgfg H$ and have the following notion of locally finitely
generated coalgebra.


\begin{definition}
    \label{lfgcoalgebra}
    A coalgebra $X\xrightarrow{x} HX$ is called \emph{locally finitely generated
      (lfg)} if for all $f\colon S\rightarrow X$ with $S$ finitely generated, there
    exist a coalgebra $p\colon P \to HP$ in $\Coalgfg H$, a homomorphism $h\colon
    (P,p) \rightarrow (X,x)$ and some $f'\colon S\rightarrow P$ such that $h \o f' =
    f$:
    \[
        \begin{tikzcd}
            S \arrow[->]{r}{f} \arrow[->]{dr}[below left]{f'} & X \arrow{r}{x}& HX \\
            {} & P\arrow[->]{u}[right]{h}
            \arrow{r}{p}
            & HP \arrow{u}[swap]{Hp}
        \end{tikzcd}
    \]
    $\Coalglfg H\subseteq \Coalg H$ denotes the full subcategory of lfg coalgebras.
\end{definition}
\noindent
Equivalently, one can characterize lfg coalgebras in terms of subobjects and subcoalgebras,
making it a generalization of \emph{local finiteness} in \Set, i.e.~the property of a coalgebra that every element is contained in a finite subcoalgebra. 
\begin{lemma}
    $X\xrightarrow{x} HX$ is an lfg coalgebra iff for all 
    fg subobjects $S\smash{\,\xrightarrowtail{f}\,} X$, there exist
    a subcoalgebra $h\colon (P,p) \rightarrowtail (X,x)$ and a mono $f'\colon
    S\rightarrowtail P$ with~$h\cdot f' = f$,
    i.e.~$S$ is a subobject of $P$. 
\end{lemma}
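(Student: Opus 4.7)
The plan is to prove each direction separately, relying on the lifted (strong epi, mono)-factorization system from \autoref{factorizationsCoalgebra} together with closure of finitely generated objects under strong quotients. Throughout I read the equivalent condition as requiring the subcoalgebra $(P,p)$ to be fg-carried, which is the reading compatible with the definition of lfg (otherwise the right-hand side would be trivially satisfied by $(X,x)$ itself).

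For the forward direction, suppose $(X,x)$ is lfg and let $f\colon S \rightarrowtail X$ be an fg subobject. Viewing $f$ merely as a morphism from an fg object, the lfg property yields some $(P_0,p_0)\in\Coalgfg H$, a homomorphism $h_0\colon(P_0,p_0)\to(X,x)$, and $f_0\colon S\to P_0$ with $h_0\cdot f_0=f$. The obstruction is that $h_0$ need not be mono-carried, so I would apply \autoref{factorizationsCoalgebra} to factor $h_0 = h\cdot e$ in $\Coalg H$ as a strong epi-carried $e\colon(P_0,p_0)\twoheadrightarrow(P,p)$ followed by a mono-carried $h\colon(P,p)\rightarrowtail(X,x)$. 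Since fg objects are closed under strong quotients, $P$ is fg, so $(P,p)$ is an fg subcoalgebra of $(X,x)$. Setting $f' := e\cdot f_0$, we get $h\cdot f' = h_0\cdot f_0 = f$; and because $h$ is mono and $h\cdot f' = f$ is mono, the morphism $f'$ must itself be mono.

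For the backward direction, let $f\colon S\to X$ be an arbitrary morphism with $S$ fg. I would first take its (strong epi, mono)-factorization $f = m\cdot e$ in $\C$ with $e\colon S\twoheadrightarrow\Im(f)$ and $m\colon\Im(f)\rightarrowtail X$. By closure of fg objects under strong quotients, $\Im(f)$ is fg, so $m$ is an fg subobject of $X$. Applying the hypothesis to $m$ produces an fg subcoalgebra $h\colon(P,p)\rightarrowtail(X,x)$ with a mono $g\colon\Im(f)\rightarrowtail P$ satisfying $h\cdot g = m$. Then $f' := g\cdot e\colon S\to P$ satisfies $h\cdot f' = h\cdot g\cdot e = m\cdot e = f$, which verifies the defining property of an lfg coalgebra.

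The only delicate points are the mono-cancellation in the forward direction (mono composed with $f'$ being mono forces $f'$ mono) and consistent use of the fact that fg objects are closed under strong quotients but \emph{not} finitely presentable ones; no other obstacles arise, since the existence of (strong epi, mono)-factorizations in $\C$ and their lifting to $\Coalg H$ have both already been established.
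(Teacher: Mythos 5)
Your proof is correct and follows essentially the same route as the paper's: the forward direction factorizes the induced homomorphism via the lifted (strong epi, mono)-factorization and uses closure of fg objects under strong quotients plus mono-cancellation, and the backward direction factorizes $f$ through its image exactly as the paper does. Your remark that the subcoalgebra must be read as fg-carried also matches the paper's intent, as its proof explicitly establishes that $\Im(h)$ is fg.
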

\begin{proof}
  ($\Rightarrow$) Given a mono $f\colon S\rightarrowtail X$, consider
  the induced factor $f'\colon S \to P$ and factorize the induced
  homomorphism $h\colon (P,p) \to (X,x)$ into a strong epi-carried
  homomorphism $e$ followed by a mono-carried one $m$. Then $\Im(h)$
  is fg since fg objects are closed under strong quotients, and $e \cdot
  f'\colon S \to \Im(h)$ is the desired factor, which is monic since $f =
  h \cdot f' = m\cdot (e \cdot f')$ is so. 

  ($\Leftarrow$) Factor
  $f\colon S\to X$ into a strong epi $e\colon S \to\Im(f)$ and a mono
  $g\colon \Im(f) \rightarrowtail X$. By hypothesis we obtain a
  subcoalgebra $h\colon (P,p) \rightarrowtail (X,x)$ and
  $g'\colon \Im(f) \rightarrowtail P$ with $h \cdot g' = g$. Then
  $f' = e \cdot g'$ is the desired factor of~$f$.
\end{proof}

Evidently, every coalgebra with a finitely generated carrier is lfg.
Moreover, we will prove that the lfg coalgebras are precisely the filtered
colimits of coalgebras from $\Coalgfg H$.

\begin{proposition}\label{prop:lfgcolim}
    Every filtered colimit of coalgebras from $\Coalgfg H$ is lfg.
\end{proposition}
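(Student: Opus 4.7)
Let $D\colon \D \to \Coalgfg H$ be a filtered diagram with colimit cocone $(c_i\colon D_i \to (X,x))_{i\in \D}$, where I write $D_i$ both for the coalgebra and its carrier. To show $(X,x)$ is lfg, take any $f\colon S \to X$ with $S$ finitely generated; I must produce $(P,p) \in \Coalgfg H$, a homomorphism $h\colon (P,p) \to (X,x)$ and a factor $f'\colon S \to P$ with $h \cdot f' = f$.

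The first step is a reduction: factor $f = m \cdot e$ with $e\colon S \epito \Im(f)$ strong epi and $m\colon \Im(f) \monoto X$ mono. Since fg objects are closed under strong quotients, $\Im(f)$ is again fg. So it suffices to find the required factorization for the mono $m$; composing the resulting $f'$ for $m$ with $e$ yields the one for $f$. Hence I may assume from now on that $f$ itself is a mono.

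The second step uses \autoref{unionsofimages} applied to $\id_X\colon X \to X$ and the filtered colimit $(c_i\colon D_i \to X)_{i\in \D}$: the image of $\id_X$, which is $X$ itself, is the directed union of the subobjects $m_i\colon \Im(c_i) \monoto X$. Since a directed union is a directed colimit of monos and $S$ is fg, the hom-functor $\C(S,-)$ sends this colimit to a colimit in \Set. Therefore the mono $f\colon S \monoto X$ factors through some $m_i$, i.e.~there exists $f'\colon S \to \Im(c_i)$ with $m_i \cdot f' = f$. Moreover, $\Im(c_i)$ is a strong quotient of the fg object $D_i$ and hence itself fg.

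Finally, \autoref{factorizationsCoalgebra} lifts the (strong epi, mono)-factorization of $c_i$ in $\C$ to $\Coalg H$, yielding a coalgebra structure $p$ on $P := \Im(c_i)$ together with a strong epi-carried homomorphism $D_i \epito (P,p)$ and a mono-carried homomorphism $h := m_i\colon (P,p) \monoto (X,x)$. Since $P$ is fg, $(P,p) \in \Coalgfg H$, and together with $f'\colon S \to P$ from the previous step this provides the desired factorization of $f$, proving that $(X,x)$ is lfg. The only subtle point is the reduction to the case where $f$ is monic, since without it one could not invoke the fg property of $S$ against a directed union; everything else is then a routine application of the two cited lemmas.
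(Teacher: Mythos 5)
Your proof is correct and takes essentially the same route as the paper's: both apply \autoref{unionsofimages} to $\id_X$ to present $X$ as the directed union of the images $\Im(c_i)$, use closure of fg objects under strong quotients, and lift the factorization to $\Coalg H$ to equip $\Im(c_i)$ with a coalgebra structure. One small correction: your opening reduction to monic $f$, and your closing claim that it is essential, rest on a misreading of the fg property --- $\C(S,-)$ preserving directed colimits of monos requires the \emph{connecting morphisms} of the diagram to be monic, not the test morphism out of $S$, so an arbitrary $f\colon S\to X$ already factors through some $m_i$ and the first step can be dropped (it is harmless but superfluous).
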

\begin{proof}
  \begin{enumerate}
  \item Observe first that the statement of \autoref{unionsofimages}
    holds for $\Coalg H$ in lieu of $\C$ (despite the fact that
    $\Coalg H$ is not lfp in general). Indeed, this follows since in
    $\Coalg H$ one works with the lifted (strong-epi carried,
    mono-carried) factorizations and using that the forgetful functor
    $\Coalg H \to \C$ creates colimits.

  \item\label{prop:lfgcolim:2} We prove that every directed union of
    coalgebras from $\Coalgfg H$ is an lfg coalgebra.\smnote{This used
      to be a lemma used only once, namely in what is now~(2) of this
      proof. I strongly vote for not having such a lemma separately.}
  
    Let $D\colon (I,\le) \to \Coalg H$ be a diagram of coalgebras from
    $\Coalgfg H$ and of mono-carried coalgebra homomorphisms, where
    $(I,\le)$ is a directed poset. For each $i \in I$ denote
    $Di = (D_i, d_i)$ and let $c_i\colon (D_i,d_i) \to (A,a)$ denote
    the colimit cocone in $\Coalg H$. In order to verify the condition
    in \autoref{lfgcoalgebra}, let $f\colon S \to A$ be a morphism in
    $\C$ where $S$ is fg. Recall that colimits in $\Coalg H$ are
    created by the forgetful functor to $\C$. Hence, the object $A$ is
    a directed colimit of the objects $D_i$ in $\C$, and since $S$ is
    an fg object in $\C$ we obtain the desired factorization:
    \[
      \begin{tikzcd}[ampersand replacement=\&,anchor=base]
        S \arrow{rr}{f}
        \arrow{dr}[below left]{f'}
        \& \&  A
        \\
        \& D_i \arrow{ur}[below right]{c_i}
      \end{tikzcd}
    \]
    
  \item Now let $c_i\colon (X_i,x_i) \to (X,x)$ be a colimit cocone of
    a filtered diagram with $(X_i,x_i)$ in $\Coalgfg H$.  Take the
    (strong epi,mono)-factorizations
    \[
        c_i = (\!\!
        \begin{tikzcd}
        X_i \arrow[->>]{r}{e_i} &
        T_i \arrow[>->]{r}{m_i} &
        X
        \end{tikzcd}
        \!\!)
    \]
    to obtain the subcoalgebras $(T_i,t_i)$ of $(X,x)$. By
    \autoref{unionsofimages}, $f=\id_X\colon X\to X$ is the directed
    union of the $m_i$, and therefore $\Im(f) = X$ is the directed
    colimit of the diagram formed by the $T_i$, both in $\C$ and in
    $\Coalg H$. The coalgebras $(T_i,t_i)$ are in $\Coalgfg H$ since
    strong quotients of finitely generated objects are finitely
    generated. Hence, according
    to~\ref{prop:lfgcolim:2}, $(X,x)$ is lfg.\qedhere
  \end{enumerate}
\end{proof}
\begin{proposition} \label{lfgdirectedunion}
    Every lfg coalgebra $(X,x)$ is a directed colimit of its subcoalgebras from
    $\Coalgfg H$.
\end{proposition}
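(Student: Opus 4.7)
The plan is to exhibit the canonical diagram $D$ of all fg subcoalgebras of $(X,x)$ (with mono-carried inclusions as morphisms) and to show that $(X,x)$ is its colimit in $\Coalg H$. Well-poweredness of the lfp category $\C$ ensures that $D$ is essentially small, and since the forgetful functor $\Coalg H \to \C$ creates colimits, the colimit of $D$ exists in $\Coalg H$ once $D$ is known to be directed.

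First I would verify directedness. Given any two fg subcoalgebras $m_i\colon (P_i,p_i) \rightarrowtail (X,x)$ for $i = 1, 2$, I would apply the lfg hypothesis to the pairing $[m_1,m_2]\colon P_1 + P_2 \to X$; the source is fg because finite coproducts of fg objects are fg. This yields a factorization of $[m_1,m_2]$ through some homomorphism $h\colon (P,p) \to (X,x)$ with $(P,p)\in \Coalgfg H$. Applying \autoref{factorizationsCoalgebra} to $h$ produces a mono-carried factor $(Q,q) \rightarrowtail (X,x)$ whose carrier is a strong quotient of $P$, hence fg; so $(Q,q) \in D$. Post-composing the evident map $P_i \to Q$ with $(Q,q) \rightarrowtail (X,x)$ recovers the mono $m_i$, forcing $P_i \to Q$ to be monic as well. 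Hence $(Q,q)$ is a common upper bound of $(P_1,p_1)$ and $(P_2,p_2)$ in $D$.

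Next I would show that the canonical morphism $\mu\colon (X',x') \to (X,x)$ from $(X',x') := \colim D$ is an isomorphism. Since $D$ is a directed diagram of monos in $\C$, its colimit is a directed union (cf.~\autoref{unionsofimages}) and so $\mu$ is mono-carried. To produce a retraction in $\C$, I would write $X = \colim_{i \in I} S_i$ as a filtered colimit of fp (hence fg) objects, which is possible since $\C$ is lfp. By lfg plus an image factorization using \autoref{factorizationsCoalgebra}, each colimit injection $c_i\colon S_i \to X$ factors through some coalgebra in $D$, and thus through $\mu$; write $c_i = \mu \cdot c_i'$. Since $\mu$ is monic, the family $(c_i')$ automatically constitutes a cocone on $(S_i)_{i \in I}$, so the universal property of the colimit $X$ yields a unique $\varphi\colon X \to X'$ with $\varphi \cdot c_i = c_i'$ for all $i$. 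Then $\mu \cdot \varphi \cdot c_i = c_i = \id_X \cdot c_i$ for all $i$ forces $\mu \cdot \varphi = \id_X$; being simultaneously split epi and mono, $\mu$ is an isomorphism, so $(X,x) \cong \colim D$ in $\Coalg H$.

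The main obstacle is more organizational than conceptual: one has to juggle factorizations in both $\C$ and $\Coalg H$ and verify that the various factor morphisms cohere as cocones. Both concerns are addressed uniformly by \autoref{factorizationsCoalgebra} together with closure of fg objects under finite coproducts and strong quotients, while coherence of the $c_i'$ is automatic from monic-ness of $\mu$.
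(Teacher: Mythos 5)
Your proof is correct, but it follows a genuinely different and more self-contained route than the paper. The paper's argument is a two-line cofinality argument: it cites the fact that in an lfp category every object is the directed colimit of the diagram of \emph{all} its fg subobjects (Ad\'amek--Rosick\'y, Proof I of Theorem 1.70), and then observes that the subdiagram of fg-carried subcoalgebras is cofinal in it --- this is precisely the subobject reformulation of the lfg property given in the lemma following \autoref{lfgcoalgebra} --- so both diagrams have colimit $X$, and directedness of the subcoalgebra diagram comes for free from cofinality. You instead verify everything by hand: directedness of the poset of fg subcoalgebras via the copairing $[m_1,m_2]\colon P_1+P_2\to X$ (using closure of fg objects under finite coproducts and strong quotients, both used elsewhere in the paper), and surjectivity of the comparison map $\mu\colon \colim D\to X$ by factoring the canonical presentation of $X$ as a filtered colimit of fp objects through $D$, so that $\mu$ is a split mono and hence iso. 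This buys independence from the cited theorem at the cost of length. Two routine details you should still supply: the poset $D$ must be shown nonempty (apply the lfg property to the unique morphism from the initial object, which is fp, and factorize), and in the directedness step the induced maps $P_i\to Q$ must be coalgebra homomorphisms rather than merely monos in $\C$; the latter follows by cancelling the mono-carried homomorphism $n\colon (Q,q)\rightarrowtail (X,x)$, using that $Hn$ is monic for non-empty $n$ and treating a strict initial $Q$ separately, exactly as in \autoref{factorizationsCoalgebra}.
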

\begin{proof}
  Recall~\cite[Proof I of Theorem 1.70]{adamek1994locally} that $X$ is
  the directed colimit of the diagram of all its finitely generated
  subobjects. Now the subdiagram given by all subcoalgebras of $X$ is
  cofinal. Indeed, this follows directly from the fact that $(X,x)$ is
  an lfg coalgebra: for every subobject $S\rightarrowtail X$, $S$ fg,
  we have a subcoalgebra of $(X,x)$ in $\Coalgfg H$ containing $S$.
\end{proof}

\begin{corollary}\label{lfgcharacterization}
The lfg coalgebras are precisely the filtered colimits, or
equivalently directed unions, of coalgebras with fg~carrier.
\end{corollary}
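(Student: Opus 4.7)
The plan is to assemble the corollary directly from the two propositions just proved: \autoref{prop:lfgcolim} gives one inclusion and \autoref{lfgdirectedunion} gives the other, so the argument should be essentially a bookkeeping exercise.

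First I would show that being lfg is equivalent to being a filtered colimit of coalgebras from $\Coalgfg H$. The "if" direction is immediate from \autoref{prop:lfgcolim}. For the "only if" direction, \autoref{lfgdirectedunion} presents any lfg coalgebra $(X,x)$ as the directed colimit of the diagram of its subcoalgebras in $\Coalgfg H$; since every directed colimit is filtered, this is the required presentation.

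Next I would promote "filtered colimit of fg coalgebras" to "directed union of fg coalgebras". The nontrivial direction is left-to-right. Given any filtered colimit of coalgebras from $\Coalgfg H$, \autoref{prop:lfgcolim} makes it lfg, and then \autoref{lfgdirectedunion} exhibits it as the directed colimit of its fg subcoalgebras. Since any connecting morphism between two subobjects of a fixed object is itself monic (by the usual diagonal fill-in / uniqueness of factors through a mono), this directed colimit is a diagram of mono-carried homomorphisms, i.e.\ a directed union in the sense of \autoref{sec:coalgs}. The reverse direction is immediate since a directed union is by definition a particular filtered colimit.

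I do not expect a real obstacle here; the only point one has to watch is the terminological one in the second paragraph above, namely that the comparison morphisms in the diagram of fg subcoalgebras of $(X,x)$ are indeed mono-carried so that the directed colimit genuinely qualifies as a directed union. Once that observation is in place, the corollary follows by combining the two preceding propositions without any further work.
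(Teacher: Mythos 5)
Your proposal is correct and matches the paper's (implicit) argument: the corollary is stated without proof precisely because it is the combination of \autoref{prop:lfgcolim} and \autoref{lfgdirectedunion} that you describe. Your extra care about the connecting morphisms being mono-carried is a valid and worthwhile observation, but it does not change the route.
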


\noindent
As a consequence, a coalgebra is final in $\Coalglfg H$ if there is
a unique morphism from every coalgebra with finitely generated carrier:

\begin{proposition}\label{finalforfg}
  An lfg coalgebra $(L,\ell)$ is final in $\Coalglfg H$ if and only if
  for every coalgebra $(X,x)$ in $\Coalgfg H$ there exists a unique
  coalgebra morphism from $(X,x)$ to $(L,\ell)$.
\end{proposition}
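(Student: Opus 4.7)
The forward direction is immediate: every coalgebra in $\Coalgfg H$ is lfg (it is trivially a filtered colimit of itself, or equivalently, satisfies the condition of \autoref{lfgcoalgebra} with $P = X$), so finality of $(L,\ell)$ in $\Coalglfg H$ restricts to unique morphisms from fg-carried coalgebras.

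For the backward direction, suppose unique homomorphisms $u_X\colon (X,x) \to (L,\ell)$ exist for every $(X,x) \in \Coalgfg H$. Given an arbitrary lfg coalgebra $(Y,y)$, my plan is to exhibit a unique homomorphism $(Y,y) \to (L,\ell)$ by exploiting \autoref{lfgdirectedunion}, which presents $(Y,y)$ as the directed colimit of its fg subcoalgebras $(Y_i,y_i) \rightarrowtail (Y,y)$, with colimit injections $c_i\colon (Y_i,y_i) \to (Y,y)$ in $\Coalg H$.

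For \emph{existence}, I assemble the family $(u_{Y_i})_{i}$ into a cocone over this diagram: whenever $k\colon (Y_i,y_i) \to (Y_j,y_j)$ is a connecting homomorphism, both $u_{Y_j} \cdot k$ and $u_{Y_i}$ are homomorphisms from $(Y_i,y_i) \in \Coalgfg H$ to $(L,\ell)$, hence coincide by the assumed uniqueness. Since the forgetful functor $\Coalg H \to \C$ creates colimits, the universal property of the colimit cocone $(c_i)$ in $\Coalg H$ yields a unique homomorphism $u\colon (Y,y) \to (L,\ell)$ with $u \cdot c_i = u_{Y_i}$ for all $i$.

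For \emph{uniqueness}, suppose $v\colon (Y,y) \to (L,\ell)$ is another homomorphism. For each $i$, both $v \cdot c_i$ and $u_{Y_i}$ are homomorphisms from $(Y_i,y_i) \in \Coalgfg H$ into $(L,\ell)$, so $v \cdot c_i = u_{Y_i} = u \cdot c_i$ by assumption. Since the $c_i$ are jointly epimorphic (being the colimit injections of the diagram presenting $(Y,y)$), we conclude $v = u$. The only slightly delicate point is ensuring that the diagram of fg subcoalgebras is genuinely filtered so that the cocone argument applies; this is guaranteed by \autoref{lfgdirectedunion}, which makes it a directed (hence filtered) diagram.
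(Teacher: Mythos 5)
Your proof is correct and follows essentially the same route as the paper's: both directions rest on the observation that fg-carried coalgebras are lfg, and the converse uses \autoref{lfgdirectedunion} to present an arbitrary lfg coalgebra as a directed colimit of its fg subcoalgebras, assembles the unique morphisms into a cocone via the assumed uniqueness, and concludes uniqueness of the induced morphism from the joint epimorphicity of the colimit injections. No substantive differences.
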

\noindent
The proof is analogous to Milius' proof~\cite[Theorem 3.14]{streamcircuits}:
\begin{proof}
  The direction from left to right is clear, as $\Coalgfg H$ is a full
  subcategory of $\Coalglfg H$. For the converse, let $(S,s)$ be some
  lfg coalgebra. By \autoref{lfgdirectedunion}, it is the directed
  union of all its subcoalgebras with finitely generated carrier. For
  every subcoalgebra ${\inj_p}\colon (P,p) \monoto (S,s)$, there exists
  a unique coalgebra homomorphism $p^\LFFDagger\colon (P,p)\to (L,\ell)$. By
  the uniqueness of $p^\LFFDagger$ it follows that $L$ together with the
  $p^\LFFDagger$ form a cocone. Hence there exists a unique coalgebra
  homomorphism $u\colon (S,s) \to (L,\ell)$ such that
  $u\cdot \inj_p = p^\LFFDagger$ for every subcoalgebra $(P,p)$ of
  $(S,s)$. Moreover, for every coalgebra homomorphism
  $\bar u\colon (S,s)\to (L,\ell)$ the equation
  $\bar u\cdot \inj_p = p^\LFFDagger$ must hold as well, due to the
  uniqueness of $p^\LFFDagger$. Since the colimit injections $\inj_p$ are
  jointly epic, one obtains $\bar u = u$ so that $u$ is the unique
  homomorphism from $(S,s)$ to $(C,c)$.
\end{proof}
Cocompleteness of $\C$ ensures that the final lfg coalgebra always exists:
\begin{theorem}\label{thm:final}
    The category $\Coalglfg H$ has a final object, and the final
    lfg coalgebra is the colimit of the inclusion $\Coalgfg H \hookrightarrow \Coalglfg H$.
\end{theorem}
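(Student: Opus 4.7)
The plan is to take $(L,\ell) \mathrel{:=} \colim(\Coalgfg H \hookrightarrow \Coalg H)$ and verify that this is the final lfg coalgebra. First I would ensure the colimit exists: in an lfp category the class of fg objects is essentially small, since each fg object is a strong quotient of an fp object and the class of fp objects is essentially small; together with the fact that each fg carrier $X$ admits only a set of coalgebra structures $X\to HX$, this makes $\Coalgfg H$ essentially small. Since the forgetful functor $\Coalg H \to \C$ creates colimits, $\Coalg H$ is cocomplete, and so $(L,\ell)$ exists with colimit injections $c_D\colon D \to (L,\ell)$ for each $D \in \Coalgfg H$.

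Next I would verify $(L,\ell) \in \Coalglfg H$ by appealing to \autoref{prop:lfgcolim}. For its hypothesis I need $\Coalgfg H$ filtered: fg objects are closed under finite colimits in an lfp category (finite coproducts and strong quotients preserve fg), and the forgetful functor creates finite colimits, so $\Coalgfg H$ inherits finite colimits from $\Coalg H$ and is therefore filtered.

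For finality, by \autoref{finalforfg} it suffices to show that every $(X,x) \in \Coalgfg H$ admits a unique coalgebra morphism to $(L,\ell)$. Existence is given by $c_{(X,x)}$. For uniqueness, let $f\colon (X,x)\to (L,\ell)$ be any such morphism. Using \autoref{factorizationsCoalgebra} factor $f = m\cdot e$ with $e\colon (X,x)\epito (Y,y)$ strong epi-carried and $m\colon (Y,y)\monoto (L,\ell)$ mono-carried. Closure of fg under strong quotients puts $(Y,y)\in\Coalgfg H$, so $e$ is a morphism of the diagram, and the cocone property yields $c_{(Y,y)}\cdot e = c_{(X,x)}$. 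It thus remains to prove $m = c_{(Y,y)}$: the subcoalgebra inclusion agrees with the colimit injection.

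This last equality is the main obstacle. I expect to establish it along the lines of Milius's proof~\cite[Thm.~3.14]{streamcircuits} of the analogous statement for the rational fixpoint, exploiting that $H$ is finitary (so $HL$ is the filtered colimit of the $HD$) and that by \autoref{lfgdirectedunion} $(L,\ell)$ is the directed union of its fg subcoalgebras. The key step is to ``push'' $m$ further along the filtered diagram, yielding a coalgebra morphism $(Y,y)\to D'$ in $\Coalgfg H$ whose composition with $c_{D'}$ equals $m$; the cocone property then forces $m = c_{(Y,y)}$.
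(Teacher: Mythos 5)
Your setup is sound up to the last step: existence of the colimit (essential smallness of $\Coalgfg H$ plus cocompleteness of $\Coalg H$), filteredness of $\Coalgfg H$, lfg-ness of $(L,\ell)$ via \autoref{prop:lfgcolim}, and the reduction via \autoref{finalforfg} to uniqueness of homomorphisms from fg-carried coalgebras are all fine. The gap is exactly in the step you defer. Milius's argument for the rational fixpoint factors a morphism from an \emph{fp} carrier through a colimit injection of the filtered diagram and then repairs it into a coalgebra homomorphism using finitarity of $H$; that first factorization is what fails here, because $\C(Y,-)$ for $Y$ merely fg preserves only directed colimits of \emph{monos}, whereas the connecting morphisms of $\Coalgfg H\hookrightarrow\Coalg H$ are arbitrary homomorphisms. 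Switching to the directed union of fg subcoalgebras from \autoref{lfgdirectedunion} does let you factor $m = n_S\cdot g$ with $n_S\colon (S,s)\monoto(L,\ell)$ a subcoalgebra inclusion and $g$ a coalgebra homomorphism (cancel $Hn_S$, for $S$ non-empty), but this lands you on the inclusion $n_S$, not on a colimit injection $c_{(S,s)}$ --- and proving $n_S=c_{(S,s)}$ is essentially the uniqueness statement you are trying to establish, so the sketch as written is circular. What is missing is an argument such as: the family $(c_{(S,s)})_S$ is a cocone over the poset of fg subcoalgebras of $L$, hence induces $u\colon L\to L$ with $u\cdot n_S=c_{(S,s)}$; writing each $c_D$ as $n_{\Im(c_D)}\cdot e_D$ with $e_D$ strong epi-carried (hence a morphism of $\Coalgfg H$, since fg objects are closed under strong quotients) yields $u\cdot c_D = c_{\Im(c_D)}\cdot e_D = c_D$ by the cocone property, so $u=\id_L$ by joint epicness of the colimit injections, whence $n_S=c_{(S,s)}$ for every fg subcoalgebra. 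With that inserted your proof closes; without it the central step is absent.

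For comparison, the paper's proof is a short argument of a different kind: by \autoref{lfgcharacterization} every lfg coalgebra is a filtered colimit of coalgebras from $\Coalgfg H$, so the colimit of the inclusion $\Coalgfg H\hookrightarrow\Coalglfg H$ coincides with the (large) colimit of the identity functor on $\Coalglfg H$, and the colimit of the identity functor on a category, when it exists, is its terminal object. This bypasses \autoref{finalforfg} and the mono/epi bookkeeping entirely, although spelling out why the two colimits have the same cocones uses the same factorization facts you would need to complete your route.
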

\begin{proof}
  By \autoref{lfgcharacterization}, the colimit of the inclusion $\Coalgfg H
  \hookrightarrow \Coalglfg H$ is the same as the (large) colimit of the entire
  category $\Coalglfg H$, and the latter is clearly the final object of $\Coalglfg H$.
\end{proof}
\begin{notation}\label{N:dagger}
  We denote the final lfg coalgebra by $\LFF H \xrightarrow{\ell}
  H(\LFF H)$, and for every lfg coalgebra $(X,x)$ we write
  \[
    x^\LFFDagger\colon (X,x) \to (\LFF H, \ell)
  \]
  for the unique coalgebra homomorphism.
\end{notation}
\begin{corollary} \label{cor:rat=lff}
  If in $\C$ the classes of fp- and fg-objects coincide, then the final lfg
  coalgebra coincides with the rational fixpoint, i.e.~we have $\LFF H \cong \rho
  H$. 
\end{corollary}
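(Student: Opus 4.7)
The plan is to exhibit both $\rho H$ and $\LFF H$ as colimits of the same diagram taken inside $\Coalg H$. Recall from Section~\ref{sec:rationalfixpoint} that $\rho H$ is, by definition, the colimit of the inclusion $\Coalgfp H \hookrightarrow \Coalg H$, while Theorem~\ref{thm:final} constructs $\LFF H$ as the colimit of $\Coalgfg H \hookrightarrow \Coalglfg H$. Once both are expressed as colimits over the same indexing category with the same codomain, a coalgebra isomorphism follows immediately from the universal property.

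The first technical step I would take is to re-express $\LFF H$ as a colimit computed directly in $\Coalg H$ rather than in the subcategory $\Coalglfg H$. Since $\Coalglfg H$ is full in $\Coalg H$, this reduces to observing that a colimit of $\Coalgfg H \hookrightarrow \Coalg H$ already lies in $\Coalglfg H$. I would justify this by combining Proposition~\ref{prop:lfgcolim} with Proposition~\ref{lfgdirectedunion}: every lfg coalgebra is a directed union of its fg subcoalgebras, so any cocone under $\Coalgfg H$ extends uniquely to one over all of $\Coalglfg H$, making the final lfg coalgebra a colimit of $\Coalgfg H \hookrightarrow \Coalg H$ as well. By fullness, this gives $\LFF H \cong \colim(\Coalgfg H \hookrightarrow \Coalg H)$ in $\Coalg H$.

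The hypothesis then does the remaining work: if fp- and fg-objects in $\C$ coincide, then the full subcategories $\Coalgfp H$ and $\Coalgfg H$ of $\Coalg H$ are literally identical, as each is cut out by the same condition on the carrier. Hence $\rho H$ and $\LFF H$ arise as colimits of one and the same diagram inside $\Coalg H$, so they are canonically isomorphic as $H$-coalgebras. There is no substantive obstacle here; the proof is essentially bookkeeping, using Proposition~\ref{prop:lfgcolim} to pass between the two ambient categories and then invoking the hypothesis to equate the indexing categories.
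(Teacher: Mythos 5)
Your proposal is correct and follows essentially the same route as the paper, whose entire justification is the one-line remark that the two colimit constructions coincide (the rational fixpoint being $\colim(\Coalgfp H \hookrightarrow \Coalg H)$ and the LFF being $\colim(\Coalgfg H \hookrightarrow \Coalglfg H)$, which agree when fp $=$ fg). The extra care you take in checking that the colimit over $\Coalgfg H$ computed in $\Coalg H$ already lands in the full subcategory $\Coalglfg H$ (via \autoref{prop:lfgcolim}) is exactly the bookkeeping the paper leaves implicit.
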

\noindent
Indeed, the colimit constructions of both coalgebras are the same
(cf.~\autoref{sec:rationalfixpoint}).

\autoref{thm:final} provides a construction of the final lfg coalgebra
collecting precisely the behaviours of the coalgebras with fg carriers. In the
following we shall show that this construction does indeed identify precisely
behaviourally equivalent states. In other words, the final lfg coalgebra is
always a subcoalgebra of the final coalgebra. First we show that since fg
objects are closed under strong quotients -- in contrast to fp objects -- we have
a similar property of lfg coalgebras:
\begin{lemma}\label{lfgquotients}
    Lfg coalgebras are closed under quotients, i.e.~for every strong epi carried
    coalgebra homomorphisms $X \twoheadrightarrow Y$, if $X$ is lfg then so is $Y$.
\end{lemma}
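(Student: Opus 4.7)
The plan is to exploit the characterization of lfg coalgebras as directed unions of fg-carried subcoalgebras provided by \autoref{lfgcharacterization}: write $(X,x)$ as a directed colimit of its fg-carried subcoalgebras, push this family through the quotient $e\colon (X,x) \epito (Y,y)$ via (strong epi-carried, mono-carried)-factorizations, and recognize $(Y,y)$ as a directed union of fg-carried subcoalgebras.

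Concretely, by \autoref{lfgdirectedunion} there is a directed family of subcoalgebras $m_i\colon (P_i, p_i) \monoto (X,x)$ with each $P_i$ fg, whose colimit in $\Coalg H$ is $(X,x)$. For each $i$, factorize the composite homomorphism $e \cdot m_i$ in $\Coalg H$ using \autoref{factorizationsCoalgebra}:
\[
    \begin{tikzcd}
        (P_i, p_i) \arrow[->>]{r}{q_i} & (Q_i, \bar q_i) \arrow[>->]{r}{n_i} & (Y,y).
    \end{tikzcd}
\]
Each $Q_i$ is fg because fg objects in $\C$ are closed under strong quotients, so $(Q_i, \bar q_i)$ is an fg-carried subcoalgebra of $(Y,y)$. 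Now apply \autoref{unionsofimages} to the morphism $e\colon X \to Y$ and the filtered diagram of the $P_i$ in $\C$: since $e$ itself is a strong epi we have $\Im(e) = Y$, and the lemma exhibits $Y$ as the directed union of the images $\Im(e \cdot m_i) = Q_i$. Because the forgetful functor $\Coalg H \to \C$ creates colimits, this directed union lifts to a directed colimit of the subcoalgebras $(Q_i, \bar q_i)$ with colimit object $(Y,y)$ in $\Coalg H$. By \autoref{lfgcharacterization}, $(Y,y)$ is lfg.

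The main delicate point—and the reason the argument takes a little more than a naive set-based transfer—is ensuring that the images $Q_i$ genuinely form a directed diagram of subcoalgebras whose colimit recovers $(Y,y)$ as a coalgebra, not merely as a carrier. This is exactly what \autoref{unionsofimages} delivers at the level of $\C$, and combined with creation of colimits by the forgetful functor $\Coalg H \to \C$ the lift to $\Coalg H$ is automatic; the rest is routine bookkeeping with the factorizations.
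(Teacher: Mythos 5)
Your proof is correct and follows essentially the same route as the paper: the paper likewise writes $(X,x)$ as the directed union of its fg-carried subcoalgebras and applies \autoref{unionsofimages} to the quotient map to exhibit $(Y,y)$ as the directed union of the (fg-carried, by closure of fg objects under strong quotients) images. The only cosmetic difference is that the paper packages the lifting of \autoref{unionsofimages} to $\Coalg H$ as a prior observation (in the proof of \autoref{prop:lfgcolim}), whereas you re-derive it on the spot via \autoref{factorizationsCoalgebra} and creation of colimits.
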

\begin{proof}
  Consider some quotient $q\colon (X,x) \twoheadrightarrow (Y,y)$
  where $(X,x)$ is lfg. As $(X,x)$ is the directed colimit of its
  subcoalgebras with fg~carrier, we have that $(Y,y)$ -- the codomain
  of the strong epi-carried $q$ -- is the directed union of the images
  of these subcoalgebras by \autoref{unionsofimages} applied to
  $f = q$. These images are coalgebras with a finitely generated
  carrier since fg object are closed under strong quotients, whence
  $(Y,y)$ is lfg as desired.
\end{proof}
  
\noindent
The failure of the corresponding closure property for lfp coalgebras
is the reason why the rational fixpoint is not necessarily a
subcoalgebra of the final coalgebra. In particular the rational
fixpoint given in \cite[Example 3.15]{bms13} is an lfp coalgebra
having a quotient which is not lfp. However, for the final lfg
coalgebra we have the following result.
\begin{theorem} \label{lffSubcoalg}
  The final lfg $H$-coalgebra is a subcoalgebra of the final $H$-coalgebra.
\end{theorem}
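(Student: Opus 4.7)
The plan is to exhibit the unique coalgebra morphism $\ell^{\FinalCoalgDagger}\colon (\LFF H,\ell)\to (\nu H,\tau)$ from the final lfg coalgebra to the final coalgebra and show that it is mono-carried. Since the forgetful functor $\Coalg H \to \C$ reflects monos, mono-carried is exactly what ``subcoalgebra'' requires.

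First I would invoke \autoref{factorizationsCoalgebra} to factor $\ell^{\FinalCoalgDagger}$ inside $\Coalg H$ as a strong epi-carried homomorphism $e\colon (\LFF H,\ell)\twoheadrightarrow (Q,q)$ followed by a mono-carried homomorphism $m\colon (Q,q)\rightarrowtail (\nu H,\tau)$. (We are entitled to do this because $H$ preserves non-empty monos by \autoref{basicassumption}, and the lemma gives the factorization in both the empty- and non-empty-image cases.)

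Next I would observe that $(Q,q)$ is itself an lfg coalgebra: by \autoref{lfgquotients}, lfg coalgebras are closed under quotients, and $e$ exhibits $(Q,q)$ as a quotient of the lfg coalgebra $(\LFF H,\ell)$. Hence by finality of $(\LFF H,\ell)$ in $\Coalglfg H$ there is a unique homomorphism $q^{\LFFDagger}\colon (Q,q)\to (\LFF H,\ell)$. Composing, the map $q^{\LFFDagger}\cdot e\colon (\LFF H,\ell)\to (\LFF H,\ell)$ is an endomorphism of the final lfg coalgebra and must therefore coincide with $\id_{\LFF H}$ by the uniqueness clause in \autoref{N:dagger}. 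Thus $e$ is a split mono and simultaneously a strong epi, which forces $e$ to be an isomorphism. Consequently $\ell^{\FinalCoalgDagger}=m\cdot e$ is mono-carried, exhibiting $(\LFF H,\ell)$ as a subcoalgebra of $(\nu H,\tau)$.

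I do not anticipate a serious obstacle here: the two nontrivial ingredients -- the lifted factorization system and the closure of lfg coalgebras under quotients -- have already been proved as \autoref{factorizationsCoalgebra} and \autoref{lfgquotients}, and the rest is a standard ``quotient of an initial/final object by a split endomap'' argument. The only subtle point to double-check is that the factorization really lives in $\Coalg H$ (rather than only in $\C$), so that $(Q,q)$ inherits a coalgebra structure making both $e$ and $m$ into homomorphisms; this is exactly the content of \autoref{factorizationsCoalgebra}, and it is what justifies applying \autoref{lfgquotients} to obtain the lfg structure on $(Q,q)$.
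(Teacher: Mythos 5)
Your proof is correct and follows essentially the same route as the paper's: factor the unique morphism into $\nu H$ via \autoref{factorizationsCoalgebra}, apply \autoref{lfgquotients} to see the image is lfg, and use finality of the LFF to split the epi part and conclude it is an isomorphism. No gaps.
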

\begin{proof}
  Let $(L,\ell)$ be the final lfg coalgebra. Consider the unique
  coalgebra morphism $L \to \nu H$ and take its (strong epi, mono) factorization:
    \[
        \begin{tikzcd}
            (L,\ell) \arrow[->>,yshift=1mm]{r}{e}
                \arrow[loop left,>->,dashed,looseness=8]{}{\ell^\LFFDagger = \id}
            &(I,i) \arrow[>->]{r}{m}
                \arrow[dashed,yshift=-1mm]{l}{i^\LFFDagger}
            &(\nu H,\tau)
        \end{tikzcd},
        \quad\text{with $e$ strong epi in $\C$}.
    \]
    By \autoref{lfgquotients}, $I$ is an lfg coalgebra and so by
    finality of $L$ we have the coalgebra morphism $i^\LFFDagger$ such
    that $\id_L = i^\LFFDagger\cdot e$. It follows that $e$ is monic and
    therefore an iso.
\end{proof}

\noindent
In other words, the final lfg $H$-coalgebra collects precisely the finitely generated
behaviours from the final $H$-coalgebra. We now show that the final
lfg coalgebra is a fixpoint of $H$ which hinges on the following:
\begin{lemma}\label{Hlfg}
    For every lfg coalgebra $C\xrightarrow{c} HC$, the coalgebra
    $HC\xrightarrow{Hc}{HHC}$ is lfg.
\end{lemma}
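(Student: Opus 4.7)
The plan is to verify the defining condition of \autoref{lfgcoalgebra} directly for $(HC, Hc)$, exploiting the characterization of $(C,c)$ as the directed union of its fg-carried subcoalgebras (\autoref{lfgdirectedunion}) together with the finitarity of $H$. Given $f\colon S \to HC$ with $S$ fg, I would first write $(C,c)$ as the directed colimit of its fg-carried subcoalgebras $h_i\colon (P_i, p_i) \monoto (C,c)$; since the forgetful functor $\Coalg H \to \C$ creates colimits and $H$ is finitary, $HC = \colim HP_i$ as a filtered colimit in $\C$ with injections $Hh_i$. As $S$ is fg, $f$ factors through some injection as $f = Hh_i \cdot f'$ with $f'\colon S \to HP_i$.

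With this factorization in hand, I would build an fg coalgebra mediating between $S$ and $(HC, Hc)$, whose carrier is $S + P_i$. This object is fg because in any lfp category fg objects are closed under binary coproducts (directed colimits of monos commute in $\Set$ with the finite product $\C(S,-)\times\C(P_i,-)$). I equip $S+P_i$ with the structure
\[
  g \;:=\; H\inr \cdot [f', p_i]\colon S + P_i \to H(S + P_i),
\]
intuitively making the $S$-summand transition via $f'$ into $HP_i \hookrightarrow H(S+P_i)$ and the $P_i$-summand transition within itself via $p_i$. Then define
\[
  j \;:=\; [f,\; c\cdot h_i] \;=\; Hh_i \cdot [f', p_i]\colon S + P_i \to HC,
\]
where the second expression uses $f = Hh_i \cdot f'$ together with $c\cdot h_i = Hh_i \cdot p_i$ (the latter being the coalgebra-homomorphism property of $h_i$). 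A short calculation yields $Hj \cdot g = HHh_i \cdot Hp_i \cdot [f', p_i] = Hc \cdot j$, so $j$ is a coalgebra morphism $(S+P_i, g) \to (HC, Hc)$, and $j \cdot \inl = f$ supplies the required factorization.

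The main creative step is choosing the carrier $S+P_i$ together with the structure $g$; once this is done, both $g$ and $j$ are essentially forced, and the verification reduces to a routine diagram chase using only functoriality of $H$ and the coproduct equations. I do not anticipate any deeper obstacle, since finitarity of $H$ and the directed-union characterization of lfg coalgebras do all the conceptual work.
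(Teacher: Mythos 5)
Your construction of the mediating coalgebra on $S+P_i$ with structure $H\inr\cdot[f',p_i]$ and the homomorphism $j=Hh_i\cdot[f',p_i]$ is exactly the paper's, and that part of the argument (including the diagram chase) is correct. The gap is in the earlier step where you factor $f\colon S\to HC$ through some colimit injection $Hh_i$. You justify this by saying that $HC=\colim HP_i$ is a filtered colimit and $S$ is fg. But finite generation only gives that $\C(S,-)$ preserves \emph{directed colimits of monos}, not arbitrary filtered colimits --- the latter is what finite presentability would buy, and $S$ is only assumed fg. So you need the colimit injections $Hh_i$ and the connecting morphisms $HP_i\to HP_j$ to be monic. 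Since the standing assumption only says that $H$ preserves \emph{non-empty} monos, this is not automatic: the directed diagram of fg-carried subcoalgebras of $(C,c)$ may contain an empty subcoalgebra, and $H$ applied to the (empty) monos out of it need not be monic, so the image diagram need not be a directed colimit of monos and the factorization of $f$ does not follow.

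The paper closes exactly this gap with a short case distinction: if $C$ is a strict initial object, then $C$ is itself fg and $(C,c)$ is the colimit of the one-object diagram consisting of itself, which any functor preserves; otherwise one restricts to the cofinal subdiagram of \emph{non-empty} subcoalgebras, whose injections and connecting morphisms are non-empty monos and hence are sent to monos by $H$, so that $HC=\colim HP_i$ really is a directed colimit of monos and the fg property of $S$ applies. Inserting this case distinction (or strengthening the hypothesis to ``$H$ preserves all monos'') repairs your proof; everything downstream of the factorization coincides with the paper's argument.
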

\begin{proof}
  Let $(C,c)$ be an lfg coalgebra and consider any morphism
  $f\colon S\to HC$ with $S$ finitely generated.
  By case distinction, one can show that $(C,c)$ is a directed union
  of subcoalgebras $\inj_p\colon (P,p) \rightarrowtail (C,c)$ with
  $(P,p)$ in $\Coalgfg H$ and such that this colimit is preserved by~$H$:
  \begin{itemize}
  \item If $C$ is a strict initial object, then it is an fg object,
    and $(C,c)$ is the directed colimit of the diagram consisting of
    itself only. Hence, the directed colimit is preserved by $H$.

  \item If $C$ is not a strict initial object, then by
    \autoref{lfgdirectedunion} $(C,c)$ is the directed union of all its
    subcoalgebras from $\Coalgfg H$. Since $C$ is not a strict
    initial object, it is also the directed union of all its non-empty
    subcoalgebras (i.e.~the inclusions and connecting morphisms are
    carried by non-empty monos). The latter directed union is
    preserved by $H$.
  \end{itemize}
  Hence,
  $HC \xrightarrow{Hc} HHC$ is a directed colimit with colimit injections
  \[
    H\inj_p\colon (HP,Hp) \rightarrowtail (HC, Hc).
  \]
  Now, since $S$ is an fg object, the morphism
  $f\colon S\to HC$ factorizes through one of these colimit
  injections, i.e.~we have $HP\xrightarrow{Hp}HHP$ with
  $(P,p)\in \Coalgfg H$ and $f'\colon S \to HP$ with $H\inj_p\cdot f'=
  f$. This allows us to construct a coalgebra with fg carrier:
    \[
        S + P
        \xrightarrow{[f',p]}
        HP
        \xrightarrow{H\inr}
        H(S+P)
    \]
    and a coalgebra homomorphism $H\inj_p\cdot[f',p]\colon S+P \to HC$; in
    fact, in the following diagram every part trivially commutes:
    \[
    \begin{tikzcd}
    S
    \arrow{rr}{f}
    \arrow[bend right=5]{drr}[swap]{f'}
    \arrow[bend right=15]{ddrr}[swap]{\inl}
    &&
        HC \arrow{rr}{Hc}
        &&
        HHC
        \\
        &&
        HP \arrow{rr}{Hp}
        \arrow[oldequal]{dr}
        \arrow{u}[right]{H\inj_p}
        &&
        HHP
        \arrow{u}[right]{HH\inj_p}
        \\
        &&
        S+P
        \arrow{r}[pos=0.4]{[f',p]}
        \arrow{u}[left]{[f',p]}
        &
        HP
        \arrow{r}{H\inr}
        &
        H(S+P)
        \arrow{u}[right]{H[f',p]}
        \arrow[shiftarr={xshift=1.5cm}]{uu}[right]{H(H\inj_p\cdot[f',p])}
    \end{tikzcd}
    \]
    This provides the desired factorization of~$f$. 
\end{proof}
\takeout{
\begin{remark}
  Note that the previous result also holds in the case where $H$ is
  only assumed to preserve non-empty monomorphisms. In fact, if $C$ is
  non-initial it can be written as the directed union of its
  non-initial subcoalgebras. So $HC \xrightarrow{Hc} HHC$ is a
  directed union since $H$ is finitary and preserves non-empty monos,
  and the rest of the argument is as before.
  
  In the case where $C \cong 0$ is initial, we see that $HC
  \xrightarrow{Hc} HHC$ is an lfg coalgebra using the following
  diagram (here $u_X\colon 0 \to X$ denotes the unique morphism):
  \[
  \begin{tikzcd}
      S \arrow{r}{f} 
      \arrow[oldequal]{rd}
      & 
      H0 \arrow{rr}{Hu_{H0}} 
      &&
      HH0
      \\
      &
      S \arrow{r}{f}
      \arrow{u}{f}
      &
      H0
      \arrow{r}{Hu_S}
      \arrow[oldequal]{ul}
      &
      HS
      \arrow{u}[right]{Hf}
  \end{tikzcd}
  \]
  Indeed, given any morphism $f\colon S \to H0$ where $S$ is an fg object,
  the lower edge provides a coalgebra with the fg carrier $S$ such that
  $f\colon S \to H0$ is a coalgebra homomorphism; in fact, to see that the
  right-hand part commutes remove $H$ and use initiality of $0$. 
\end{remark}
}
\noindent
Hence with a proof in virtue to Lambek's Lemma \cite[Lemma 2.2]{lambek},
we obtain the desired fixpoint:
\begin{theorem}\label{T:lfglambek}
    The final lfg $H$-coalgebra is a fixpoint of $H$. 
\end{theorem}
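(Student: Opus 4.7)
The plan is to mimic the standard proof of Lambek's Lemma, using the new ingredient supplied by \autoref{Hlfg}. Write $(L,\ell) = (\LFF H, \ell)$ for brevity. By \autoref{Hlfg}, the pair $(HL, H\ell)$ is itself an lfg coalgebra, so finality of $(L,\ell)$ in $\Coalglfg H$ yields a unique coalgebra homomorphism
\[
  (H\ell)^\LFFDagger \colon (HL, H\ell) \to (L, \ell).
\]
I will show that $(H\ell)^\LFFDagger$ is a two-sided inverse of $\ell$, hence $\ell$ is an isomorphism and $L$ is a fixpoint of $H$.

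First I would observe that $\ell$ itself is a coalgebra homomorphism from $(L,\ell)$ to $(HL, H\ell)$: the required square $H\ell \cdot \ell = H\ell \cdot \ell$ commutes trivially. Composing with $(H\ell)^\LFFDagger$ gives a coalgebra endomorphism $(H\ell)^\LFFDagger \cdot \ell \colon (L,\ell) \to (L,\ell)$, and finality of $(L,\ell)$ (applied to itself, which is lfg) forces
\[
  (H\ell)^\LFFDagger \cdot \ell = \id_L.
\]

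For the other composite, I use that $(H\ell)^\LFFDagger$ is a coalgebra homomorphism, i.e.\ $\ell \cdot (H\ell)^\LFFDagger = H(H\ell)^\LFFDagger \cdot H\ell$. Combining this with the identity just established and applying $H$ to it, I get
\[
  \ell \cdot (H\ell)^\LFFDagger = H(H\ell)^\LFFDagger \cdot H\ell = H\bigl((H\ell)^\LFFDagger \cdot \ell\bigr) = H\id_L = \id_{HL}.
\]
Hence $\ell$ is an isomorphism with inverse $(H\ell)^\LFFDagger$, so $(L,\ell)$ is a fixpoint of $H$.

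The only non-routine ingredient is the fact that $(HL, H\ell)$ is again lfg, which is exactly \autoref{Hlfg}; beyond that, the argument is the textbook Lambek calculation and poses no real obstacle. In particular, one does not need $(L,\ell)$ to be final in all of $\Coalg H$ (which it is not, in general); finality within $\Coalglfg H$ is enough precisely because applying $H$ keeps us inside that subcategory.
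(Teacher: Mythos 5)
Your proof is correct and follows essentially the same route as the paper: both invoke \autoref{Hlfg} to see that $(HL,H\ell)$ is lfg, use finality in $\Coalglfg H$ to get the unique homomorphism back to $(L,\ell)$, conclude that its composite with $\ell$ is the identity, and then derive the other composite by applying $H$ and the homomorphism square. The only cosmetic difference is that the paper names the mediating morphism $d$ rather than $(H\ell)^\LFFDagger$.
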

\begin{proof}
  Let $(C,c)$ be a final lfg $H$-coalgebra. Then $(HC, Hc)$ is an lfg
  coalgebra by~\autoref{Hlfg}. Denote by $d\colon (HC, Hc) \to (C,c)$ the
  unique coalgebra homomorphism. Then $d \cdot c$ is a coalgebra
  homomorphism:
  \[
    \begin{tikzcd}
      C 
      \arrow{r}{c}
      \arrow{d}[left]{c}
      &
      HC 
      \arrow{d}{Hc}
      \\
      HC
      \arrow{r}{Hc}
      \arrow{d}[left]{d}
      &
      HHC
      \arrow{d}{Hd}
      \\
      C 
      \arrow{r}{c}
      &
      HC
    \end{tikzcd}
  \]
  Thus, $d \cdot c = \id$ by the finality of $(C,c)$. Therefore
  $c \cdot d = Hd \cdot Hc = H(d \cdot c) = H\id = \id$ using the
  commutativity of the upper square.
\end{proof}
\noindent
In the light of \autoref{T:lfglambek} we will call the final lfg
coalgebra the \emph{locally finite fixpoint} (LFF) of $H$. In
particular, the LFF always exists under \autoref{basicassumption}, and
its finality provides a finitary corecursion/coinduction principle.

\section{Iterative Algebras}
\label{sec:iterative}
Recall from~\cite{iterativealgebras,streamcircuits} that the rational fixpoint of a functor $H$ has a universal property both as a coalgebra and as an algebra for $H$. This situation is completely analogous for the LFF. We already established its universal property as a coalgebra in~\autoref{thm:final}. Now we turn to study the LFF as an algebra for $H$. 
%
\begin{definition}
  A (flat fg-) \emph{equation morphism} $e$ in an object $A$ is a morphism
  \(
  X \to HX + A,
  \)
  where $X$ is a finitely generated object. If $A$ is the carrier of an
  algebra $\alpha\colon HA\to A$, we call the morphism $e^\EquationDagger\colon X \to A$ a
  \emph{solution} of $e$ if the diagram below commutes:
  \begin{equation}\label{eq:sol}
    \begin{tikzcd}[column sep=15mm]
      X \ar[d,"e"']
      \ar[r,"e^\EquationDagger"]
      &
      A
      \\
      HX+A \ar[r,"He^\EquationDagger+A"]
      & HA+A \ar{u}[swap]{[\alpha,A]}
    \end{tikzcd}
  \end{equation}
  An $H$-algebra $A$ is called \emph{\fgiterative} if every equation morphism in $A$
  has a unique solution.
\end{definition}
\noindent
Note that we are overloading the $\LFFDagger$-notation from
\autoref{N:dagger}. This is justified by the fact, established in
\autoref{LFFisIterative} below, that $\LFF H$ is an fg-iterative
algebra whose operation of taking a unique solution of a flat equation
morphism extends the final semantics operation $\LFFDagger$ from
\autoref{N:dagger}, as explained in \autoref{R:extdag}.

\begin{example}[{Milius~\cite[Example 2.5 (iii)]{m_cia}}]
    The final $H$-coalgebra (considered as an algebra for $H$) is \fgiterative. In fact, in this algebra even morphisms $X \to HX + \nu H$ where $X$ is not necessarily an fg object have a unique solution. 
\end{example}

\begin{definition}
  For \fgiterative algebras $A$ and $B$, an equation morphism
  $e\colon X \to HX + A$ and a morphism $h\colon A\to B$ of $\C$
  define an equation morphism $h\bullet e$ in $B$ as
  \[
    \begin{tikzcd}
      X \arrow{r}{e} &
      HX + A \arrow{r}{HX+h} & HX + B.
    \end{tikzcd}
  \]
  We say that $h$ preserves the solution $e^\EquationDagger$ of $e$ if
  \[
    \begin{tikzcd}
      & X \ar[dl,"e^\EquationDagger"']
      \ar[dr,"(h\bullet e)^\EquationDagger"]
      \\
      A \ar[rr,"h"]
      && B
    \end{tikzcd}
  \]
  The morphism $h$ is called \emph{solution preserving} if it
  preserves the solution of every equation morphism $e$.
\end{definition}
\begin{proposition}
  Let $(A,\alpha)$ and $(B,\beta)$ be \fgiterative algebras. Then a
  morphism $h\colon A \to B$ is solution preserving iff it is an
  algebra homomorphism. 
\end{proposition}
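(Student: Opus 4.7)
The plan is to prove both directions, with ($\Rightarrow$) being the substantive one. For ($\Leftarrow$), assuming $h \cdot \alpha = \beta \cdot Hh$, I would unfold the solution equation~\eqref{eq:sol} for $e^\EquationDagger$, push $h$ through $[\alpha, A]$ using the homomorphism law, and rearrange to see that $h \cdot e^\EquationDagger$ satisfies the solution equation for $h \bullet e$ in $B$. Uniqueness of solutions in $B$ then forces $h \cdot e^\EquationDagger = (h \bullet e)^\EquationDagger$, so $h$ preserves solutions.

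For ($\Rightarrow$), the goal is to prove $h \cdot \alpha = \beta \cdot Hh$ as morphisms $HA \to B$. Since $\C$ is lfp, the finitely generated objects form a (strong) generator, so it suffices to verify $h \cdot \alpha \cdot y = \beta \cdot Hh \cdot y$ for every fg object $Y$ and every $y\colon Y \to HA$. The cases in which $Y$ or $A$ is a strict initial object are automatic, as then both sides factor through a strict initial object. Otherwise, $A$ is the directed union of its non-empty fg subobjects $\iota_P\colon P \monoto A$, and since $H$ is finitary and preserves non-empty monos, $HA$ is in turn the directed union of the $HP$ along the monos $H\iota_P$. The fg property of $Y$ then yields a factorization $y = H\iota_P \cdot y'$ for some fg subobject $\iota_P\colon P \monoto A$ and some $y'\colon Y \to HP$.

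The key construction is the following flat equation morphism on the fg object $X = Y + P$:
\[
  e\colon X \to HX + A, \qquad e \cdot \inl = \inl \cdot H\inr \cdot y', \qquad e \cdot \inr = \inr \cdot \iota_P.
\]
A short computation using~\eqref{eq:sol} gives $e^\EquationDagger \cdot \inr = \iota_P$, and hence $e^\EquationDagger \cdot \inl = \alpha \cdot H\iota_P \cdot y' = \alpha \cdot y$. The same computation in $B$ produces $(h \bullet e)^\EquationDagger \cdot \inr = h \cdot \iota_P$ and $(h \bullet e)^\EquationDagger \cdot \inl = \beta \cdot Hh \cdot y$. Precomposing the hypothesis $h \cdot e^\EquationDagger = (h \bullet e)^\EquationDagger$ with $\inl$ then yields the desired $h \cdot \alpha \cdot y = \beta \cdot Hh \cdot y$.

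The main obstacle is designing this equation: because $HA$ is typically not finitely generated, one cannot directly use $y$ as the ``right-hand side'' of a flat equation. The trick is to add an auxiliary fg ``parameter block'' $P$ whose $\inr$-coordinate is forced by the equation to coincide with the inclusion $\iota_P$, which then lets the $\inl$-coordinate on $Y$ recover $\alpha \cdot y$ via the factorization $y = H\iota_P \cdot y'$. This hinges on the interplay of finitarity of $H$ with preservation of non-empty monomorphisms in \autoref{basicassumption}, which guarantees that $HA$ is a directed \emph{union} (not just a filtered colimit) of the $HP$'s, so that $y$ actually factors through some $HP$.
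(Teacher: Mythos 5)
Your proof is correct and is essentially the proof the paper intends: the paper defers to the corresponding statement for ordinary iterative algebras (\cite[Proposition~2.18]{iterativealgebras}), and your argument is exactly that proof transported from fp to fg objects — the trivial ($\Leftarrow$) direction by uniqueness of solutions, and for ($\Rightarrow$) the auxiliary flat equation morphism on $Y+P$ whose $\inr$-component is pinned to the inclusion, so that $e^\EquationDagger\cdot\inl=\alpha\cdot y$ and $(h\bullet e)^\EquationDagger\cdot\inl=\beta\cdot Hh\cdot y$. The only (harmless) deviation is in how you factor $y$ through a finite approximation of $HA$: you use the directed union of non-empty fg subobjects together with preservation of non-empty monos (mirroring \autoref{Hlfg}), whereas one could equally test against fp generators and factor through the canonical filtered diagram of fp objects, which needs only finitarity of $H$.
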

\noindent
The proof is identical to the one for ordinary iterative
algebras~\cite[Proposition~2.18]{iterativealgebras}; we leave it as an
easy exercise for the reader. It follows that \fgiterative algebras
form a full subcategory of the category of all $H$-algebras.

\begin{proposition}\label{LFFisIterative}
    The locally finite fixpoint is \fgiterative.
\end{proposition}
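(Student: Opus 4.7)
My plan is to reduce the existence and uniqueness of solutions to the universal property of $\LFF H$ as the final lfg coalgebra. Given an equation morphism $e\colon X \to HX + \LFF H$ with $X$ fg, I would package $e$ together with the dynamics of $\LFF H$ into a single coalgebra structure on $X + \LFF H$. Exploiting that $\LFF H$ is a fixpoint so that $\ell$ is invertible, define
\[
  \hat e := [\,[H\inl,\, H\inr\cdot \ell]\cdot e,\; H\inr\cdot \ell\,]\colon X + \LFF H \to H(X+\LFF H).
\]
By construction, $\inr\colon (\LFF H,\ell) \to (X+\LFF H,\hat e)$ is a coalgebra homomorphism.

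The crucial step is to show that $(X+\LFF H,\hat e)$ is lfg. By \autoref{lfgdirectedunion}, write $\LFF H$ as the directed union of its fg subcoalgebras $(P_i,p_i)\rightarrowtail(\LFF H,\ell)$. Since $X$ is fg and $HX + (-)$ preserves this directed colimit, $e$ factors through some $HX + P_{i_0}$; restricting to the cofinal subdiagram $i\ge i_0$ yields factors $e_i\colon X\to HX+P_i$ and fg coalgebras
\[
  c_i := [\,[H\inl,\, H\inr\cdot p_i]\cdot e_i,\; H\inr\cdot p_i\,]\colon X+P_i \to H(X+P_i),
\]
where $X+P_i$ is fg as a finite coproduct of fg objects. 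A routine check shows that the connecting morphisms $\id_X + (P_i\to P_j)$ are coalgebra homomorphisms $(X+P_i,c_i) \to (X+P_j,c_j)$ and that the filtered colimit of the $(X+P_i,c_i)$ in $\Coalg H$ is exactly $(X+\LFF H,\hat e)$. Hence \autoref{prop:lfgcolim} gives that $(X+\LFF H,\hat e)$ is lfg.

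Finality of $\LFF H$ in $\Coalglfg H$ (\autoref{thm:final}, \autoref{finalforfg}) then supplies a unique coalgebra homomorphism $\hat e^\LFFDagger\colon (X+\LFF H,\hat e) \to (\LFF H,\ell)$. Since $\inr$ is a coalgebra homomorphism into $\hat e$, the composite $\hat e^\LFFDagger\cdot\inr$ is a coalgebra endomorphism of $\LFF H$ and must equal $\id$. Setting $e^\LFFDagger := \hat e^\LFFDagger\cdot \inl$ and precomposing the equation $\ell\cdot \hat e^\LFFDagger = H\hat e^\LFFDagger\cdot \hat e$ with $\inl$ yields $\ell\cdot e^\LFFDagger = [He^\LFFDagger,\ell]\cdot e$; applying $\ell^{-1}$ gives exactly the required solution equation $e^\LFFDagger = [\ell^{-1},\id]\cdot(He^\LFFDagger + \id)\cdot e$. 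For uniqueness, given any solution $s\colon X\to \LFF H$, the morphism $v := [s,\id_{\LFF H}]\colon X+\LFF H\to \LFF H$ can be verified to be a coalgebra homomorphism from $(X+\LFF H,\hat e)$ to $(\LFF H,\ell)$ -- on the $\inl$-component via the solution equation for $s$, on the $\inr$-component via the defining formula of $\hat e$ -- so $v = \hat e^\LFFDagger$ and thus $s = v\cdot \inl = e^\LFFDagger$.

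The main obstacle is step 2, the lfg-ness of the plug-in coalgebra. This demands some careful lfp-bookkeeping: that $HX+(-)$ preserves the directed colimit $\LFF H = \colim P_i$ (so that $e$ really does factor through some $HX+P_{i_0}$), that $X+P_i$ remains fg in $\C$, and that the structures $c_i$ are compatible so as to assemble into a filtered diagram of fg $H$-coalgebras whose colimit is indeed $(X+\LFF H,\hat e)$.
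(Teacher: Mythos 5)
Your overall architecture is the same as the paper's: package $e$ together with the dynamics of $\LFF H$ into an auxiliary coalgebra, prove that coalgebra is lfg, and then read off existence and uniqueness of solutions from finality in $\Coalglfg H$; your existence and uniqueness computations at the end are correct. However, the step you yourself identify as the main obstacle contains a genuine gap. You write $\LFF H$ as the directed union of its fg subcoalgebras $(P_i,p_i)$ and claim that, since $X$ is fg and $HX+(-)$ preserves this colimit, $e$ factors through some $HX+P_{i_0}$. Finite generation of $X$ only guarantees that $\C(X,-)$ preserves directed colimits \emph{of monos}, and the connecting morphisms of the diagram $(HX+P_i)_i$ are $\id_{HX}$ coproducted with the inclusions $P_i\rightarrowtail P_j$; in a general lfp category a coproduct of monos need not be monic (e.g.\ in commutative rings, where the coproduct is $\otimes_{\Z}$, the coproduct of $\id_{\Z/2}$ with the monomorphism $\Z\rightarrowtail\Q$ is the map $\Z/2\to 0$). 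Indeed the paper only adds ``coproducts of monos are monic'' as an extra hypothesis in the algebraic-trees section, not in \autoref{basicassumption}. Since $X$ is merely fg and not fp, the factorization does not follow, and without the factors $e_i$ your coalgebra structures $c_i$ on $X+P_i$ cannot even be defined.

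The paper's proof is built precisely to avoid this: it presents $HX+\LFF H$ as the filtered colimit of the objects $T+V$, with $T$ ranging over fg subobjects of $HX$ and $V$ over fg-carried subcoalgebras of $\LFF H$, and then passes via \autoref{unionsofimages} to the \emph{images} $\Im(t+v^\LFFDagger)$. These images are fg (strong quotients of the fg objects $T+V$) and form a genuine directed colimit of monos, so an fg object really does factor through one of them; the coalgebra structure on the image is then produced by diagonal fill-in, which is where preservation of non-empty monos by $H$ and the separate treatment of the case that $X$ is initial enter (your proposal omits that case distinction as well, though it is harmless in your setup). To repair your argument while keeping the carrier $X+\LFF H$, you would have to run the same image construction rather than factor $e$ through $HX+P_{i_0}$ directly.
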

\takeout{
\begin{proof}  
  Let $e\colon X \to HX + \LFF H$ be an equation morphism. In the
  following we prove that $e$ has a unique solution in $\LFF H$. If
  $X$ is an initial object, we are done because the unique morphism
  $X \to C$ is the desired unique solution. 

  So suppose that $X$ is non-initial. We first prove that $HX + \LFF H$ is
  the directed colimit of the following diagram of monos:
  \begin{itemize}
  \item The diagram scheme $\mathcal{D}$ is the product category
    containing pairs
    $(T \overset{t}\rightarrowtail HX, v^\LFFDagger\colon (V,v) \monoto (\LFF H, \lff))$
    consisting of an fg subobject of $HX$ and a subcoalgebra of $(\LFF H, \lff)$
    where $V$ is fg. $\mathcal{D}$ is (essentially) a directed poset,
    because both of its product components are essentially small and
    directed posets.

  \item The diagram $D\colon \mathcal{D}\to \C$ is defined by
    \[
      D(t,v) = \Im(t+ v^\LFFDagger\colon T+V \to HX+\LFF H )
    \]
    on objects and by diagonal fill-in on morphisms. This implies that
    $D(t,v)$ is fg, since fg objects are closed under coproducts and
    strong quotients, and that all connecting morphisms are monic.
  \end{itemize}
  That $HX+\LFF H$ is indeed the colimit of $D$ is seen as follows.
  The object $HX$ is the directed colimit of all its fg subobjects
  $t\colon T \monoto HX$ (see~\cite[Proof I of Theorem
  1.70]{adamek1994locally}), and $(\LFF H, \lff)$ is the directed
  colimit of its subcoalgebras in $\Coalgfg H$ by
  \autoref{lfgdirectedunion}. Since colimits commute with coproducts,
  $HX + \LFF H$ is thus a directed colimit with the injection
  $t + v^\LFFDagger\colon T + V \to HX + \LFF H$ in $\C$. By
  \autoref{unionsofimages} applied to $f$ being the identity morphism
  on $HX +\LFF H$, we see that this object is the directed colimit of
  the diagram $D$ of monos.

  Because $X$ is fg, the morphism $e$ factors through one of the
  colimit injections, i.e.~we obtain a mono
  $m\colon W\rightarrowtail HX+\LFF H$, $W$ fg, and a morphism
  $e'\colon W \to X$ such that $m\cdot e' = e$.  We may assume that
  $W$ is not a strict initial object; for otherwise $X$ would be
  initial. Furthermore, choose some $t\colon T\rightarrowtail HX$ and
  $v\colon V\rightarrow HV$ from $\mathcal{D}$ such that $W = D(t,v)$
  as shown in the diagram below:
  \begin{equation}\label{diag:etv}
    \begin{tikzcd}
      & T + V
      \arrow[->>]{d}{[e_T,e_V]}
      \arrow[shiftarr={xshift=14mm}]{dd}{t+v^\LFFDagger}
      \\
      & W
      \arrow[>->]{d}{m}
      \\
      X
      \arrow{r}{e}
      \arrow[dashed]{ur}{e'}
      & HX + \LFF H
    \end{tikzcd}
  \end{equation}
  The intermediate object $W$ carries a coalgebra structure by the
  diagonal fill-in property (using that the mono $m$ is non-empty and
  therefore $Hm$ is monic):
  \[
    \begin{tikzcd}[column sep=0mm]
      & T+V
      \arrow[->>]{dl}[swap]{[e_T,e_V]}
      \arrow{r}{t+v}
      \arrow{dd}{t+v^\LFFDagger}
      &[18mm] HX+HV
      \arrow{dd}[swap]{[He,H\inr\cdot Hv^\LFFDagger]}
      \arrow{dr}{[He',He_V]}
      \\
      W
      \arrow[>->]{dr}[swap]{m}
      &&& HW
      \arrow[>->]{dl}{Hm}
      \\
      &HX + \LFF H
      \arrow{r}{[He,H\inr\cdot \lff]}
      & H(HX+\LFF H)
    \end{tikzcd}
  \]
  Indeed, the left-hand component of the inner square above commutes trivially, and
  its right-hand component commutes because $v^\LFFDagger$ is a
  $H$-coalgebra homomorphism. The two triangles commute by the
  previous diagram~\eqref{diag:etv}. Therefore we obtain a diagonal fill-in
  $w\colon W\to HW$ making $m$ and $e_V$ coalgebra homomorphisms. 
  Since $Hm$ is monic and since $m$ is independent of the choice of $t$ and
  $v$, so is $w$.  We have the following commutative diagram:
  \[
    \begin{tikzcd}[column sep=13mm, row sep=9mm]
      W
      \arrow[>->]{r}{m}
      & HX+\LFF H
      \arrow{r}{He'+\ell}
      \descto{d}{\begin{array}{c}v^\LFFDagger~\text{coalgebra}\\[-1mm]\text{homomorphism}\end{array}}
      \descto[pos=0.7,xshift=5mm]{dr}{\begin{array}{c}
                             \text{finality}\\[-1mm]
                             \text{of~}\LFF H\end{array}}
      & |[xshift=10mm]| HW+H\LFF H
      \arrow[to path={
        ([xshift=-4mm]\tikztostart.south east) |- (\tikztotarget) \tikztonodes
      }]{dd}[]{[Hw^\LFFDagger\!,\,H\LFF H]}
      \\[4mm]
      T+V
      \arrow[->>]{u}{[e_T,e_V]}
      \arrow[->>]{d}[swap]{[e_T,e_V]}
      \arrow{ur}[sloped,above]{t+v^\LFFDagger}
      \arrow{r}{t+v}
      \descto{dr}{\text{Definition~of~}w}
      & HX+HV
      \arrow[bend left=8]{ur}[sloped,above]{He'+Hv^\LFFDagger}
      \arrow{r}{He'+He_V}
      \arrow{d}[]{[He',He_V]}
      & HW+ HW
      \arrow{u}[sloped,above]{HW+Hw^\LFFDagger}
      \arrow{d}[]{[Hw^\LFFDagger,Hw^\LFFDagger]}
      \\
      W
      \arrow{r}{w}
      \arrow{dr}[swap]{w^\LFFDagger}
      & |[yshift=8mm]| HW
      \arrow{r}{H w^\LFFDagger}
      \descto{d}{\begin{array}{c}w^\LFFDagger~\text{coalgebra}\\[-1mm]\text{homomorphism}\end{array}}
      & |[yshift=8mm]| H\LFF H
      \arrow[bend left=10]{dl}{\ell^{-1}}
      \\[-6mm]
      & \LFF H
    \end{tikzcd}
  \]
  Since $[e_T,e_V]$ is an epimorphism, we therefore have
  \[
    w^\LFFDagger
    = \ell^{-1}\cdot [Hw^\LFFDagger \cdot He', \ell ]\cdot m
    = [\ell^{-1}\cdot Hw^\LFFDagger \cdot He', \id_{\LFF H}]\cdot m.
  \]
  It follows that $w^\LFFDagger \cdot e'$ is a solution of $e$ in $(\LFF H, \ell^{-1})$:
  \[
    \begin{tikzcd}[column sep = 14mm,row sep=10mm]
      X
      \arrow{r}{e'}
      \arrow{d}[swap]{e}
      & W
      \arrow{dl}[swap]{m}
      \arrow{r}{w^\LFFDagger}
      & \LFF H
      \\
      HX + \LFF H
      \arrow{r}{He' + \LFF H}
      & HW + \LFF H
      \arrow{r}{Hw^\LFFDagger + \LFF H}
      & H\LFF H + \LFF H
      \arrow{u}[swap]{[\ell^{-1},\LFF H]}
    \end{tikzcd}
  \]
  To verify that this solution is unique, let $s\colon X\to \LFF H$ be any solution
  of $e$, i.e.~we have
  \begin{align}
    s = [\ell^{-1}\cdot Hs, \id_{\LFF H}]\cdot e.
    \label{anyOtherEquationSolution}
  \end{align}
  This defines a coalgebra homomorphism from $(W,w)$ to $\LFF H$:
  \[
    \begin{tikzcd}[column sep = 15mm,row sep= 14mm]
      W
      \arrow{r}{m}
      \arrow{d}[swap]{w}
      \descto[xshift=-5mm]{dr}{\begin{array}{c}
                    \text{Definition} \\[-1mm]
                    \text{of }w
                    \end{array}}
      & HX + \LFF H
      \arrow{r}{[\ell^{-1}\cdot Hs,\LFF H]}
      \arrow{d}[swap]{[He,H\inr\cdot \ell]}
      \arrow[xshift=3mm]{dr}[sloped,above]{[Hs,\ell]}
      \descto[xshift=-13mm]{dr}{\text{\eqref{anyOtherEquationSolution}}}
      &[14mm] \LFF H
      \arrow{d}{\ell}
      \\
      HW
      \arrow{r}{Hm}
      & H(HX+\LFF H)
      \arrow{r}{H[\ell^{-1}\cdot Hs,\LFF H]}
      & H \LFF H
    \end{tikzcd}
  \]
  Hence $[\ell^{-1}\cdot Hs,\id_{\LFF H}]\cdot m = w^\LFFDagger$ and so
  \[
    w^\LFFDagger \cdot e'
    = [\ell^{-1}\cdot Hs,\id_{\LFF H}]\cdot m \cdot e'
    = [\ell^{-1}\cdot Hs,\id_{\LFF H}]\cdot e
    = s,
  \]
  which completes the proof.
\end{proof}}
\begin{proof}
  Let $e\colon X \to HX + \LFF H$ be an equation morphism.
  If $X$ is an initial object, we are done because the unique morphism
  $X \to \LFF H$ is the desired unique solution of $e$. 

  So suppose that $X$ is non-initial. In the following, we first define an lfg
  coalgebra structure $\bar e$ on $HX+\LFF H$, then take the unique coalgebra
  homomorphism $\bar e^\LFFDagger\colon HX+\LFF H \to \LFF H$ into the
  final lfg coalgebra, and obtain the unique solution of $e$ as $\bar e^\LFFDagger\cdot e$.
  \begin{enumerate}
  \item We show that the following coalgebra is lfg:
    \[
      \bar e = (HX + \LFF H \xrightarrow{[He, H\inr \cdot \ell]} H(HX + \LFF H)).
    \]
    Consider an fg object $S$ and a monomorphism
    $f\colon S\rightarrowtail HX+\LFF H$. The carrier $HX + \LFF H$ is the
    directed colimit of the following diagram of monos:
    \begin{itemize}
    \item The diagram scheme $\mathcal{D}$ is the product category
      containing pairs
      $(T \overset{t}\rightarrowtail HX, v^\LFFDagger\colon (V,v) \monoto (\LFF H, \lff))$
      consisting of an fg subobject of $HX$ and a subcoalgebra of $(\LFF H, \lff)$
      where $V$ is fg. $\mathcal{D}$ is (essentially) a directed poset,
      because both of its product components are essentially small and
      directed posets.

    \item The diagram $D\colon \mathcal{D}\to \C$ is defined by
      \[
        D(t,v) = \Im(t+ v^\LFFDagger\colon T+V \to HX+\LFF H )
      \]
      on objects and by diagonal fill-in on morphisms. This implies that
    $D(t,v)$ is fg, since fg objects are closed under coproducts and
    strong quotients, and that all connecting morphisms are monic.
  \end{itemize}
    That $HX+\LFF H$ is indeed the colimit of $D$ is seen as follows.
  The object $HX$ is the directed colimit of all its fg subobjects
  $t\colon T \monoto HX$ (see~\cite[Proof I of Theorem
  1.70]{adamek1994locally}), and $(\LFF H, \lff)$ is the directed
  colimit of its subcoalgebras in $\Coalgfg H$ by
  \autoref{lfgdirectedunion}. Since colimits commute with coproducts,
  $HX + \LFF H$ is thus a directed colimit with the injection
  $t + v^\LFFDagger\colon T + V \to HX + \LFF H$ in $\C$. By
  \autoref{unionsofimages} applied to $f$ being the identity morphism
  on $HX +\LFF H$, we see that this object is the directed colimit of
  the diagram $D$ of monos.
  
  Because $X+S$ is fg, the morphism $[e,f]\colon X+S\to HX+\LFF H$
  factors through one of the colimit injections, i.e.~we obtain a mono
  $m\colon W\rightarrowtail HX+\LFF H$, $W$ fg, and a morphism
  $[e',f']\colon X+S \to W$ such that $m\cdot [e', f'] = [e,f]$. We
  know that $W$ is not a strict initial object; for otherwise
  $e'\colon X\to W$ would imply that $X$ is a strict initial
  object. Furthermore, choose some $t\colon T\rightarrowtail HX$ and
  $v\colon V\rightarrow HV$ from $\mathcal{D}$ such that $W = D(t,v)$
  as shown in the diagram below:
  \begin{equation}\label{diag:estv}
    \begin{tikzcd}
      & T + V \arrow[->>]{d}{[e_T,e_V]}
      \arrow[shiftarr={xshift=14mm}]{dd}{t+v^\LFFDagger}
      \\
      & W \arrow[>->]{d}{m}
      \\
      X+S \arrow{r}{[e,f]} \arrow[dashed]{ur}{[e',f']} & HX + \LFF H
    \end{tikzcd}
  \end{equation}
  Since $T+V$ is fg, so is its strong quotient $W$. The intermediate object $W$
  carries a coalgebra structure by the diagonal fill-in property (using that the
  mono $m$ is non-empty and therefore $Hm$ is monic):
  \[
    \begin{tikzcd}[column sep=0mm]
      & T+V
      \arrow[->>]{dl}[swap]{[e_T,e_V]}
      \arrow{r}{t+v}
      \arrow{dd}{t+v^\LFFDagger}
      &[18mm] HX+HV
      \arrow{dd}[swap]{[He,H\inr\cdot Hv^\LFFDagger]}
      \arrow{dr}{[He',He_V]}
      \\
      W
      \arrow[>->]{dr}[swap]{m}
      &&& HW
      \arrow[>->]{dl}{Hm}
      \\
      &HX + \LFF H
      \arrow{r}{[He,H\inr\cdot \lff]}
      & H(HX+\LFF H)
    \end{tikzcd}
  \]
  Indeed, the left-hand component of the inner square above commutes trivially,
  and its right-hand component commutes because $v^\LFFDagger$ is a $H$-coalgebra
  homomorphism. The two triangles commute by the previous
  diagram~\eqref{diag:estv}. Therefore we obtain a morphism $w\colon W\to HW$
  making $m$ a coalgebra homomorphism from $(W,w)$ to $(HX + \LFF H,
  \bar e)$. Thus $m$ is the desired subcoalgebra through which $f$
  factorizes (see~\eqref{diag:estv}).

\item We take the unique coalgebra homomorphism
  \[
    \bar e^\LFFDagger\colon (HX+\LFF H,\bar e)\longrightarrow (\LFF H,\ell)
  \]
  and put
  \[
    s = (X \xrightarrow{e} HX + \LFF H \xrightarrow{\bar e^\LFFDagger} \LFF H).
  \]
  
  Clearly, $\inr\colon \LFF H \to HX + \LFF H$ is a coalgebra
  homomorphism from $(\LFF H, \lff)$ to $(HX + \LFF H, \bar
  e)$. Therefore, we see that $\bar e^\LFFDagger \cdot \inr = 
  \id_{\LFF H}$.

  We proceed to prove that $s$ is a solution of the equation
  morphism $e$, i.e.~diagram~\eqref{eq:sol} commutes:
  \[
    \begin{tikzcd}[column sep = 0mm]
      X
      \arrow{r}{e}
      \arrow{dd}[swap]{e}
      \arrow[shiftarr={yshift=8mm}]{rrr}{s}
      &[10mm] HX + \LFF H
      \arrow{dr}[outer sep = -.5mm]{\bar e}
      \arrow{dd}[swap]{He+\LFF H}
      \arrow{rr}{\bar e^\LFFDagger}
      &[-12mm]
      {} \descto{dr}{\hspace{2em}\begin{array}{c}\text{coalgebra}\\[-2mm]\text{homomorphism}\end{array}}
      &[8mm]
      \LFF H
      \\[2mm]
      &       &
      H(HX+\LFF H)
      \arrow{r}{H\bar e^\LFFDagger}
      \descto{dr}{\hspace{2em}(*)}
      & H(\LFF H)
      \arrow{u}[swap]{\ell^{-1}}
      \\
      HX + \LFF H
      \arrow[shiftarr={yshift=-8mm}]{rrr}[below]{Hs+\LFF H}
      \arrow{r}[below,yshift={-1mm}]{He+\LFF H}
      &
      H(HX+\LFF H)+\LFF H
      \arrow{ru}[swap,outer sep = -.5mm]{[\id, H\inr\cdot \ell]}
      \arrow{rr}[sloped,below]{H\bar e^\LFFDagger+\LFF H}
      &
      &
      H(\LFF H)+\LFF H
      \arrow{u}[swap]{[\id,\ell]}
      \arrow[shiftarr={xshift=16mm}]{uu}[description,xshift={2mm}]{[\ell^{-1},\LFF H]}
    \end{tikzcd}
  \]
  For the commutativity of the part $(*)$, we consider the coproduct
  components separately. The left-hand component trivially commutes,
  and for the right-hand one use $\bar e^\LFFDagger \cdot \inr =
  \id_{\LFF H}$. Since all other parts clearly commute, so does the desired
  outside of the diagram.

  To verify uniqueness of solutions, suppose that $s'\colon
  X\to \LFF H$ is any solution of $e$, i.e.~we have
  \begin{align}\label{otherEquationSolution}
    s' = [\ell^{-1}\cdot Hs', \id_{\LFF H}]\cdot e.
  \end{align}
  Then $[\ell^{-1}\cdot Hs', \id_{\LFF H}]$ is a coalgebra homomorphism
  from $(HX + \LFF, \bar e)$ to $(\LFF H, \lff)$:
  \[
    \begin{tikzcd}[column sep = 15mm,row sep= 14mm]
      HX + \LFF H
      \arrow{r}{[\ell^{-1}\cdot Hs',\LFF H]}
      \arrow{d}[swap]{[He,H\inr\cdot \ell]}
      \arrow[xshift=3mm]{dr}[sloped,above]{[Hs',\ell]}
      \descto[xshift=-13mm]{dr}{\text{\eqref{otherEquationSolution}}}
      &[14mm] \LFF H
      \arrow{d}{\ell}
      \\
      H(HX+\LFF H)
      \arrow{r}{H[\ell^{-1}\cdot Hs',\LFF H]}
      & H \LFF H
    \end{tikzcd}
  \]
  Hence $[\ell^{-1}\cdot Hs',\id_{\LFF H}] = \bar e^\LFFDagger$ so that we obtain
  \[
    s'
    = [\ell^{-1}\cdot Hs',\id_{\LFF H}]\cdot e
    = \bar e^\LFFDagger \cdot e = s.\tag*{\qedhere}
  \]
\end{enumerate}
\end{proof}
\begin{remark}\label{R:extdag}
  Every coalgebra $e\colon X\to HX$ in $\Coalgfg H$ canonically defines an
  equation morphism $\inl\cdot e\colon X\to HX+ \LFF H$, and its solution in
  $\LFF H$ is just the unique coalgebra homomorphism from $(X,e)$ to
  $(\LFF H, \lff)$. To see this consider the diagram below:
  \[
    \begin{tikzcd}[column sep=15mm]
      X \arrow{d}[swap]{e} \arrow{r}{s}
      &
      \LFF H
      \\
      HX
      \arrow{d}[swap]{\inl}
      \arrow{r}{Hs}
      &
      H(\LFF H)
      \arrow{u}[swap]{\lff}
      \arrow{d}{\inl}
      \\
      HX + \LFF H
      \arrow{r}{Hs + \LFF H}
      &
      H(\LFF H) + \LFF H
      \arrow[shiftarr={xshift=15mm}]{uu}[swap]{[\lff, \id]}
    \end{tikzcd}
  \]
  Since the lower square and the right-hand part trivially commute, we
  see that the upper square commutes iff so does the outside of the
  diagram. This shows that the operation $\EquationDagger$ of the
  fg-iterative algebra $\LFF H$ extends the final semantics operation
  of \autoref{N:dagger} and so justifies our overloading of this notation. 
\end{remark}

\begin{lemma}\label{fgiterativeunique}
  Let $\alpha\colon HA\to A$ be an \fgiterative algebra and
  $e\colon X\to HX$ a coalgebra from $\Coalgfg H$. Then there exists a unique
  coalgebra-to-algebra morphism from $X$ to $A$, i.e.~a unique
  morphism $u_e\colon X\to A$ such that $u_e = \alpha \cdot Hu_e \cdot e$.
  \[
    \begin{tikzcd}
      X \arrow[dashed]{r}{\exists! u_e} \arrow{d}[left]{e}
      & A \\
      HX \arrow{r}{Hu_e}
      \descto{ur}{\circlearrowleft}
      & HA \arrow{u}{\alpha}
    \end{tikzcd}
  \]
\end{lemma}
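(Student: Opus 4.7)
The plan is to reduce the statement directly to fg-iterativity of $(A,\alpha)$ by turning the given coalgebra $e\colon X\to HX$ into an equation morphism in $A$ whose solutions are exactly the coalgebra-to-algebra morphisms we want.

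First I would define the flat fg-equation morphism
\[
  \bar e = (X \xrightarrow{e} HX \xrightarrow{\inl} HX + A),
\]
which is a legitimate equation morphism in $A$ because $X$ is finitely generated by assumption. The key observation is then a straightforward calculation: a morphism $u\colon X\to A$ is a solution of $\bar e$ in the sense of diagram~\eqref{eq:sol} if and only if it is a coalgebra-to-algebra morphism from $(X,e)$ to $(A,\alpha)$. Indeed, unfolding the definition,
\[
  [\alpha,\id_A]\cdot(Hu+A)\cdot \bar e
  = [\alpha,\id_A]\cdot(Hu+A)\cdot\inl\cdot e
  = [\alpha,\id_A]\cdot\inl\cdot Hu\cdot e
  = \alpha\cdot Hu\cdot e,
\]
so $u = [\alpha,\id_A]\cdot(Hu+A)\cdot\bar e$ is equivalent to $u = \alpha\cdot Hu\cdot e$.

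Since $(A,\alpha)$ is fg-iterative and $X$ is fg, the equation morphism $\bar e$ has a unique solution $u_e = \bar e^\EquationDagger\colon X\to A$. By the equivalence just established, $u_e$ is the unique coalgebra-to-algebra morphism from $(X,e)$ to $(A,\alpha)$, which proves the lemma. I do not anticipate any real obstacle here; the statement is essentially a reformulation of fg-iterativity under the embedding of coalgebras in $\Coalgfg H$ into equation morphisms via postcomposition with $\inl$, mirroring the analogous fact for ordinary iterative algebras and the rational fixpoint~\cite{iterativealgebras}.
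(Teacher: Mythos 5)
Your proposal is correct and follows essentially the same route as the paper: both form the equation morphism $\inl\cdot e\colon X\to HX+A$, observe that its solutions are precisely the coalgebra-to-algebra morphisms from $(X,e)$ to $(A,\alpha)$ (your coproduct calculation is the algebraic form of the paper's commutative diagram), and then invoke fg-iterativity. Nothing is missing.
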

\begin{proof}
  Consider the equation morphism $\inl\cdot e\colon X\to HX+A$. Let
  $s\colon X \to A$ be any morphism and consider the diagram below:
  \[
    \begin{tikzcd}[column sep=1.6cm]
      X \arrow{rr}{s} \arrow{d}{e}
      &
      & A
      \\
      HX \arrow{r}{\inl}
      \arrow[ to path = |- (\tikztotarget) \tikztonodes
      ]{drr}[pos=0.75,below]{Hs}
      & HX+A \arrow{r}{Hs + A}
      &
      HA + A \arrow{u}[right]{[\alpha,A]}
      \\
      &{} \descto{u}{\circlearrowleft} &
      HA \arrow{u}[right]{\inl}
      \arrow[shiftarr={xshift=9ex}]{uu}[right]{\alpha}[left]{\circlearrowleft\ }
    \end{tikzcd}
  \]
  Its lower and right-hand parts always commute. The upper square
  expresses that $s$ is a solution of $\inl \cdot e$, and we see that
  this square commutes if and only if the outside of the diagram
  commutes. Thus, solutions of $\inl \cdot e$ are equivalently,
  coalgebra-to-algebra morphisms from $X$ to $A$. Hence, since the
  former exists uniquely so does the latter.
\end{proof}
\begin{theorem}\label{alllfginitial}
  Let $\alpha\colon HA \to A$ be an \fgiterative algebra and
  $e\colon X\to HX$ an lfg coalgebra. Then there exists a unique
  coalgebra-to-algebra morphism from $X$ to $A$.
\end{theorem}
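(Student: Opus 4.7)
The plan is to reduce the statement to the fg case (\autoref{fgiterativeunique}) by expressing the lfg coalgebra as a directed colimit of its fg subcoalgebras and then cohering the individual coalgebra-to-algebra morphisms into one. Concretely, by \autoref{lfgdirectedunion} we write $(X,e)$ as the directed colimit of a diagram $D\colon \mathcal{I}\to \Coalgfg H$ whose objects are the subcoalgebras $m_i\colon (P_i,p_i)\rightarrowtail (X,e)$ with $P_i$ fg, and whose connecting morphisms $m_{ij}\colon (P_i,p_i)\to (P_j,p_j)$ are mono-carried. For each $i\in\mathcal{I}$, \autoref{fgiterativeunique} yields a unique coalgebra-to-algebra morphism $u_i\colon P_i\to A$ satisfying $u_i = \alpha\cdot Hu_i\cdot p_i$.

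Next I would check that the family $(u_i)_{i\in\mathcal{I}}$ is a cocone in $\C$ over the underlying diagram of carriers. For any connecting morphism $m_{ij}$, the composite $u_j\cdot m_{ij}\colon P_i\to A$ is a coalgebra-to-algebra morphism from $(P_i,p_i)$ to $A$: this follows from $u_j = \alpha\cdot Hu_j\cdot p_j$ together with the fact that $m_{ij}$ is a coalgebra homomorphism, so $Hm_{ij}\cdot p_i = p_j\cdot m_{ij}$. By the uniqueness clause of \autoref{fgiterativeunique}, $u_j\cdot m_{ij}=u_i$. Since the forgetful functor $\Coalg H\to \C$ creates colimits, the $m_i\colon P_i\to X$ form a colimit cocone in $\C$, and we obtain a unique $u\colon X\to A$ with $u\cdot m_i = u_i$ for every $i$.

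It remains to verify that $u$ itself is a coalgebra-to-algebra morphism, i.e.\ $u = \alpha\cdot Hu\cdot e$. This is where finitariness of $H$ enters: since $H$ is finitary and preserves the directed colimit, the morphisms $Hm_i\colon HP_i\to HX$ form a colimit cocone. For each $i$ we compute
\[
  (\alpha\cdot Hu\cdot e)\cdot m_i = \alpha\cdot Hu\cdot Hm_i\cdot p_i = \alpha\cdot Hu_i\cdot p_i = u_i = u\cdot m_i,
\]
using that $m_i$ is a coalgebra homomorphism and $u\cdot m_i=u_i$. Since the $m_i$ are jointly epic (as a colimit cocone), we conclude $\alpha\cdot Hu\cdot e = u$.

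Uniqueness of $u$ follows by the same cancellation: if $u'\colon X\to A$ satisfies $u'=\alpha\cdot Hu'\cdot e$, then for each $i$ the composite $u'\cdot m_i$ is a coalgebra-to-algebra morphism $P_i\to A$ (by precomposing the equation for $u'$ with $m_i$ and using $Hm_i\cdot p_i = e\cdot m_i$), hence $u'\cdot m_i = u_i = u\cdot m_i$ by \autoref{fgiterativeunique}, and joint epi-ness of the colimit injections gives $u'=u$. The main obstacle I anticipate is the step showing that $u$ is a coalgebra-to-algebra morphism, which crucially relies on $H$ preserving the filtered colimit describing $(X,e)$; everything else is a routine cocone/uniqueness argument driven by \autoref{fgiterativeunique}.
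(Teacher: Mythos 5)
Your proof is correct and follows essentially the same route as the paper: decompose the lfg coalgebra via \autoref{lfgdirectedunion}, apply \autoref{fgiterativeunique} to each fg subcoalgebra, show the resulting morphisms form a cocone by the uniqueness clause, and then verify the coalgebra-to-algebra equation and its uniqueness by precomposing with the jointly epic colimit injections. One minor remark: your appeal to $H$ preserving the directed colimit (so that the $Hm_i$ form a colimit cocone) is not actually needed anywhere in your computation --- the verification only uses functoriality of $H$ and joint epi-ness of the $m_i$ themselves, which is exactly how the paper argues.
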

\begin{proof}
  By \autoref{lfgdirectedunion}, $e\colon X\to HX$ is the union of the
  diagram $D$ of its subcoalgebras $s\colon S\to HS$ with $S$ finitely
  generated. Denote the corresponding colimit injections by
  ${\inj_s\colon (S,s)\to (X,e)}$. By \autoref{fgiterativeunique},
  each such $s$ induces a unique morphism $u_s\colon S\to A$ with
    \begin{equation}
    u_s = \alpha\cdot Hu_s\cdot s.
    \label{usproperty}
    \end{equation}
    For every coalgebra homomorphism $h\colon (R,r) \to (S,s)$ in $\Coalgfg H$ the
    diagram below commutes:
    \[
        \begin{tikzcd}
            R \arrow{r}{h} \arrow{d}[left]{r}
            & S \arrow{r}{u_s} \arrow{d}[left]{s} & A
            \\
            HR \arrow{r}[below]{Hh}
            & HS \arrow{r}[below]{Hu_s}
            & HA \arrow{u}[right]{\alpha}
        \end{tikzcd}
    \]
    Hence $u_r = u_s\cdot h$. In other words, $A$ together with the
    morphisms $u_s\colon S\to A$ form a cocone on $D$. Thus, we obtain a
    unique morphism $u_e\colon X\to A$ such that $u_e \cdot \inj_s = u_s$
    holds for every $(S,s)$ in $\Coalgfg H$.

    We now prove that $u_e$ is an coalgebra-to-algebra
    morphism. For this we consider the following diagram:
    \[
      \begin{tikzcd}
        S \arrow{r}{\inj_s} \arrow{d}[left]{s}
        \arrow[shiftarr={yshift=4ex}]{rr}{u_s}[below]{\circlearrowleft}
        \descto{dr}{\text{(i)}}
        & X \arrow{d}[left]{e} \arrow{r}{u_e}
        \descto{dr}{\text{(ii)}}
        & A
        \\
        HS \arrow{r}[below]{H\inj_s}
        \arrow[shiftarr={yshift=-4ex}]{rr}[below]{Hu_s}[above]{\circlearrowleft}
        & HX \arrow{r}[below]{Hu_e}
        & HA \arrow{u}[right]{\alpha}
      \end{tikzcd}
    \]
    Indeed, the outside and all inner parts except, perhaps, part~(ii)
    commute. This shows that part~(ii) commutes when precomposed by
    every colimit injection $\inj_s$. Since these colimit injections
    are jointly epic, we have that~(ii) commutes as desired. 

    To see the desired uniqueness assume that $u_e$ is any
    coalgebra-to-algebra morphism, i.e.~part~(ii) of the above diagram
    commutes. Since part~(i) also commutes, we see that $u_e \cdot
    \inj_s$ is a coalgebra-to-algebra morphism from $(S,s)$ to
    $(A,\alpha)$. Thus $u_e \cdot \inj_s = u_s$ by the uniqueness of
    the latter (see \autoref{fgiterativeunique}). 
\end{proof}

\begin{corollary} \label{lffInitialIter}
    The locally finite fixpoint is the initial fg-iterative algebra.
\end{corollary}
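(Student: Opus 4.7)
The plan is to combine the two main results of the section: Proposition \ref{LFFisIterative} and Theorem \ref{alllfginitial}. First, by Proposition \ref{LFFisIterative} the locally finite fixpoint, when regarded as the $H$-algebra $(\LFF H, \ell^{-1})$, is fg-iterative, so it lies in the category we want to show it is initial in. Second, by Theorem \ref{T:lfglambek} the structure morphism $\ell$ is an isomorphism, and $(\LFF H, \ell)$ is (trivially) an lfg coalgebra, so Theorem \ref{alllfginitial} applies to it.

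Concretely, I would fix an arbitrary fg-iterative algebra $(A, \alpha)$ and invoke Theorem \ref{alllfginitial} with the lfg coalgebra $(\LFF H, \ell)$ to obtain a unique coalgebra-to-algebra morphism $u\colon \LFF H \to A$, i.e.\ a unique morphism such that $u = \alpha \cdot Hu \cdot \ell$. Since $\ell$ is an isomorphism, this equation is equivalent to $u \cdot \ell^{-1} = \alpha \cdot Hu$, which is exactly the condition that $u$ is an $H$-algebra homomorphism from $(\LFF H, \ell^{-1})$ to $(A, \alpha)$. Thus coalgebra-to-algebra morphisms $(\LFF H, \ell) \to (A,\alpha)$ coincide with $H$-algebra homomorphisms $(\LFF H, \ell^{-1}) \to (A,\alpha)$, and the unique $u$ provided by the theorem is the required unique algebra homomorphism.

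Finally, since the category of fg-iterative algebras is a full subcategory of the category of $H$-algebras (as noted just before Proposition \ref{LFFisIterative}), uniqueness in the larger category implies uniqueness in the smaller one, completing the proof that $(\LFF H, \ell^{-1})$ is the initial fg-iterative algebra.

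There is essentially no obstacle here: the corollary is a direct packaging of the two preceding results, the only observation needed being the standard back-and-forth between coalgebra-to-algebra morphisms into $A$ and algebra homomorphisms out of a fixpoint, which is immediate because $\ell$ is invertible.
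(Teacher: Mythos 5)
Your proof is correct and is exactly the intended argument: the paper states this as a corollary with no written proof precisely because it follows by combining Proposition~\ref{LFFisIterative} (the LFF is fg-iterative) with Theorem~\ref{alllfginitial} applied to the lfg coalgebra $(\LFF H,\ell)$, translating coalgebra-to-algebra morphisms into algebra homomorphisms via the invertibility of $\ell$ from Theorem~\ref{T:lfglambek}. Nothing is missing.
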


\takeout{ 
In total, we have seen three kinds of systems:
\[
\begin{tikzcd}
X \mathrlap{\text{ fg}}
\ar[d,"e"]
& X
\mathrlap{\text{ fg}}
\ar[d,"e"]
& X
\ar[d,"e\mathrlap{\text{\normalsize\hspace{4mm}lfg}}"]
\\
HX
& HX+\LFF H
& HX
\end{tikzcd}
\]
}

\section{Relation to the Rational Fixpoint}
\label{sec:relrat}
There are examples, where the rational fixpoint is not a subcoalgebra
of the final coalgebra (e.g.~\cite[Example~3.15]{bms13}
and~\cite[Example~2.18]{milius18}). In categories, where the classes of fp~and fg~objects
coincide, the rational fixpoint and the LFF are isomorphic
(see~\autoref{cor:rat=lff}). In this section we will see, under
slightly stronger assumptions, that fg-carried coalgebras are
quotients of fp-carried coalgebras, and in particular the locally
finite fixpoint is the image of the rational fixpoint in the final
coalgebra, i.e.~we have the following picture:
\[
  \varrho F \twoheadrightarrow \LFF F \rightarrowtail \nu F. 
\]

Recall that an object $X$ of $\C$ is called \emph{projective} if for
every strong epi $e\colon A \twoheadrightarrow B$ and every morphism
$f\colon X \to A$ there exists a morphism $f'\colon X \to A$ such that
$e \cdot f' = f$:
\[
  \begin{tikzcd}
    X \arrow[dashed]{r}{\exists f'}
    \arrow{rd}[below left]{\forall f}
    & A
    \arrow[->>]{d}{e}
    \\
    &
    B
  \end{tikzcd}
\]

\begin{assumption}\label{projectiveassumption}
  In addition to our standing \autoref{basicassumption}, we assume
  that in the base category $\C$, every finitely
  presentable object is a strong quotient of a
  finitely presentable projective object and that the
  endofunctor $H$ preserves strong epis.
\end{assumption}
Note that the related condition for arbitrary objects, i.e.~that every
object is the strong quotient of a projective object is phrased as
\emph{having enough
  projectives}~\cite[Definition~4.6.5]{borceux1994handbook}.
\autoref{projectiveassumption}
is relatively strong but still is met in many situations:
\begin{example}\label{E:proj}
\begin{enumerate}
\item In categories in which all (strong) epis are split, every object is
  projective and every endofunctor preserves epis, e.g.~in \Set or the
  category of vector spaces over a fixed field. In such categories fp and fg objects coincide.

\item\label{E:proj:2} In the category $\Funf(\Set)$ of finitary
  endofunctors on sets, every polynomial functors is projective. This
  is easy to see for the polynomial functor $PX = X^n$ associated to
  the signature $\Sigma$ with a single $n$-ary operation
  symbol. Indeed, this follows from the Yoneda Lemma, since
  $P \cong \Set(n,-)$: given a natural transformation
  $q\colon K \twoheadrightarrow L$ with surjective components, a
  natural transformation $f\colon P \to L$ corresponds to an element
  of $Ln$, and we find its inverse image (under $q_n$) in $Kn$. This
  gives us $f'\colon P \to K$ such that $q \cdot f' = f$. If $\Sigma$
  has more symbols, apply Yoneda Lemma to each of them separately and
  use that $P$ is the coproduct of the corresponding hom-functors.

  Furthermore, note that the finitely presentable
  functors are precisely the quotients of polynomial functors
  $H_\Sigma$, where $\Sigma$ is a finite signature~\cite[Corollary 3.31]{amsw19functor}.
\item In the Eilenberg-Moore category $\Set^T$ for a finitary monad
  $T$, strong epis are surjective $T$-algebra homomorphisms, and thus
  preserved by every endofunctor $H^T$ lifting the endofunctor $H$ on
  $\Set$. Moreover, in $\Set^T$, every free algebra $TX$ is projective; this is
  easy to see using the projectivity of $X$ in \Set. Every finitely
  generated object of $\Set^T$ is a strong quotient of some free
  algebra $TX$ with $X$ finite. Eilenberg-Moore algebras for set
  monads are the setting of the generalized powerset construction
  (see~\autoref{sec:powerset}).
  \end{enumerate}
\end{example}

\begin{proposition}\label{lfpquotient}
    Every coalgebra in $\Coalgfg H$ is a strong quotient of a coalgebra with
    finitely presentable carrier.
\end{proposition}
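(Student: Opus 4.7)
The plan is to produce, for any $(C,c)\in\Coalgfg H$, an fp projective object $P$ carrying a coalgebra $\bar c\colon P\to HP$ together with a strong epi-carried homomorphism $e\colon (P,\bar c)\twoheadrightarrow (C,c)$. The extra hypotheses in \autoref{projectiveassumption} are precisely what is needed to make a ``projective cover'' argument lift from $\C$ to $\Coalg H$.

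First I would reduce from fg to fp in the base category: I claim that every fg object $C$ is a strong quotient of some fp object $X_0$. This is a standard lfp fact, but it can be derived from material already at hand. Write $C$ as the colimit of its canonical filtered diagram $\C_\fp/C\to\C$ with injections $u_f\colon X_f\to C$, $X_f$ fp, and take (strong epi, mono)-factorizations $u_f = m_f\cdot e_f$ with $m_f\colon \bar X_f\rightarrowtail C$. Each $\bar X_f$ is fg, since fg objects are closed under strong quotients. By \autoref{unionsofimages} applied to $f=\id_C$, the object $C$ is the directed union of the monos $m_f$. Because $C$ itself is fg, $\id_C$ factors through some $m_f$, forcing $m_f$ to be a split mono and hence an isomorphism. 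Thus $e_f\colon X_f\twoheadrightarrow C$ exhibits $C$ as a strong quotient of the fp object $X_f$, as desired.

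Next I invoke \autoref{projectiveassumption}: this fp object $X_f$ is itself a strong quotient of an fp projective object $P$, say via $q\colon P\twoheadrightarrow X_f$. Composing yields a strong epi $e := e_f\cdot q\colon P\twoheadrightarrow C$ whose domain $P$ is simultaneously fp and projective. To lift $e$ to a coalgebra homomorphism, observe that $H$ preserves strong epis (again by \autoref{projectiveassumption}), so $He\colon HP\twoheadrightarrow HC$ is a strong epi. Projectivity of $P$ then supplies a diagonal fill-in $\bar c\colon P\to HP$ satisfying $He\cdot\bar c = c\cdot e$, which is exactly the condition that $e\colon (P,\bar c)\to (C,c)$ be a coalgebra homomorphism. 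As $P$ is fp and $e$ is strong epi-carried, $(P,\bar c)$ together with $e$ is the desired strong quotient.

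I expect the only genuinely non-routine step to be the first one, the fg-to-fp reduction; everything after it is an essentially formal diagram chase combining the projectivity of $P$ with the preservation of strong epis by $H$ built into \autoref{projectiveassumption}.
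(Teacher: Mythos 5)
Your proof is correct and follows essentially the same route as the paper: reduce the fg carrier to a strong quotient of an fp (in fact projective) object, then use projectivity of that object together with the assumption that $H$ preserves strong epis to transport the coalgebra structure along the composite strong epi; the only difference is that you rederive the standard lfp fact that every fg object is a strong quotient of an fp object, which the paper simply cites as \cite[Proposition~1.69]{adamek1994locally}. One tiny slip in that rederivation: when $\id_C$ factors as $m_f\cdot g$, the mono $m_f$ acquires a section and is therefore a \emph{split epi} (not a split mono); being both monic and split epic it is an isomorphism, which is the conclusion you correctly draw.
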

\begin{proof}
  Take a coalgebra $(X,x)$ with finitely generated carrier. Recall
  that in an lfp category an object is fg if and only if it is a
  strong quotient of some fp
  object~\cite[Proposition~1.69]{adamek1994locally}. Hence $X$ is the
  strong quotient of some fp~object $X'$ via
  $q\colon X' \twoheadrightarrow X$.  By assumption, $X'$ is the
  strong quotient of a projective fp~object $X''$ via
  $q'\colon X'' \to X'$. Since $H$ preserves strong epis, the
  projectivity of $X''$ induces a coalgebra structure $x''$ such that
  $q \cdot q'$ is a homomorphism:
    \[
        \begin{tikzcd}[baseline = (X.base)]
            X'' \arrow[dashed]{r}{x''} \arrow[->>]{d}[left]{q'} &
            HX''\arrow[->>]{d}[right]{Hq'}
            \\
            X' \arrow[->>]{d}[left]{q} &
            HX' \arrow[->>]{d}[right]{Hq}
            \\
            |[alias=X]|
            X \arrow{r}{x} & HX
        \end{tikzcd}
        \qedhere
    \]
\end{proof}

\begin{theorem}
  The locally finite fixpoint $\LFF H$ is the image of the rational
  fixpoint $\varrho H$ in the final coalgebra.
\end{theorem}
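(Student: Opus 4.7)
The plan is to exploit the finality of $\LFF H$ among lfg coalgebras together with the quotient property from \autoref{lfpquotient} to identify $\LFF H$ with the image of $\varrho H$ in $\nu H$.

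First I would observe that $\varrho H$ is itself an lfg coalgebra: it is constructed as the filtered colimit of the inclusion $\Coalgfp H \hookrightarrow \Coalg H$, and since $\Coalgfp H \subseteq \Coalgfg H$, \autoref{prop:lfgcolim} applies. By \autoref{thm:final} there is therefore a unique coalgebra morphism $h\colon \varrho H \to \LFF H$. Postcomposing $h$ with the subcoalgebra inclusion $\LFF H \rightarrowtail \nu H$ from \autoref{lffSubcoalg} yields a coalgebra morphism $\varrho H \to \nu H$ which, by finality of $\nu H$, must be the canonical one. Hence the image of the canonical map $\varrho H \to \nu H$ equals the image of $h$, transported into $\nu H$ via $\LFF H \rightarrowtail \nu H$, and it suffices to show that $h$ itself is a strong epi.

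To this end, take the (strong epi, mono)-factorization $\varrho H \twoheadrightarrow J \rightarrowtail \LFF H$ of $h$ in $\Coalg H$, which exists by \autoref{factorizationsCoalgebra}. I would then show that the mono part is an isomorphism. For any $(X,x) \in \Coalgfg H$, \autoref{lfpquotient} supplies a strong epi-carried homomorphism $q\colon (Y,y) \twoheadrightarrow (X,x)$ with $(Y,y) \in \Coalgfp H$. Since $(Y,y)$ lies in $\Coalgfg H$ and hence is in particular lfg, the uniqueness of coalgebra morphisms into $\LFF H$ forces the two homomorphisms $(Y,y) \to \varrho H \xrightarrow{h} \LFF H$ and $(Y,y) \xrightarrow{q} (X,x) \to \LFF H$ to coincide. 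The first presentation factors through the mono $J \rightarrowtail \LFF H$, so the diagonal fill-in property applied to the strong epi $q$ and that mono produces a homomorphism $(X,x) \to J$ through which the canonical map $(X,x) \to \LFF H$ factors.

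Finally, $\LFF H$ is the colimit of the inclusion $\Coalgfg H \hookrightarrow \Coalg H$ (\autoref{thm:final}), so the family of morphisms $\{(X,x) \to \LFF H\}_{(X,x) \in \Coalgfg H}$ is jointly epimorphic. Because each such morphism factors through $J \rightarrowtail \LFF H$, that mono is an isomorphism. Consequently $h$ is a strong epi, and the (strong epi, mono)-factorization of $\varrho H \to \nu H$ is precisely $\varrho H \twoheadrightarrow \LFF H \rightarrowtail \nu H$, identifying $\LFF H$ with the image as desired. The one step that deserves care is the lifting of the (strong epi, mono)-factorization from $\C$ to $\Coalg H$, but this is already handled by \autoref{factorizationsCoalgebra} under the standing \autoref{basicassumption} that $H$ preserves non-empty monos.
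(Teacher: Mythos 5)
Your proof is correct, and it rests on the same two pillars as the paper's argument: that $\varrho H$ is an lfg coalgebra (\autoref{prop:lfgcolim}) and that every fg-carried coalgebra is a strong quotient of an fp-carried one (\autoref{lfpquotient}), combined with a diagonal fill-in against a mono-carried homomorphism. The organization is genuinely different, however. The paper factorizes the canonical map $\varrho H \to \nu H$ through its image $(B,b)$, shows $B$ is lfg via \autoref{lfgquotients}, and then verifies directly through \autoref{finalforfg} that $B$ enjoys the universal property of the final lfg coalgebra (existence of a morphism from each fg-carried coalgebra by the fill-in, uniqueness because $B \rightarrowtail \nu H$ is monic); it never forms a comparison map into $\LFF H$. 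You instead take $\LFF H$ with its finality and its colimit presentation as given, form the unique morphism $h\colon \varrho H \to \LFF H$, and prove $h$ is a strong epi by showing that every injection $(X,x) \to \LFF H$ factors through the image $J$ of $h$ and invoking joint epimorphicity. Your route trades \autoref{lfgquotients} and \autoref{finalforfg} for \autoref{lffSubcoalg} and \autoref{thm:final}, and makes the comparison morphism $\varrho H \twoheadrightarrow \LFF H$ explicit; the one step you assert rather than spell out is that a mono $J \rightarrowtail \LFF H$ through which all colimit injections factor is an isomorphism — this is standard (the factorizations form a cocone because the mono is left-cancellable, and the induced morphism $\LFF H \to J$ splits it), so it is a compression, not a gap.
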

\begin{proof}
  Consider the factorization
  $(\varrho H,r) \overset{e}{\twoheadrightarrow} (B,b)
  \overset{m}{\rightarrowtail} (\nu H,\tau)$. Since $\varrho H$ is the
  colimit of all fp carried $H$-coalgebras it is an lfg coalgebra
  by~\autoref{prop:lfgcolim} using that fp objects are also fg. Hence,
  by~\autoref{lfgquotients} the coalgebra $B$ is lfg, too. By
  \autoref{finalforfg} it now suffices to show that from every
  $(X,x) \in \Coalgfg H$ there exists a unique coalgebra morphism into
  $(B,b)$. Given $(X,x)$ in $\Coalgfg H$, it is the quotient
  $q\colon (P,p) \twoheadrightarrow (X,x)$ of an fp-carried coalgebra
  by \autoref{lfpquotient}. Hence, we obtain a unique coalgebra
  morphism $p^\LFFDagger\colon (P,p) \to (\varrho H,r)$. By finality of
  $\nu H$, we have $m\cdot e\cdot p^\LFFDagger = x^\FinalCoalgDagger \cdot q$,
  with $x^\FinalCoalgDagger \colon (X,x)\to (\nu H,\tau)$ the unique
  homomorphism. So the diagonal fill-in property induces a
  homomorphism $(X,x) \to (B,b)$. By the finality of $\nu H$ and
  because $m$ is monic, this is the unique homomorphism
  $(X,x) \to (B,b)$.
\end{proof}

\section{Instances of the Locally Finite Fixpoint}
\label{sec:app}
We will now present a number of instances of the LFF. First note, that
all the instances of the rational fixpoint mentioned in previous work
(see e.g.~\cite{iterativealgebras,bms13,streamcircuits}) are also
instances of the locally finite fixpoint, because in all those cases
the classes of fp and fg objects coincide. For example, the class of
regular languages is the rational fixpoint of $2×(-)^\Sigma$ on
\Set. In this section, we will study further instances of the LFF that
are not known to be instances of the rational fixpoint and which -- to
the best of our knowledge -- have not been characterized by a
universal property yet:
\begin{enumerate}
\item Behaviours of finite-state machines with side-effects as considered by
the generalized powerset construction (cf.~Section~\ref{sec:powerset}),
in particular the following:
\begin{enumerate}
\item Deterministic and ordinary context-free languages obtained as the behaviours of deterministic and non-deterministic stack-machines, respectively.
\item Constructively $S$-algebraic formal power series, i.e.~the
\textqt{context-free} subclass of weighted languages with weights from a
semiring $S$, obtained from weighted context-free grammars.
\end{enumerate}
\item The monad of Courcelle's algebraic trees~\cite{courcelle}.
\end{enumerate}

\subsection{Generalized Powerset Construction}
\label{sec:powerset}
The determinization of a non-deterministic automaton using the powerset
construction is an instance of a more general construction, described by
\citet*{sbbr13} based on an observation by \citet{bartelsphd} (see also \citet{Jacobs05abialgebraic}). 
In that \emph{generalized powerset
construction}, an automaton with side-effects is turned into an ordinary
automaton by internalizing the side-effects in the states. The LFF
interacts well
with this construction, because it precisely captures the behaviours of
finite-state automata with side effects. The notion of side-effect is formalized
by a monad, which induces the category, in which the LFF is considered.

\begin{notation}
  Given a monad $(T, \eta^T, \mu^T)$ on $\C$ and an Eilenberg-Moore
  algebra $a\colon TA \to A$ we denote for any morphism
  $f\colon X \to A$ by $f^\sharp\colon TX \to A$ the unique
  $T$-algebra morphism from the free Eilenberg-Moore algebra
  $(TX,\mu_X)$ to $(A,a)$ extending $f$, i.e.~such that
  $f^\sharp \cdot \eta_X = f$.
\end{notation}

\begin{example} \label{exceptionmonad}
  In Sections~\ref{sec:cfl}
  and~\ref{sec:alg} we are going to make use of Moggi's exception
  monad transformer (see e.g.~\cite{cm93}). Let us recall that for a
  fixed object $E$, the finitary functor $(-)+E$ together with the
  unit $\eta_X = \inl\colon X\to X+E$ and multiplication
  $\mu_X = \id_X+[\id_E,\id_E]: X+E+E \to X+E$ forms a finitary monad,
  the \emph{exception monad}. Its algebras are $E$-pointed objects,
  i.e.~objects $X$, together with a morphism $E\to X$, and
  homomorphisms are morphisms preserving the pointing. So the induced
  Eilenberg-Moore category is just the slice category
  $E/\C \cong \C^{(-)+E}$.
    
    Now, given any monad $T$ we obtain a new monad $T(- + E)$ with obvious unit
    and multiplication. An Eilenberg-Moore algebra for $T(-+E)$ consists of an
    Eilenberg-Moore algebra for $T$ and an $E$-pointing, and homomorphisms are
    $T$-algebra homomorphisms preserving the pointing~\cite{combiningeffects}.
    \twnote{equivalently, it's the slice category $TE / \C^T$}
\end{example}

An automaton with side-effects is modelled as an $HT$-coalgebra, where
$T$ is a finitary monad on $\C$ providing the type of side-effect. For
example, for $HX = 2 \times X^\Sigma$, where $\Sigma$ is an input
alphabet, $2 = \{0,1\}$ and $T$ the finite powerset monad on \Set,
$HT$-coalgebras are non-deterministic automata. However, the
coalgebraic semantics using the final $HT$-coalgebra does not yield
the usual language semantics of non-deterministic automata. This is
obtained by turning the $HT$-coalgebra into a coalgebra for a lifting
of $H$ on $\C^T$ via the generalized powerset construction that we now
recall. We work under the following 

\begin{assumption}
  We assume that $\C$ is an lfp category, $T$ a finitary monad on $\C$ and $H$ a
  finitary endofunctor on $\C$ preserving non-empty monos and $H^T\colon \C^T \to
  \C^T$ is a lifting of $H$, i.e.~$H\cdot U = U \cdot H^T$, where $U\colon \C^T \to
  \C$ is the canonical forgetful functor.
\end{assumption}

The generalized powerset construction transforms an $HT$-coalgebra into an $H^T$-coalgebra on $\C^T$:
For a coalgebra $x\colon X\to HTX$, $HTX$ carries an Eilenberg-Moore algebra, and one
uses freeness of the Eilenberg-Moore algebra $TX$ to obtain a canonical
$T$-algebra homomorphism $x^\sharp\colon (TX,\mu_X^T) \to H^T(TX,\mu_X^T)$. The \emph{coalgebraic language semantics} of $(X,x)$ is then given by 
composing the unique coalgebra morphism induced by $x^\sharp$ with $\eta_X$:  
\[
    \begin{tikzcd}
        X
        \arrow{r}{\eta_X}
        \arrow{d}[left]{x}
        &
        TX
        \arrow{r}{x^{\sharp\FinalCoalgDagger}}
        \arrow[dashed]{dl}{x^\sharp}
        &
        \nu H
        \arrow{d}{\tau}
        \\
        HTX
        \arrow{rr}{Hx^{\sharp\FinalCoalgDagger}}
        &&
        H \nu H
    \end{tikzcd}
\]
This construction yields a functor
\[
     T'\colon \Coalg (HT) \to \Coalg H^T
\]
mapping coalgebras $X\xrightarrow{x} HTX$ to $x^\sharp$ and homomorphisms $f$ to
$Tf$ (see e.g.~\cite[Proof of Lemma 3.27]{bms13} for a proof).

Note that since the right adjoint $U$ preserves monos and is faithful, we know
that $H^T$ preserves monos, and since $T$ is finitary, filtered colimits in
$\C^T$ are created by the forgetful functor to $\C$, and we therefore see that
$H^T$ is finitary. Thus, by Theorem~\ref{thm:final}, $\LFF H^T$ exists and is a
subcoalgebra of $\nu H^T$. Furthermore, recall from \cite{plotkinturi97} and
\cite[Corollary~3.4.19]{bartelsphd} that $\nu H^T$ is carried by $\nu H$
equipped with a canonical $T$-algebra structure, see e.g.~\cite[Notation
3.22]{bms13}.\smnote{Shouldn't we recall this explicitly? But then we need to
  recall that liftings are in bijective correspondence with distributive laws.
  TW: We can do it in disguise: The lifting $H^T$ maps the free $(T\nu H,
  \mu_{\nu H})$ to an algebra on $HT\nu H$. So the morphism $H\nu H
  {\to} HT\nu H$ induces a unique $T$-algebra morphism $\lambda\colon TH\nu H\to
  HT\nu H$. In composition with $T\tau$ we have the coalgebra inducing the
  algebra structure on $\nu H$ by finality.}

In the remainder of this section we will assume that $\C = \Set$. It
is our aim to show that the LFF of $H^T$ characterizes precisely the
coalgebraic language semantics of all finite $HT$-coalgebras.  Formally,
the coalgebraic language semantics of all finite $HT$-coalgebras
is collected by forming the colimit 
\[
    (K,k) = \colim \big(
    \Coalgfg HT \xrightarrow{T'} \Coalg H^T \xrightarrow{U} \Coalg H
    \big).
\]
Note that this is a filtered colimit because the category $\Coalgfg H$ is closed
under finite colimits and therefore filtered. 

The coalgebra $K$ is not yet a subcoalgebra of $\nu H$ (that
means, not all behaviourally equivalent states are identified in $K$), but taking its image in $\nu H$ we obtain the LFF:
\begin{proposition} \label{firstLFFImage} The image of the unique
  coalgebra morphism $k^\FinalCoalgDagger\colon K \to \nu H^T$ is precisely the
  locally finite fixpoint of the lifting
  $H^T$.
\end{proposition}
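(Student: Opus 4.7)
The plan is to work entirely in $\Coalg H^T$, where $\LFF H^T$ lives, and then transport results back to $\Coalg H$. Since $T$ is finitary, $U\colon \C^T\to \C$ preserves filtered colimits, and since coalgebra-forgetful functors create colimits, so does $U\colon \Coalg H^T \to \Coalg H$. Hence the colimit defining $(K,k)$ lifts: $K$ carries a canonical $T$-algebra structure making it the filtered colimit of $T'\colon \Coalgfg HT \to \Coalg H^T$ in $\Coalg H^T$. By finality of $\nu H$ in $\Coalg H$, the $H$-coalgebra morphism $k^\FinalCoalgDagger$ coincides with the unique $H^T$-coalgebra morphism $K \to \nu H^T$, so its (strong epi, mono)-factorization $K \twoheadrightarrow I \rightarrowtail \nu H^T$ in $\Coalg H^T$ applies $U$ to the image of $k^\FinalCoalgDagger$ in $\Coalg H$. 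So it suffices to prove $I = \LFF H^T$ as subcoalgebras of $\nu H^T$ in $\Coalg H^T$.

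For $I \subseteq \LFF H^T$, I would first show that for every $(X,x)\in \Coalgfg HT$ the coalgebra $T'(X,x) = (TX, x^\sharp)$ lies in $\Coalgfg H^T$: since $X$ is finite in $\Set$, the free algebra $(TX,\mu_X)$ is finitely generated in $\Set^T$. Thus $K$ is a filtered colimit of coalgebras in $\Coalgfg H^T$, so lfg by~\autoref{prop:lfgcolim}, and its strong-epi quotient $I$ is lfg by~\autoref{lfgquotients}. Finality of $\LFF H^T$ yields a unique $H^T$-coalgebra morphism $I \to \LFF H^T$; composing with $\LFF H^T \rightarrowtail \nu H^T$ must recover the original mono $I\rightarrowtail \nu H^T$ by finality of $\nu H^T$, which forces $I \to \LFF H^T$ to be monic, so $I$ is a subcoalgebra of $\LFF H^T$.

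For the reverse inclusion $\LFF H^T \subseteq I$, I would adapt~\autoref{lfpquotient} to the present setting and show that every $(A, a^T) \in \Coalgfg H^T$ is a strong quotient of some $T'(X,x)$ with $(X,x) \in \Coalgfg HT$. Since $A$ is finitely generated in $\Set^T$, there is a surjective $T$-algebra morphism $q\colon (TX, \mu_X) \twoheadrightarrow A$ for a finite set $X$. Using that finite sets are projective in $\Set$ and that $H$ preserves surjections in $\Set$ (every surjection being split), we lift $a^T \cdot q \cdot \eta_X\colon X \to HA$ through the surjection $Hq\colon HTX \twoheadrightarrow HA$ to obtain $x\colon X \to HTX$. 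Freeness of $(TX,\mu_X)$ then makes $q$ into an $H^T$-coalgebra morphism $T'(X,x) \twoheadrightarrow (A, a^T)$, because $Hq\cdot x^\sharp$ and $a^T \cdot q$ are two $T$-algebra morphisms $TX \to HA$ agreeing on the generators $\eta_X(X)$.

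Given such a presentation, the unique morphism $T'(X,x) \to \nu H^T$ factors as $T'(X,x) \to K \twoheadrightarrow I \rightarrowtail \nu H^T$ (as $T'(X,x)$ is a colimit component of $K$). Since $q$ is a strong epi and $I \rightarrowtail \nu H^T$ is mono, the diagonal fill-in produces a unique $H^T$-coalgebra morphism $(A, a^T) \to I$. Because $\LFF H^T$ is the colimit of $\Coalgfg H^T \hookrightarrow \Coalg H^T$, these morphisms assemble into a cocone and induce $\LFF H^T \to I$, which composed with $I \rightarrowtail \nu H^T$ must agree with the canonical embedding $\LFF H^T \rightarrowtail \nu H^T$; hence $\LFF H^T \subseteq I$. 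The main obstacle I anticipate is precisely this coalgebraic lifting lemma for $(A,a^T)$: the interplay of projectivity of finite sets in $\Set$, preservation of surjections by $H$, and freeness of $TX$ is what allows us to cross from $\Coalgfg H^T$ back into the generalized-powerset-construction side $\Coalgfg HT$.
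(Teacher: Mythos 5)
Your proposal is correct and follows essentially the same route as the paper: one containment comes from the fact that each $(TX,x^\sharp)$ has fg carrier (so the image of $K$ lands inside $\LFF H^T$), and the other from showing that every fg-carried $H^T$-coalgebra is a strong quotient of some determinized $(TX,x^\sharp)$ via projectivity of free algebras and preservation of surjections, which is exactly the paper's surjectivity argument for its diagonal $w$. The only difference is presentational (mutual inclusion of subcoalgebras of $\nu H^T$ versus diagonal fill-in followed by surjectivity of $w$).
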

\begin{proof}
  Let us denote by $\inj_x\colon (TX, x^\sharp) \to (K,k)$ the colimit
  injection of the above colimit. For every finite $X$, $(TX,\mu_X)$
  is finitely generated in $\Set^T$, and hence $(TX,x^\sharp)$ is in
  $\Coalgfg H^T$. Therefore we have the unique coalgebra homomorphism
  $x^{\sharp\LFFDagger}\colon (TX,x^\sharp)\to (\LFF H^T,\ell)$. By
  finality of $(\nu H, \tau)$, we see that the outside of the square
  below commutes:
    \[
        \begin{tikzcd}[ampersand replacement=\&]
        \displaystyle\coprod_{\mathclap{(X,x)\in \Coalgfg(HT)}}~(TX,x^\sharp)
        \arrow[->>]{r}[above]{[\inj_x]}
        \arrow{d}[swap]{[x^{\sharp\LFFDagger}]}
        \&[15mm]
        (K,k)
        \arrow{d}{k^\FinalCoalgDagger}
        \arrow[dashed]{ld}[swap]{w}
        \\
        (\LFF H^T, \ell)
        \arrow[>->]{r}{n}
        \&
        (\nu H, \tau)
        \end{tikzcd}
        \text{ in }\Coalg H
    \]
    Recall from \autoref{colimitStrongEpi} that $[\inj_x]_{(X,x)}$ is
    a strong epi in $\Coalg H$. Since $n$ is a mono in $\Coalg H$, we obtain a diagonal
    $w\colon (K,k)\to (\LFF H^T,\ell)$. To prove that
    $(\LFF H^T,\ell)$ is indeed the image of $k^\FinalCoalgDagger$, it remains
    to show that $w$ is a strong epi in $\Set$
    (cf.~\autoref{factorizationsCoalgebra}), i.e.~a surjective map.

    To see that $w$ is surjective we first establish that every
    coalgebra $(Y,f)$ in $\Coalgfg H^T$ is the quotient of some
    $(TX,x^\sharp)$ with $X$ finite. Indeed, given $f\colon Y \to HY$ where
    $Y$ is a finitely generated $T$-algebra, we know that it is the
    quotient of some free $T$-algebra $TX$, $X$ finite, via
    $q\colon TX \twoheadrightarrow Y$, say. We know that $H^T$ preserves
    surjective $T$-algebra morphism since it is a lifting and every
    set functor $H$ preserves surjections. Thus, we can use
    projectivity of the free algebra $TX$ to obtain a coalgebra
    structure $e\colon TX \to HTX$ such that $q$ is a coalgebra
    homomorphism:
    \[
      \begin{tikzcd}
        TX \arrow{r}{e} \arrow[->>]{d}{q}
        & 
        HTX
        \arrow[->>]{d}{Hq}
        \\
        Y
        \arrow{r}{f} 
        & 
        HY
      \end{tikzcd}
    \]
    Note that $e$ is of the desired form $x^\sharp$ for $x = e\cdot \eta_X$. 
    Now since the $f^\LFFDagger \colon Y \to \LFF H$, $(Y,f)$ in $\Coalgfg
    H^T$ are jointly surjective, it follows that so are the
    $x^{\sharp\LFFDagger}$, whence $x^{\sharp\LFFDagger}$ is a jointly
    surjective family. Thus, $w$ is surjective as desired. 
\end{proof}

One can also directly take the union of all desired behaviours:
\begin{theorem} \label{prop:LFFunion}
    The locally finite fixpoint of the lifting $H^T$ comprises precisely the
    images of determinized $HT$-coalgebras:
\begin{equation}
\label{LFFunion}
    \LFF H^T = \ \bigcup_{\mathclap{\substack{x\colon X\to HTX\\
                           X\textrm{ \upshape finite}}}}
        \ x^{\sharp\FinalCoalgDagger}[TX]
    = \ \bigcup_{\mathclap{\substack{x\colon X\to HTX\\
                           X\textrm{ \upshape finite}}}}
        \ x^{\sharp\FinalCoalgDagger}\cdot\eta_X^T[X]
        \subseteq \nu H^T.
\end{equation}
\end{theorem}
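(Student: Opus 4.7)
The plan is to use \autoref{firstLFFImage} together with \autoref{unionsofimages}. By \autoref{firstLFFImage}, $\LFF H^T$ is the image in $\nu H^T$ of the canonical morphism $k^\FinalCoalgDagger\colon K \to \nu H^T$, where $K$ is the filtered colimit of the $(TX, x^\sharp)$ over $(X,x)\in \Coalgfg(HT)$. Since each colimit injection $\inj_x$ is a coalgebra homomorphism, uniqueness of morphisms into the final coalgebra forces $k^\FinalCoalgDagger\cdot \inj_x = x^{\sharp\FinalCoalgDagger}$. Applying \autoref{unionsofimages} to $k^\FinalCoalgDagger$ then identifies its image with the directed union of the images of the $x^{\sharp\FinalCoalgDagger}$, yielding the first equality $\LFF H^T = \bigcup_{x} x^{\sharp\FinalCoalgDagger}[TX]$.

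For the second equality the inclusion $\supseteq$ is trivial because $\eta_X^T[X]\subseteq TX$. For $\subseteq$, I would show that every element $x^{\sharp\FinalCoalgDagger}(t)$ with $t\in TX$ already lies in the right-hand union, by exhibiting a finite $HT$-coalgebra and a state whose language semantics equals $x^{\sharp\FinalCoalgDagger}(t)$. The candidate is to adjoin a fresh start state: set $Y := X+1$, let $s\in Y$ be the new point, and define $y\colon Y\to HTY$ by
\[
  y\cdot \inl \;=\; HT\inl\cdot x
  \qquad\text{and}\qquad
  y(s) \;=\; HT\inl(x^\sharp(t)).
\]
So $s$ immediately mimics the one-step unfolding of $t$ in the original machine, re-indexed along $\inl\colon X\hookrightarrow Y$.

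It then remains to verify that $T\inl\colon (TX,x^\sharp)\to (TY,y^\sharp)$ is a coalgebra homomorphism: since both $y^\sharp\cdot T\inl$ and $HT\inl\cdot x^\sharp$ are $T$-algebra morphisms out of the free algebra $(TX,\mu_X^T)$, freeness reduces the check to agreement on $\eta_X^T$, which is $y\cdot \inl = HT\inl\cdot x$, true by construction. Finality then yields $y^{\sharp\FinalCoalgDagger}\cdot T\inl = x^{\sharp\FinalCoalgDagger}$, and a short computation using that $y^{\sharp\FinalCoalgDagger}$ and $x^{\sharp\FinalCoalgDagger}$ are coalgebra homomorphisms together with $y^\sharp\cdot \eta_Y^T = y$ gives
\[
  \tau\bigl(y^{\sharp\FinalCoalgDagger}(\eta_Y^T(s))\bigr)
  = Hy^{\sharp\FinalCoalgDagger}\bigl(HT\inl(x^\sharp(t))\bigr)
  = Hx^{\sharp\FinalCoalgDagger}(x^\sharp(t))
  = \tau\bigl(x^{\sharp\FinalCoalgDagger}(t)\bigr),
\]
and since $\tau$ is an isomorphism we conclude $y^{\sharp\FinalCoalgDagger}(\eta_Y^T(s)) = x^{\sharp\FinalCoalgDagger}(t)$, as required.

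The main obstacle is conceptual rather than technical: given an arbitrary $t\in TX$ which need not be of the form $\eta_X^T(x_0)$, one must witness its language semantics as that of a start state of some finite machine. Once the above construction is in place, all remaining verifications are routine diagram chases in $\Set^T$, combining freeness of the algebras $(TX,\mu_X^T)$ and $(TY,\mu_Y^T)$ with finality of $\nu H$.
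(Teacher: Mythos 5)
Your proposal is correct and follows essentially the same route as the paper: the first equality is obtained by combining \autoref{firstLFFImage} with \autoref{unionsofimages}, and the second by the very same coalgebra $y = HT\inl\cdot[x,x^\sharp(t)]$ on $Y=X+1$. The only (harmless) difference is in the final verification: the paper transfers the semantics via the homomorphism $[\eta_X,t]^\sharp\colon (TY,y^\sharp)\to(TX,x^\sharp)$, which sends $\eta_Y^T(s)$ directly to $t$, whereas you use $T\inl\colon (TX,x^\sharp)\to(TY,y^\sharp)$ together with a one-step unfolding and the invertibility of $\tau$ --- both are valid.
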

\begin{proof}
    Combining the previous \autoref{firstLFFImage} together with 
    \autoref{unionsofimages} proves the first equality. For the second equality,
    consider any element $t \in TX$ and define a new coalgebra on $X+1$ by
    \[
    (Y,y) =\big(\!\!
    \begin{tikzcd}[ampersand replacement = \&, column sep = 1.5cm]
        X+1 \rar{[x, x^\sharp(t)]}
        \& HTX \rar{HT\inl}
        \& HT(X+1)
    \end{tikzcd}
    \!\!\big).
    \]
    It is not difficult to see that  $[\eta_X, t]^\sharp\colon TY\to TX$
    is a $H^T$-coalgebra homomorphism; indeed, to see that the
    following square of $T$-algebra morphisms commutes
    \[
      \begin{tikzcd}
        T(X+1) \arrow{r}{y^\sharp}
        \arrow{d}[left]{[\eta_{X},t]^\sharp}
        &
        HT(X+1)
        \arrow{d}{H[\eta_{X},t]^\sharp}
        \\
        TX \arrow{r}{x^\sharp}
        &
        HTX
      \end{tikzcd}
    \]
    one uses the universal property of the free $T$-algebra $(TY,\mu_Y)$,
    i.e.~it suffices to see that the square commutes when precomposed
    with $\eta_Y\colon Y \to TY$. This is easily done by considering the coproduct
    components of $Y= X+1$ separately. \smnote{Perhaps, we want to do
      these calculations explicity; but then we should somewhere recall the
      laws of Kleisli star.}
    
    Furthermore, we clearly have $t \in
    y^{\sharp\FinalCoalgDagger}\cdot\eta_Y^T[Y]$, and we are done. 
\end{proof}
This result shows that the locally finite fixpoint $\LFF H^T$ captures precisely
the behaviour of finite $HT$-coalgebras,  i.e.~it is a fully abstract
domain for finite state behaviour w.r.t.~the coalgebraic language
semantics. 

In the following subsections, we instantiate the general theory with
examples from the literature to characterize several well-known
notions as LFF.

\subsection{The Languages of Non-deterministic Automata}
Let us start with a simple standard example. We already mentioned that non-de\-ter\-mi\-nis\-tic automata are coalgebras for the functor $X \mapsto 2 \times \Potf(X)^\Sigma$. Hence they are $HT$-coalgebras for $H = 2 \times (-)^\Sigma$ and $T = \Potf$ the finite powerset monad on $\Set$. The above generalized powerset construction then instantiates as the usual powerset construction that assigns to a given non-deterministic automaton its determinization. 

Now note that the final coalgebra for $H$ is carried by the set
$\mathcal L = \Pot(\Sigma^*)$ of all formal languages over $\Sigma$
with the coalgebra structure given by $o\colon \mathcal L \to 2$ with
$o(L) = 1$ iff $L$ contains the empty word and
$t\colon \mathcal L \to \mathcal{L}^\Sigma$ with
$t(L)(s) = \{ w \mid sw \in L\}$ the left language derivative. The
functor $H$ has a canonical lifting $H^T$ on the Eilenberg-Moore
category of $\Potf$, viz.~the category of join semi-lattices. The
final coalgebra $\nu H^T$ is carried by all formal languages with the
join semi-lattice structure given by union and $\emptyset$ and with
the above coalgebra structure. Furthermore, the coalgebraic language
semantics of $x\colon X \to HTX$ assigns to every state of the
non-deterministic automaton $X$ the language it accepts. Observe that
join semi-lattices form a so-called \emph{locally finite variety},
i.e.~the finitely presentable algebras are precisely the finite
ones. Hence, Theorem~\ref{prop:LFFunion} states that the LFF
$\LFF H^T$ is precisely the subcoalgebra of $\nu H^T$ formed by all
languages accepted by finite non-deterministic automata, i.e.~regular languages.

Note that in this example the LFF and the rational fixpoint coincide
since both fp and fg join semi-lattices are simply the finite
ones. Similar characterizations of the coalgebraic language semantics
of finite coalgebras follow from Theorem~\ref{prop:LFFunion} in other
instances of the generalized powerset construction (cf.~e.g.~the
treatment of the behaviour of finite weighted automata
in~\cite{bms13}).

We now turn to examples that, to the best of our knowledge, cannot be
treated using the rational fixpoint.
 
\subsection{The Behaviour of Stack Machines}
Push-down automata are finite state machines with infinitely many
configurations.  It is well-known that deterministic and
non-deterministic pushdown automata recognize different classes of
context-free languages. We will characterize both as instances of the
locally finite fixpoint, using results on stack
machines~\cite{coalgchomsky}; they are finite state machines which can
push or read \emph{multiple} elements to or from their stack in a single
transition, respectively.

That is, a transition of a stack machine in a certain state consists of reading
an input character, going to a successor state based on the stack's topmost
elements and of modifying the topmost elements of the stack. These stack
operations are captured by the stack monad.

\begin{definition}[{Stack monad, \cite[Proposition 5]{goncharov13}}]
    For a finite set of stack symbols $\Gamma$, the \emph{stack monad} is the
    submonad $T$ of the store monad $(-×\Gamma^*)^{\Gamma^*}$ for which the
    elements $\fpair{r,t}$ of $TX \subseteq (X×\Gamma^*)^{\Gamma^*}
    \cong X^{\Gamma^*} \times (\Gamma^*)^{\Gamma^*}\!$ satisfy the
    following restriction: there exists $k$ depending on $r,t$ such that for
    every $w\in \Gamma^k$ and $u\in \Gamma^*$, $r(wu) =r(w)$ and $t(wu) =
    t(w)u$.
\end{definition}

\noindent
Note that the parameter $k$ gives a bound on how many of the topmost stack cells the machine can
access in one step. 

Using the stack monad, stack machines are $HT$-coalgebras, where
$H=B×(-)^\Sigma$ is the Moore automaton functor for the finite input
alphabet $\Sigma$ and the set $B$ of all predicates on (initial) stack
configurations which depend only on the topmost $k$ elements on the stack:
\[
    B = \{
        p \in 2^{\Gamma^*}
        \mid
        \exists k\in \N_0\colon 
        \forall w,u\in \Gamma^*, |w|\ge k\colon
        p(wu) = p(w)
    \} \subseteq 2^{\Gamma^*}.
\]
The final coalgebra $\nu H$ is carried by $B^{\Sigma^*}$ which is
(isomorphic to) a set of functions $\Gamma^* \to 2^{\Sigma^*}$,
mapping stack configurations to
formal languages. \citet{coalgchomsky} show that $H$ lifts to $\Set^T$ and
that finite-state $HT$-coalgebras can be understood as a
coalgebraic version of \emph{deterministic}
pushdown automata without spontaneous transitions. The languages
accepted by those automata are precisely the \emph{real-time deterministic context-free
languages}; this notion goes back to \citet{HarrisonHavel72}.
We obtain the following, with $\gamma_0$ playing
the role of an initial symbol on the stack:

\begin{theorem} \label{stackRealTime}
  The locally finite fixpoint $\LFF H^T$ is carried by the set of all
  maps $f \in B^{\Sigma^*}$ such that for every fixed
  $\gamma_0 \in \Gamma$,
  \( \{ w\in \Sigma^* \mid f(w)(\gamma_0) = 1\} \) is a real-time
  deterministic context-free language.
\end{theorem}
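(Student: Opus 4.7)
The plan is to combine \autoref{prop:LFFunion} with the coalgebraic characterization of real-time deterministic pushdown automata from \cite{coalgchomsky}. By \autoref{prop:LFFunion}, the carrier of $\LFF H^T$ is the union $\bigcup x^{\sharp\FinalCoalgDagger}\cdot \eta_X^T[X]$ ranging over all finite $HT$-coalgebras $(X,x)$, so the task reduces to showing that this union equals the set of $f \in B^{\Sigma^*}$ described in the statement.

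For the forward inclusion, I would take $f = x^{\sharp\FinalCoalgDagger}(\eta_X^T(x_0))$ for some finite $HT$-coalgebra $(X,x)$ and some $x_0 \in X$, then unfold the generalized powerset construction to see that $f(w)(\gamma_0) = 1$ precisely records whether the configuration $(x_0,\gamma_0)$ accepts $w$ in the stack machine $(X,x)$. By the identification in \cite{coalgchomsky} of finite $HT$-coalgebras with real-time DPDAs (without spontaneous transitions), this set is exactly the language recognized by such a machine initialized in state $x_0$ with stack $\gamma_0$, and hence is real-time deterministic context-free by definition. Conversely, given $f \in B^{\Sigma^*}$ with each $L_{\gamma_0} := \{w \mid f(w)(\gamma_0) = 1\}$ real-time deterministic context-free, I would pick a real-time DPDA $M_{\gamma_0}$ accepting $L_{\gamma_0}$ for each of the finitely many $\gamma_0 \in \Gamma$ and merge them into a single finite-state stack machine via disjoint union together with a dispatching initial step that reads the top stack symbol and routes to the appropriate $M_{\gamma_0}$. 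Translating the result back into a finite $HT$-coalgebra, the uniqueness of the coalgebra-to-$\nu H$ morphism then forces its behaviour at the dispatcher state to coincide with $f$, witnessing $f \in \LFF H^T$.

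The main obstacle will be the converse step: one must argue that the merged machine's behaviour in $B^{\Sigma^*}$ agrees with $f$ not only on single-symbol initial stacks $\gamma_0 \in \Gamma$ but on all of $\Gamma^*$. This should follow from the observation that once the dispatcher commits to $M_{\gamma_0}$ based on the topmost symbol, the remainder of the stack is processed by $M_{\gamma_0}$ exactly as $f$ prescribes, so finality of $\nu H$ extends the agreement to all of $\Gamma^*$. Making this precise will require care in how the dispatching machine reads its stack within a single $HT$-transition, which is permissible because elements of $B$ depend only on a bounded number of top stack cells and because the stack monad's finite-lookahead condition is preserved under this construction.
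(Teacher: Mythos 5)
Your first step---reducing the claim via \autoref{prop:LFFunion} to an identity between the union $\bigcup x^{\sharp\FinalCoalgDagger}\cdot\eta_X^T[X]$ and the described set of maps---is exactly how the paper begins, and your forward inclusion is sound: it amounts to the ``only if'' half of the characterization of real-time deterministic context-free languages by finite $HT$-coalgebras, which the paper does not re-derive but simply cites as \cite[Theorem~5.5]{coalgchomsky}.

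The converse inclusion is where your argument has a genuine gap, and it is precisely the obstacle you flag. An element $f\in B^{\Sigma^*}$ assigns to each word $w$ an acceptance predicate on \emph{all} stack contents $u\in\Gamma^*$, while the hypothesis only constrains the sections $L_{\gamma_0}=\{w\mid f(w)(\gamma_0)=1\}$ for single-symbol stacks $\gamma_0\in\Gamma$. Your machines $M_{\gamma_0}$ are chosen solely to accept $L_{\gamma_0}$, so the merged dispatcher machine yields \emph{some} $g\in B^{\Sigma^*}$ with $g(w)(\gamma_0)=f(w)(\gamma_0)$ for $\gamma_0\in\Gamma$, but nothing forces $g(w)(u)=f(w)(u)$ for longer $u$: those values of $f$ are not determined by the data you start from, and your proposed resolution (``the remainder of the stack is processed by $M_{\gamma_0}$ exactly as $f$ prescribes'') asserts the conclusion rather than proving it. Finality of $\nu H$ does not help here; it gives uniqueness of the homomorphism out of your machine, not agreement of its image with a prescribed $f$. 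The paper's proof avoids this entirely by invoking \cite[Theorem~5.5]{coalgchomsky} as a two-way characterization and then concluding with~\eqref{LFFunion}; without that external result, or an argument that pins down the dispatcher's behaviour on all of $\Gamma^*$ (say, via the bounded-lookahead structure of $B$ and of the stack monad), the $\supseteq$ inclusion does not go through. A smaller point: a ``dispatching initial step'' that inspects the stack without consuming input is not available in a real-time machine, so the dispatch would have to be folded into the first input-consuming transition of each $M_{\gamma_0}$.
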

\begin{proof}
    By \cite[Theorem 5.5]{coalgchomsky}, a language $L$ is a real-time
    deterministic context-free language iff there exists some $x\colon X\to
    HTX$, $X$ finite, with its determinization $x^\sharp\colon TX \to HTX$ and there
    exist $s\in X$ and $\gamma_0 \in \Gamma$ such that $f =
    x^{\sharp\FinalCoalgDagger}\cdot\eta_X^T(s) \in B^{\Sigma^*}$ and $f(w)(\gamma_0) = 1$ for
    all $w\in \Sigma^*$. The rest follows by~\eqref{LFFunion}.
\end{proof}
Just as for pushdown automata, the expressiveness of stack machines
increases when equipping them with non-determinism. Technically, this
is done by considering the \emph{non-deterministic stack monad} $T'$,
i.e.~$T'$ denotes a submonad of the non-deterministic store monad
$\Potf(-×\Gamma^*)^{\Gamma^*}$~\cite[Section~6]{coalgchomsky}. In the
non-deterministic setting, a similar property holds, namely that the
determinized $HT'$-coalgebras with finite carrier describe precisely
the context-free languages \cite[Theorem~6.5]{coalgchomsky}.
Combining this with~\eqref{LFFunion} we obtain:
\begin{theorem} \label{stackContextFree}
  The locally finite fixpoint $\LFF H^{T'}$ is carried by the set of
  all maps $f \in B^{\Sigma^*}$ such that for every fixed
  $\gamma_0 \in \Gamma$,
  \( \{ w\in \Sigma^* \mid f(w)(\gamma_0) = 1\} \) is a context-free
  language.
\end{theorem}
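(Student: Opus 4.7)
The plan is to mirror the argument used for Theorem \ref{stackRealTime} almost verbatim, since the only differences between the two statements are the substitution of the non-deterministic stack monad $T'$ for the deterministic stack monad $T$, and the corresponding change in the class of accepted languages from real-time deterministic context-free to context-free. In both cases the functor $H = B \times (-)^\Sigma$ is the same Moore-automaton functor, and $H$ lifts to $\Set^{T'}$ as well (this is part of the setup in \cite{coalgchomsky}), so $\LFF H^{T'}$ exists and is characterized by formula \eqref{LFFunion} of Theorem~\ref{prop:LFFunion}.

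First I would check that the monad $T'$ and the lifting of $H$ to $\Set^{T'}$ satisfy our standing assumptions, so that the machinery leading to Theorem~\ref{prop:LFFunion} applies: $T'$ is finitary, $H$ is finitary and preserves non-empty monos (it does, as a product of a constant with an exponential), and $H$ lifts to $\Set^{T'}$. Then Theorem~\ref{prop:LFFunion} gives
\[
  \LFF H^{T'} \;=\; \bigcup_{\substack{x\colon X\to HT'X\\ X \text{ finite}}} x^{\sharp\FinalCoalgDagger}\cdot \eta_X^{T'}[X] \;\subseteq\; \nu H^{T'},
\]
identifying $\LFF H^{T'}$ with the set of those $f \in B^{\Sigma^*}$ arising as the coalgebraic language semantics of some state $s \in X$ in a finite-state determinized $HT'$-coalgebra.

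Second, I would invoke the key external result, \cite[Theorem~6.5]{coalgchomsky}, which states that a language $L \subseteq \Sigma^*$ is context-free if and only if there exist a finite-state $HT'$-coalgebra $x\colon X \to HT'X$, a state $s \in X$ and a stack symbol $\gamma_0 \in \Gamma$ such that, writing $f = x^{\sharp\FinalCoalgDagger}\cdot \eta_X^{T'}(s) \in B^{\Sigma^*}$, we have $L = \{w \in \Sigma^* \mid f(w)(\gamma_0) = 1\}$. Combining this bijection between finite-state determinized $HT'$-behaviours and context-free languages with the description of $\LFF H^{T'}$ above yields exactly the claimed characterization.

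The proof is essentially routine given the two ingredients; there is no real obstacle, as all the heavy lifting has already been done in Theorem~\ref{prop:LFFunion} and in \cite[Theorem~6.5]{coalgchomsky}. The only minor point to be careful about is the direction of quantification over $\gamma_0$: the statement quantifies universally over $\gamma_0 \in \Gamma$, meaning that for each fixed initial stack symbol the induced language is context-free, which matches the form in which the cited theorem produces context-free languages, so no additional argument is needed.
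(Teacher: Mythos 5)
Your proposal is correct and follows exactly the paper's own route: the paper likewise notes that \cite[Theorem~6.5]{coalgchomsky} identifies the behaviours of finite determinized $HT'$-coalgebras with the context-free languages and then combines this with the union formula~\eqref{LFFunion} from Theorem~\ref{prop:LFFunion}, just as in the proof of Theorem~\ref{stackRealTime}. Your additional check of the standing assumptions for $T'$ and the lifting is a sensible (if implicit in the paper) sanity check, but otherwise the two arguments coincide.
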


\subsection{Context-Free Languages and Constructively $S$-Algebraic Power Series}
\label{sec:cfl}
One generalizes from formal (resp.~context-free) languages to weighted formal
(resp.\ context-free) languages by assigning to each word a weight from a fixed
semiring. More formally, a weighted language -- a.k.a.~\emph{formal power
  series} -- over an input alphabet $X$ is defined as a map $X^* \to S$, where
$S$ is a semiring. The set of all formal power series is denoted by $\FPS{X}$.
Ordinary formal languages are formal power series over the boolean semiring
$\mathbb{B}= \{0,1\}$, i.e.~maps $X^*\to \{0,1\}$.

An important class of formal power series is that of
\emph{constructively $S$-algebraic} formal power series. We show that
this class arises precisely as the LFF of the standard functor
$H = S×(-)^\Sigma$ for deterministic Moore automata on a finitary
variety, i.e.~an Eilenberg-Moore category of a finitary set monad. As
a special case, constructively $\mathbb{B}$-algebraic formal power
series are precisely the context-free languages and they form the
LFF of the functor $\mathbb{B} \times (-)^\Sigma$ on the category of
idempotent semirings.

The original definition of constructively $S$-algebraic formal power series goes back to
\citet{Fliess1974}, see also~\cite{handbookWeightedAutomata}.
Here, we use the equivalent coalgebraic characterization by \citet{jcssContextFree}.

Let $\Poly{X}\subseteq \FPS{X}$ be the subset of those maps $X^* \to S$
having finite support, i.e.~which map all but finitely many $w\in X^*$ to
$0$. If $S$ is commutative, then $\Poly{-}$ yields a finitary
monad and therefore we also have the monad $T=\Poly{-+\Sigma}$
by~\autoref{exceptionmonad}. Note that the monad $\Poly{-}$ is a
composition of two monads: we have $\Poly{X} = S_\omega^{(X^*)}$ where $X
\mapsto X^*$ is the free monoid monad and $X \mapsto S_\omega^{(X)}$ maps a
set $X$ to the free $S$-semimodule on $X$, which is carried by the set
of finite support functions $X \to S$. 

Recall that the algebras for the monad $\Poly{-}$ are the associative
$S$-algebras (over the commutative semiring $S$), i.e.~(left)
$S$-modules $A$ together with a monoid structure $(A, *, 1_A)$ that is
\emph{bilinear}, i.e.~an $S$-module morphism in both of its arguments
separately. We write $(A,+, 0_A)$ for the commutative monoid and
$(s,x) \mapsto s.x$ for every $s\in S$ and $x \in A$ for the action of
the semiring $S$ on $A$ which together form the module structure on
$A$. Note that $S$ itself is an $S$-algebra where the scalar and
monoid multiplication are just the semiring multiplication of
$S$. Moreover, for every $S$-algebra $A$ there is the $S$-algebra
morphism
\begin{equation}\label{eq:i}
  i\colon S \to A \qquad\text{with}\qquad i(s) = s.1_A.
\end{equation}

The algebras for $T$ are $\Sigma$-pointed $S$-algebras. 
The following notions are special instances of $S$-algebras:
\begin{example}
  \begin{enumerate}
  \item Idempotent semirings for $S=\mathbb{B} =
    \{0,1\}$ the Boolean semiring.
  \item Semirings for $S=\mathbb{N}$ the semiring of natural numbers
    (with the usual addition and multiplication).
  \item Rings for $S=\mathbb{Z}$ the semiring of integers (again with the
    usual addition and multiplication).
  \end{enumerate}
\end{example}

\citet[Proposition 4]{jcssContextFree} show that the final
$H$-coalgebra is carried by $\FPS{\Sigma}$ and that constructively
$S$-algebraic series are precisely those elements of $\FPS{\Sigma}$
that arise as the behaviours of finite coalgebras
$c\colon X\to H\Poly{X}$, after determinizing them to some
$\hat c\colon \Poly{X} \to H\Poly{X}$
(see~\cite[Theorem~23]{jcssContextFree}).

However, this determinization is not directly an instance of the
generalized powerset construction. We shall show that the same
behaviours can be obtained by using the standard generalized powerset
construction with an appropriate lifting of $H$ to the category of
$T$-algebras.

\subsection{A Lifting of $S×(-)^\Sigma$ to $S$-algebras}

Let $\Sigma$ be a fixed input alphabet. Given an $S$-algebra structure
on $A$ and a $\Sigma$-pointing $j\colon \Sigma \to A$, we will define
an $S$-algebra structure and $\Sigma$-pointing on $HA =
S×A^\Sigma$. While the $S$-module structure is given by the usual
componentwise operations on the product, a bit of care is needed for
the monoid multiplication on $HA$. To this end we first define the
operation $\fuse{-}\colon S \times A^\Sigma \to A$ by
\[
    \fuse{o,\delta} :=
        i(o) + \displaystyle\sum_{\tau\in \Sigma}\big(j(\tau) * \delta(\tau)\big),
\]
where $i\colon S \to A$ is the morphism from~\eqref{eq:i}. The idea
is that $\fuse{o,\delta}$ acts like a state with output $o$ and
`next states' $\delta$. 

\autoref{tab:lifting} shows the definition of the $S$-algebra
operations and $\Sigma$-pointing on $HA$ (given separately on the
product components $S$ and $A^\Sigma$). 
\begin{table}
\[
\begin{nicearray}{l@{\hskip 3mm}l@{\hskip 3mm}l@{\hskip 3mm}l}
        \text{Structure} & \text{Operation} & \text{in }S &\text{in }A^\Sigma\\
        \midrule
        \text{$S$-Module}
        & 0 & 0_S& \sigma\mapsto 0_A \\
        & (o_1,\delta_1) + (o_2, \delta_2)
        & o_1 + o_2
        & \sigma\mapsto \delta_1(\sigma) + \delta_2(\sigma)
        \\
        & s.(o_1,\delta_1)
        & s \cdot o_1
        & \sigma\mapsto s.\delta_1(\sigma)
        \\
        \midrule
        \text{Monoid}
        & 1 & 1_S& \sigma\mapsto 0_A \\
        & (o_1,\delta_1) * (o_2, \delta_2)
        & o_1 \cdot o_2
        & \sigma\mapsto
            \delta_1(\sigma)* \fuse{o_2,\delta_2}
            + i(o_1)* \delta_2(\sigma) \\
        \midrule
        \text{$\Sigma$-pointing} & \sigma\in \Sigma & 0_S& \chi_\sigma\colon \sigma\mapsto 1_A,\quad \tau\mapsto 0_A, \tau\neq \sigma
\end{nicearray}
\]
\caption{$\Poly{-+\Sigma}$-algebra structure on $S×A^\Sigma$}
\label{tab:lifting}
\end{table}
Since these operations only make use of the operations from $S$ (seen as an
$S$-algebra), the $S$-algebra $A$ and its $\Sigma$-pointing $j$, we
immediately see that for every $\Poly{-+\Sigma}$-algebra morphism
$h\colon A\to B$, the morphism $Hh = \id_S \times h^\Sigma:
S×A^\Sigma\to S×B^\Sigma$ is an $\Poly{-+\Sigma}$-algebra morphism
again. 

Thus, in order to see that we have defined a lifting $H^T$ of
$H$ it suffices to prove that $HA$ with the operations defined in
\autoref{tab:lifting} is an $\Poly{-+\Sigma}$-algebra. To this end we
first prove that $[-]\colon S×A^\Sigma \to A$ is an
$\Poly{-+\Sigma}$-algebra morphism.
\begin{lemma}
    The map $\fuse{-}\colon S×A^\Sigma \to A$ preserves the operations
    defined in \autoref{tab:lifting}.
\end{lemma}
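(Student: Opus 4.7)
The plan is to verify that $\fuse{-}$ commutes with each entry of \autoref{tab:lifting} separately, using just three facts: bilinearity of the monoid multiplication $*$ over the $S$-module structure, that $i\colon S\to A$ from~\eqref{eq:i} is a morphism of $S$-algebras (hence additive, $S$-linear, unit-preserving and multiplicative), and the \emph{centrality} of $i$, i.e.~$i(s)*a = a*i(s) = s.a$ for every $a\in A$. Centrality itself follows directly from $i(s) = s.1_A$ and bilinearity: $i(s)*a = (s.1_A)*a = s.(1_A*a) = s.a$, and symmetrically $a*i(s) = s.a$.

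The $S$-module operations, the monoid unit and the $\Sigma$-pointing are then quick one-line checks. For the zero, $\fuse{0_S, \tau\mapsto 0_A} = i(0_S) + \sum_\tau j(\tau)*0_A = 0_A$ by additivity of $i$ and bilinearity. Preservation of addition follows by pushing $i$ over $+$ in $S$ and each $j(\tau)*(-)$ over $+$ in $A$; scalar multiplication follows similarly using $S$-linearity of $i$ and bilinearity of $*$. The unit reduces to $i(1_S) = 1_A$. For a symbol $\sigma\in\Sigma$, $\fuse{0_S, \chi_\sigma} = 0_A + j(\sigma)*1_A + 0_A = j(\sigma)$, since $\chi_\sigma(\tau) = 0_A$ for $\tau\neq \sigma$ and all resulting summands vanish.

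The main obstacle is preservation of the monoid multiplication. Unfolding the product on $HA$ from \autoref{tab:lifting} and using $i(o_1 o_2) = i(o_1)*i(o_2)$, one obtains
\[
\fuse{(o_1,\delta_1)*(o_2,\delta_2)} = i(o_1)*i(o_2) + \sum_\tau j(\tau)*\delta_1(\tau)*\fuse{o_2,\delta_2} + \sum_\tau j(\tau)*i(o_1)*\delta_2(\tau).
\]
On the other hand, distributivity of $*$ over the sum defining $\fuse{o_1,\delta_1}$ gives
\[
\fuse{o_1,\delta_1}*\fuse{o_2,\delta_2} = i(o_1)*\fuse{o_2,\delta_2} + \sum_\tau j(\tau)*\delta_1(\tau)*\fuse{o_2,\delta_2}.
\]
The $\delta_1$-indexed summands already match, and expanding $i(o_1)*\fuse{o_2,\delta_2}$ on the right into $i(o_1)*i(o_2) + \sum_\tau i(o_1)*j(\tau)*\delta_2(\tau)$ reduces the remaining matching to the identity $j(\tau)*i(o_1) = i(o_1)*j(\tau)$ under the sum, which is exactly centrality combined with associativity of $*$. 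Thus the whole lemma is a routine, if somewhat fiddly, calculation, with the central observation (in the literal sense) being the only non-trivial ingredient; I would dispatch the easy cases briefly and spend most of the written proof on this last calculation.
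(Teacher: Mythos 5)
Your proposal is correct and follows essentially the same route as the paper: the routine componentwise checks for the $S$-module structure, unit and $\Sigma$-pointing, plus the identity $i(s)*x = s.x = x*i(s)$ (the paper's~\eqref{eq:isx}) as the one non-trivial ingredient needed to match terms in the multiplication case. The paper's own computation of $\fuse{o_1,\delta_1}*\fuse{o_2,\delta_2}$ expands and regroups exactly as you describe, invoking that identity to commute $i(o_1)$ past $j(\tau)$.
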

\begin{proof}
First, we show that $\fuse{-}$ preserves the $S$-module operations: 
\begin{enumerate}
\item Zero: \(
\begin{aligned}[t]
    \fuse{0_S, \sigma\mapsto 0_A}
    = i(0_S) + \sum_{\tau\in\Sigma} j(\tau)\cdot 0_A
    = 0_A + \sum_{\tau\in\Sigma} 0_A 
    = 0_A.
\end{aligned}\)
\item Addition: \(
\begin{aligned}[t]
    &
    [o_1+o_2, \sigma\mapsto \delta_1(\sigma) +\delta_2(\sigma)]
    \\ =\ &
    i(o_1+o_2) + \sum_{\tau\in \Sigma} \big(j(\tau) * (\delta_1(\sigma) + \delta_2(\sigma))\big)
    \\ =\ &
    i(o_1)+i(o_2) + \sum_{\tau\in \Sigma} \big(j(\tau) * \delta_1(\sigma)\big)
    + \sum_{\tau\in \Sigma}\big(j(\tau) * \delta_2(\sigma))\big)
    \\ =\ &
    [o_1, \delta_1] + [o_2, \delta_2].
\end{aligned} \)
\item Scalar multiplication:
\begin{align*}
     \fuse{s\cdot o,\sigma\mapsto s.\delta(\sigma)}
    & =
    i(s\cdot o) + \sum_{\tau\in \Sigma}\big(j(\tau) * (s.\delta(\tau))\big)
    \\
    =
    s.i(o) + \sum_{\tau\in \Sigma}s.\big(j(\tau)* \delta(\tau)\big)
    &=
    s.\!\left(i(o) + \sum_{\tau\in \Sigma}\big(j(\tau)* \delta(\tau)\big)\right)
    =
    s. [o,\delta].
\end{align*}
\end{enumerate}
Now note that for every $s \in S$ and $x \in A$ we have
\begin{equation}\label{eq:isx}
  i(s) * x = x * i(s), 
\end{equation}
because we can compute as follows:
\[
  i(s) * x = (s.1_A) * x = s.(1_A * x) = s.x = s.(x
  * 1_A) = x * (s.1_A) = x * i(s). 
\]
Using~\eqref{eq:isx}, we see that the monoid operations are preserved by $\fuse{-}$:
\begin{enumerate}
\item One:
\(\quad
    \fuse{1_S, \sigma\mapsto 0_A}
    = i(1_S) +\sum_{\tau\in\Sigma}j(\tau) * 0_A = i(1_S) = 1_A.
\)
\item Multiplication:
\allowdisplaybreaks
\begin{align*}
    &\phantom{=}\ \  [o_1,\delta_1] *  [o_2,\delta_2]
    \\
    & = \left(i(o_1) + \sum_{\tau\in \Sigma} j(\tau) * \delta_1(\tau)\right) * [o_2,\delta_2]
    \\ &
    = i(o_1) * [o_2,\delta_2] + \sum_{\tau\in\Sigma}
    j(\tau)*\delta_1(\tau) * [o_2,\delta_2]
    \\ &
    = i(o_1) * \left(i(o_2)+\sum_{\tau\in\Sigma}j(\tau) * \delta_2(\tau)\right)
    + \sum_{\tau\in\Sigma} j(\tau)*\delta_1(\tau)* [o_2,\delta_2]
    \\ &
    \overset{\eqref{eq:isx}}{=} i(o_1\cdot o_2)+\sum_{\tau\in\Sigma}j(\tau)* i(o_1)* \delta_2(\tau)
    + \sum_{\tau\in\Sigma} j(\tau)*\delta_1(\tau)* [o_2,\delta_2]
    \\ &
    = i(o_1\cdot o_2)+\sum_{\tau\in\Sigma}j(\tau)* \big(i(o_1)* \delta_2(\tau)
    + \delta_1(\tau)* [o_2,\delta_2]\big)
    \\ &
    = \big[
        o_1\cdot o_2, \sigma\mapsto i(o_1)*\delta_2(\sigma)+\delta_1(\sigma)*[o_2,\delta_2]
    \big]
    = \big[(o_1,\delta_1)* (o_2,\delta_2)\big].
\end{align*}
\end{enumerate}
Finally, the $\Sigma$-pointing is also preserved:
\[
    \fuse{0_S,\chi_\sigma}
    = i(0_S) + \displaystyle\sum_{\tau\in \Sigma}\big(j(\tau) * \chi_\sigma(\tau)\big)
    = j(\sigma) * 1_A = j(\sigma).
    \qedhere
\]
\end{proof}
\begin{lemma}
  For every $\Poly{-+\Sigma}$-algebra $A$, $HA = S×A^\Sigma$ equipped
  with the operations in \autoref{tab:lifting} is an
  $\Poly{-+\Sigma}$-algebra.
\end{lemma}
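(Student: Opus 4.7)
The plan is to verify, one by one, each axiom of a $\Poly{-+\Sigma}$-algebra (equivalently, a $\Sigma$-pointed associative $S$-algebra) on $HA = S \times A^\Sigma$ with the operations in \autoref{tab:lifting}. The $S$-module operations $(0, +, s.-)$ are defined componentwise from the $S$-module structure on $S$ and the pointwise $S$-module structure on $A^\Sigma$, so all $S$-module axioms (commutativity and associativity of $+$, unit laws, and the module identities $(s+t).x = s.x + t.x$, $s.(x+y) = s.x + s.y$, $(s\cdot t).x = s.(t.x)$, $1_S.x = x$) hold coordinatewise without further work. Likewise, the $\Sigma$-pointing is merely a map $\Sigma \to HA$, so nothing is to be verified there. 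The remaining obligations are: (a) unit laws for $(HA, *, (1_S, 0))$; (b) bilinearity of $*$ with respect to the $S$-module structure; (c) associativity of $*$.

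For the unit laws, expanding $(1_S, 0) * (o, \delta)$ via \autoref{tab:lifting} yields on the first coordinate $1_S \cdot o = o$, and on the second coordinate $\sigma \mapsto 0_A * \fuse{o, \delta} + i(1_S) * \delta(\sigma) = \delta(\sigma)$, using $i(1_S) = 1_A$ and the monoid laws in $A$. The computation for $(o,\delta) * (1_S, 0)$ is symmetric, using in addition the identity $\fuse{1_S, 0} = 1_A$ already established in the previous lemma. Bilinearity is a direct expansion: in each of the four cases, distributivity of $\cdot$ in $S$, bilinearity of $*$ in $A$, additivity of $i$ (which is immediate as $i(s) = s.1_A$), and the identity $i(s\cdot o) = s.i(o)$ reduce both sides to the same expression coordinatewise.

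The main obstacle is associativity of $*$. On the first coordinate, both $((o_1,\delta_1) * (o_2,\delta_2)) * (o_3,\delta_3)$ and $(o_1,\delta_1) * ((o_2,\delta_2) * (o_3,\delta_3))$ reduce to $o_1\cdot o_2\cdot o_3$ by associativity in $S$. On the second coordinate, a direct expansion of the left-hand side gives three summands
\[
\delta_1(\sigma) * \fuse{o_2,\delta_2} * \fuse{o_3,\delta_3} \; + \; i(o_1) * \delta_2(\sigma) * \fuse{o_3,\delta_3} \; + \; i(o_1\cdot o_2) * \delta_3(\sigma),
\]
and the right-hand side gives
\[
\delta_1(\sigma) * \fuse{(o_2,\delta_2)*(o_3,\delta_3)} \; + \; i(o_1) * \delta_2(\sigma) * \fuse{o_3,\delta_3} \; + \; i(o_1) * i(o_2) * \delta_3(\sigma).
\]
The two are matched using (i) $\fuse{(o_2,\delta_2) * (o_3,\delta_3)} = \fuse{o_2,\delta_2} * \fuse{o_3,\delta_3}$, which is precisely the content of the preceding lemma that $\fuse{-}$ preserves monoid multiplication, and (ii) $i(o_1\cdot o_2) = i(o_1) * i(o_2)$, which follows from the $S$-algebra axioms via $i(o_1\cdot o_2) = (o_1\cdot o_2).1_A = o_1.(o_2.(1_A * 1_A)) = (o_1.1_A) * (o_2.1_A)$ (using bilinearity of $*$ over the scalar action). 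The first summand on the right then equals the first summand on the left by associativity of $*$ in $A$, the middle summands already agree, and the last summands coincide by (ii); this yields the desired associativity and completes the verification.
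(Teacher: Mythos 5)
Your proposal is correct and follows essentially the same route as the paper: the $S$-module axioms are dispatched as coordinatewise, the unit laws and bilinearity are checked by direct expansion, and the crux of associativity is resolved by invoking the preceding lemma that $\fuse{-}$ preserves the monoid multiplication (so $\fuse{(o_2,\delta_2)*(o_3,\delta_3)} = \fuse{o_2,\delta_2}*\fuse{o_3,\delta_3}$) together with the multiplicativity of $i$. The paper's proof performs exactly this computation, absorbing $i(o_1\cdot o_2)*\delta_3(\sigma) = i(o_1)*i(o_2)*\delta_3(\sigma)$ in the same way.
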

\begin{proof}
  Recall from~\autoref{exceptionmonad} that an
  $\Poly{-+\Sigma}$-algebra is an $\Poly{-}$-algebra together with a
  $\Sigma$-pointing. Thus, it suffices to prove that $HA$ is an
  associative $S$-algebra. It is clear that $HA$ satisfies the axioms
  of an $S$-module because the $S$-module operations in
  \autoref{tab:lifting} are just the usual coordinatewise operations
  on the product.
  
  Next we prove that $(HA,1,*)$ is a monoid.

  \noindent
  Unit axioms: 
\begin{align*}
    (1_S,\sigma\mapsto 0_A)* (o,\delta)
    &= \big(1_S\cdot o, \sigma\mapsto 0_A* [o,\delta] + i(1_S)* \delta(\sigma)\big)
    = \big(o, \sigma\mapsto \delta(\sigma)\big),
\\
    (o,\delta)* (1_S,\sigma\mapsto 0_A)
    &= \big(o\cdot 1_S, \sigma\mapsto \delta(\sigma)* [1_S,\sigma\mapsto 0_A] + i(o)* 0_A\big).
    \\
    &=  \big(o,  \sigma\mapsto \delta(\sigma)* 1_A + 0_A\big)
    =  (o,\delta)
\end{align*}
\noindent
Associativity:
\begin{align*}
    &\big((o_1,\delta_1)* (o_2,\delta_2)\big)*(o_3,\delta_3)
    \\ =\ &
    \big(o_1\cdot o_2,
        \sigma\mapsto \delta_1(\sigma)* [o_2,\delta_2]+ i(o_1)* \delta_2(\sigma)\big)
            * (o_3,\delta_3)
    \\ =\ &
    \big(o_1\cdot o_2 \cdot o_3, \sigma\mapsto \\ & \quad
        \big(\delta_1(\sigma)* [o_2,\delta_2]
            + i(o_1)* \delta_2(\sigma)\big)
             * [o_3,\delta_3]
          + i(o_1\cdot o_2)* \delta_3(\sigma)\big)
    \\ =\ &
    \big(o_1\cdot o_2 \cdot o_3, \sigma\mapsto \\ & \quad
        \delta_1(\sigma)* [o_2,\delta_2]* [o_3,\delta_3]
            + i(o_1)* \delta_2(\sigma)* [o_3,\delta_3]
          + i(o_1\cdot o_2)* \delta_3(\sigma)\big)
    \\ =\ &
    \big(o_1\cdot o_2 \cdot o_3, \sigma\mapsto \\ & \quad
        \delta_1(\sigma)* \big[(o_2,\delta_2)*(o_3,\delta_3)\big]
            + i(o_1)* \big(\delta_2(\sigma)* [o_3,\delta_3]
          + i(o_2)* \delta_3(\sigma)\big)\big)
    \\ =\ &
    (o_1,\delta_1)* \big((o_2,\delta_2)*(o_3,\delta_3)\big).
\end{align*}
So $S×A^\Sigma$ is both a monoid and a $S$-module. We still need to
establish the bilinearity of $*$ with respect to the $S$-Module
structure. For linearity of $*$ in the first argument, we use the 
same property in $A$:
\begin{align*}
  &
  \big((o_1,\delta_1) + (o_2,\delta_2)\big) * (o_3,\delta_3)
  \\ =\ &
  (o_1+o_2, \sigma\mapsto \delta_1(\sigma) + \delta_2(\sigma)) * (o_3,\delta_3)
  \\ =\ &
  \big((o_1+o_2)\cdot o_3, \sigma\mapsto
  (\delta_1(\sigma)+\delta_2(\sigma))* [o_3,\delta_3] + i(o_1+o_2)*
  \delta_3(\sigma)\big)
  \\ =\ &
  \big(o_1\cdot o_3 +o_2\cdot o_3, \sigma\mapsto \\ & \quad
  \delta_1(\sigma)* [o_3,\delta_3]+\delta_2(\sigma)* [o_3,\delta_3] +
  (i(o_1)+i(o_2))* \delta_3(\sigma)\big)
  \\ =\ &
  (o_1,\delta_1) * (o_3,\delta_3) + (o_2,\delta_2) * (o_3,\delta_3),
\end{align*}
\begin{align*}
  (0_S, \sigma\mapsto 0_A)*(o,\delta)
  &= \big(0_S\cdot o, \sigma\mapsto 0_A* [o,\delta] + i(0_S)* \delta(\sigma)\big)
  \\
  &= \big(0_S\cdot o, \sigma\mapsto 0_A* [o,\delta] + 0_A* \delta(\sigma)\big)
  = \big(0_S, \sigma\mapsto 0_A\big),
\end{align*}
\begin{align*}
  \big(s. (o_1,\delta_1)\big) * (o_2,\delta_2)
  &= \big(s\cdot o_1 \cdot o_2,
  \sigma\mapsto s. \delta_1(\sigma)* \fuse{o_2,\delta_2}
  + i(s\cdot o_1)* \delta_2(\sigma)\big)
  \\
  &= \big(s\cdot o_1 \cdot o_2,
  \sigma\mapsto s. (\delta_1(\sigma)* \fuse{o_2,\delta_2}
  + i(o_1)* \delta_2(\sigma))\big)
  \\
  &= s. \big((o_1,\delta_1)* (o_2,\delta_2)\big).
\end{align*}
Finally, linearity in the second argument of $*$ is proved using the
identities for $[-]$:
\begin{align*}
  &
  (o_1,\delta_1) * \big((o_2,\delta_2) + (o_3,\delta_3)\big)
  \\ =\ &
  (o_1\cdot (o_2+o_3), \sigma\mapsto \delta_1(\sigma)* [o_2+o_3, \sigma\mapsto 
  \delta_2(\sigma)+\delta_3(\sigma)] \\ &\quad + i(o_1)* (\delta_2(\sigma) + \delta_3(\sigma))
  \\ =\ &
  (o_1\cdot o_2+ o_1\cdot o_3, \sigma\mapsto 
  \delta_1(\sigma)* [o_2,\delta_2]
  + \delta_1(\sigma)* [o_3,\delta_3] \\ &\quad
  + i(o_1)* \delta_2(\sigma)
  + i(o_1)*  \delta_3(\sigma)
  \\ =\ &
  (o_1,\delta_1) * (o_2,\delta_2) +(o_1,\delta_1) * (o_3,\delta_3),
\end{align*}
\begin{align*}
    (o,\delta) * (0_S, \sigma\mapsto 0_A)
    &= \big(o\cdot 0_S,\sigma\mapsto  \delta(\sigma) * [0_S, \sigma\mapsto 0_A] + i(o)* 0_A\big)
    \\ &=
    \big(o\cdot 0_S, \sigma\mapsto \delta(\sigma) * 0_A + 0_A\big)
    = (0_S,\sigma\mapsto 0_A),
\end{align*}
\begin{align*}
  &
  (o_1,\delta_1) * \big(s. (o_2,\delta_2)\big)
  = (o_1,\delta_1) * (s\cdot o_2,\sigma\mapsto s. \delta_2(\sigma))\big)
  \\ =\ &
  \big(o_1\cdot (s\cdot o_2), \delta_1(\sigma)* [s\cdot o_2,\sigma\mapsto s.
  \delta_2(\sigma)] + i(o_1)* (s. \delta_2(\sigma))\big)
  \\ =\ &
  \big(o_1\cdot (s\cdot o_2), \delta_1(\sigma)* (s. [o_2,\delta_2]) +
  i(o_1)* (s. \delta_2(\sigma))\big)
  \\ =\ &
  \big(s\cdot (o_1\cdot o_2), s. (\delta_1(\sigma)* [o_2,\delta_2]) +
  s. (i(o_1)* \delta_2(\sigma))\big)
  = s. \big((o_1,\delta_1) * (o_2,\delta_2)\big).
\end{align*}
This completes the proof.
\end{proof}
We now prove that applying $\fuse{-}$ does not
change the behaviour of states after unfolding them using the
coalgebra structure. 
\begin{lemma}
  \label{bisimilarAlgebra}
  Let $c\colon A \to H^T A$ be a coalgebra in $\Set^T$, and let $w \in
  A$. 
  Then $w$ and $\fuse{c(w)}$ are behaviourally equivalent w.r.t.~$H$
  on \Set.
\end{lemma}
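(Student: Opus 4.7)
The plan is to exhibit $\fuse{-}\colon HA \to A$ as an $H$-coalgebra homomorphism from $(HA, Hc)$ to $(A, c)$ (after applying the forgetful functor $U\colon \Set^T \to \Set$, so we work only with the underlying $H$-coalgebras). Given this, the composite $\fuse{-} \cdot c\colon (A, c) \to (A, c)$ is an endomorphism in $\Coalg H$, and finality of $\nu H$ forces $c^\FinalCoalgDagger = c^\FinalCoalgDagger \cdot \fuse{-} \cdot c$. Evaluating at $w$ yields $c^\FinalCoalgDagger(w) = c^\FinalCoalgDagger(\fuse{c(w)})$, which is precisely the required behavioural equivalence $w \sim \fuse{c(w)}$.

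So the entire argument reduces to verifying the equation
\[
    c \cdot \fuse{-} = H\fuse{-} \cdot Hc \colon HA \longrightarrow HA.
\]
To evaluate the left-hand side at $(o, \delta) \in S \times A^\Sigma$, I use that $c$ is a $T$-algebra morphism, hence preserves every operation of Table~\ref{tab:lifting}: the $S$-module operations, the monoid multiplication, and the $\Sigma$-pointing. In particular $c(1_A) = (1_S, \sigma \mapsto 0_A)$ and $c(j(\tau)) = (0_S, \chi_\tau)$, so that $c(i(o)) = o.c(1_A) = (o, \sigma \mapsto 0_A)$. For each $\tau$, applying the formula for $*$ in $HA$ to $(0_S, \chi_\tau) * c(\delta(\tau))$ and using $0_S\cdot - = 0_S$ and $i(0_S) = 0_A$ collapses the coproduct-components of Table~\ref{tab:lifting} to $(0_S, \sigma \mapsto \chi_\tau(\sigma) * \fuse{c(\delta(\tau))})$, which picks out $\fuse{c(\delta(\tau))}$ at $\sigma = \tau$ and vanishes elsewhere. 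Summing over $\tau$ and adding the $i(o)$-contribution yields
\[
    c(\fuse{o, \delta}) \;=\; \bigl(o,\ \sigma \mapsto \fuse{c(\delta(\sigma))}\bigr).
\]
The right-hand side $H\fuse{-} \cdot Hc = (\id_S \times \fuse{-}^\Sigma) \cdot (\id_S \times c^\Sigma)$ unfolds directly to the same expression, establishing the equation.

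The main obstacle is this single calculation: the monoid multiplication in $HA$ from Table~\ref{tab:lifting} is asymmetric and itself refers to $\fuse{-}$, so one must carefully apply the absorbing identities ($0_A * x = 0_A$, $1_A * x = x$, $i(0_S) = 0_A$) to see that all spurious cross-terms cancel. Once the key equation is in place, the rest is formal: a coalgebra endomorphism of $(A,c)$ combined with finality of $\nu H$ pins down $c^\FinalCoalgDagger$ up to the desired identification, and behavioural equivalence in \Set is by definition equality of images in $\nu H$.
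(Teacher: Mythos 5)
Your proposal is correct and follows essentially the same route as the paper: the paper shows directly that $h=\fuse{c(-)}$ is an $H$-coalgebra endomorphism of $(A,c)$ and concludes by finality, using exactly your two auxiliary identities $c(i(s))=(s,\sigma\mapsto 0_A)$ and $c(j(\tau))*c(v)=\bigl(0_S,\sigma\mapsto\chi_\tau(\sigma)*\fuse{c(v)}\bigr)$. Your phrasing of the key equation as $c\cdot\fuse{-}=H\fuse{-}\cdot Hc$ on all of $HA$ (i.e.\ $\fuse{-}$ is a homomorphism $(HA,Hc)\to(A,c)$) is a marginally stronger statement of the same computation and is equally valid.
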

\begin{proof}
  We show that $h= \fuse{c(-)}\colon A \to A$ is an $H$-coalgebra
  homomorphism. This implies that $c^\FinalCoalgDagger = c^\FinalCoalgDagger \cdot h$, for
  which we obtain the desired result: $c^\FinalCoalgDagger (w) = c^\FinalCoalgDagger \cdot
  h(w) = c^\FinalCoalgDagger \big(\fuse{c(w)}\big)$. 

  First we use that $c = \langle o, \delta\rangle\colon A \to S \times
  A^\Sigma$ is an $\Poly{-+\Sigma}$-algebra morphism to
  see that for every $s \in S$ we have that
  \begin{equation}\label{eq:aux1}
    c(i(s)) = c(s.1_A) = s.(c(1_A)) = s.(1_S, \sigma \mapsto 0_A) = (s, \sigma
    \mapsto 0_A).
  \end{equation}
  Furthermore, for every $\tau\in \Sigma$ and $v \in A$ we prove that 
  \begin{equation}\label{eq:aux2}
    c(j(\tau)) * c(v) = \big(0_S, \sigma\mapsto \chi_\tau(\sigma) * \fuse{c(v)}\big),
  \end{equation}
  where recall that $\chi_\tau\colon \Sigma \to A$ with
  $\chi_\tau(\tau) = 1$ and $\chi_\tau(\sigma) =0$ for
  $\sigma \neq \tau$. Indeed, we compute
    \begin{align*}
        c(j(\tau)) * c(v)
        &= (0_S,\chi_\tau) * c(v)
        = (0_S,\chi_\tau) * \big(o(v), \delta(v)\big)
        \\ &
        = \big(0_S\cdot o(v),
            \sigma\mapsto \chi_\tau(\sigma) * \fuse{o(v), \delta(v)}
                     + i(0_S) * \delta(v)(\sigma) \big)
        \\ &
        = \big(0_S, \sigma\mapsto \chi_\tau(\sigma) * \fuse{o(v),\delta(v)}\big)
        \\ & 
        = \big(0_S, \sigma\mapsto \chi_\tau(\sigma) * \fuse{c(v)}\big).  
    \end{align*}
    We now prove that $h$ is a coalgebra homomorphism. Let $w \in A$
    and compute: 
    \begin{align*}
        c \cdot h(w) 
        &= c(\fuse{c(w)})
        = c(\fuse{o(w), \delta(w)})
        \\ &
        = c\left(i(o(w)) + \sum_{\tau\in \Sigma} \big(j(\tau) * \delta(w)(\tau)\big)\right)
        \\ &
        = c\big(i(o(w))\big) + \sum_{\tau\in \Sigma}
        c\big(j(\tau)\big) * c\big(\delta(w)(\tau)\big)
        \\ &
        \overset{\mathclap{\eqref{eq:aux1},\eqref{eq:aux2}}}{=}~~~ (o(w), \sigma\mapsto 0_A) +
          \sum_{\tau\in \Sigma} \big(0_S, \sigma\mapsto \chi_\tau(\sigma) * \fuse{c(\delta(w)(\tau))}\big)
        \\ &
        = (o(w), \sigma\mapsto 0_A) +
          \left(0_S, \sigma\mapsto \sum_{\tau\in \Sigma} \chi_\tau(\sigma) *
          \fuse{c(\delta(w)(\tau))}\right)
        \\ &
        = (o(w), \sigma\mapsto 0_A) +
          \left(0_S, \sigma\mapsto \fuse{c(\delta(w)(\sigma))}\right)
        \\ &
        = (o(w), \sigma\mapsto \fuse{c(\delta(w)(\sigma))})
        \\ &
        = (\id_S \times h^\Sigma)(o(w),\delta(w))
        \\ &
        = Hh \cdot c(w).
    \end{align*}
    This completes the proof. 
\end{proof}

Given a coalgebra $c\colon X\to H\Poly{X}$,
\citet[Section~4]{jcssContextFree} determinize $c$ to the
coalgebra $\hat c = \fpair{\hat o,\hat\delta}\colon \Poly{X}\to H\Poly{X}$
defined as follows: first, one extends $\fpair{o,\delta}$ to
$\fpair{\bar o, \bar\delta}\colon X^* \to H\Poly{X}$ by the following
inductive definition (in the following we will often write $\delta$
and its relatives in uncurried form):
\[
  \begin{array}{r@{\ }c@{\ }l@{\qquad}r@{\ }c@{\ }l}
    \bar o (\epsilon) & = & 1 
    &
    \bar\delta(\epsilon, \sigma) & = & 0 \\
    \bar o(xu) & = & o(x)\cdot \bar o(u) 
    &
    \bar\delta(xu, \sigma) & = & \delta(x,\sigma) * u + i(o(x)) *
                            \bar\delta(u,\sigma),
  \end{array} 
\]
where $x \in X$, $u \in X^*$, $\sigma \in \Sigma$, and
$i\colon S \to \Poly{X}$. Second, one uses that
$H\Poly{X} = S \times \Poly{X}^A$ is an $S$-module with the usual
coordinatewise structure on the product and freely extends
$\fpair{\bar o,\bar\delta}$ to the free $S$-module
$S_\omega^{(X^*)} = \Poly{X}$ on $X^*$ to obtain
$\hat c = \fpair{\hat o, \hat\delta}$. It follows that $\hat c$ is an
$S$-module homomorphism, and moreover it is shown in \emph{loc.~cit.}~that
for every $v,w\in \Poly{X}$:
\begin{align}%
  \label{WEAproperty}%
  \!\hat o(v*w) = \hat o(v) \cdot \hat o(w)
  ~\text{and}~
  \hat\delta(v*w,\sigma) = \hat\delta(v,\sigma)*w+i(\hat o(v))*\hat\delta(w,\sigma).
\end{align}
However, for the given coalgebra $(X,c)$ we may also form the coalgebra
\[
X \xrightarrow{c} H \Poly{X} \xrightarrow{\Poly{\inl}} H\Poly{X+\Sigma}
\]
and obtain (abusing notation slightly) a coalgebra $c^\sharp\colon
\Poly{X+\Sigma} \to H\Poly{X + \Sigma}$ by performing the generalized
powerset construction w.r.t.~$T$. 

In \autoref{CFSameCoalgebra}, we show that the
property~\eqref{WEAproperty} together with \autoref{bisimilarAlgebra}
and the definition of $*$ imply that $\hat c$ and $c^\sharp$ are
essentially the same coalgebra structures.
\begin{remark}
  \label{rem:bisim}
  Recall from \autoref{sec:coalgs} the notion of behavioural
  equivalence. One way to establish behavioural equivalence of two
  states is via a bisimulation. For a set functor $H$, a bisimulation
  between two $H$-coalgebras $(C,c)$ and $(D,d)$ is a relation
  $R \subseteq C \times D$ such that $R$ carries a coalgebra structure
  $r\colon R \to HR$ such that the two projections maps $\pi_0\colon R \to C$ and
  $\pi_1\colon R \to D$ are coalgebra homomorphisms. The greatest bisimulation
  on a given coalgebra is called \emph{bisimilarity}.

  Whenever two states $x \in C$ and $y \in D$ are \emph{bisimilar},
  i.e.~contained in any bisimulation $R \subseteq C \times D$, then they are
  behaviourally equivalent.  The converse holds for every functor $H$
  preserving weak pullbacks.

  In \autoref{CFSameCoalgebra}, we will actually use a more refined
  bisimulation proof method, namely, bisimulation up
  to behavioural equivalence. Up-to-techniques such as this one have
  been studied by Rot et al.~\cite{RotEA13}. Here one considers a
  function $f\colon \Pot(C \times D) \to \Pot(C\times D)$, and a
  \emph{bisimulation up to $f$} is a relation $R \subseteq C \times D$
  such that there is a map $r\colon R \to H(f(R))$ making the
  following diagram commutative:
  \[
    \begin{tikzcd}
      C \arrow{d}{c} & R \arrow{l}[above]{\pi_0}\arrow{r}{\pi_1} \arrow{d}{r} & D
      \arrow{d}{d} \\
      HC & H(f(R))\arrow{l}[swap]{H\pi_0} \arrow{r}{H\pi_1} & HD
    \end{tikzcd}
  \]
  We are interested in the function $f$ defined by
  $f(R) = \Rbeheq$, where $\sim$ denotes the behavioural equivalences on $C$ and
  $D$, respectively. Let us spell out the meaning of the above
  diagram for the case of $HX = S \times X^\Sigma$ on $\Set$. Given
  two coalgebras $\fpair{o,\delta}\colon X \to S \times X^\Sigma$ and
  $\fpair{o',\delta'}\colon X' \to S \times (X')^\Sigma$ a bisimulation up
  to behavioural equivalence is a relation $R \subseteq X \times X'$
  such that for all $x \mathbin{R} x'$ we have
  \begin{equation}\label{eq:bisim}
    o(x) = o'(x')
    \qquad\text{and}\qquad
    \text{$\delta(x, \sigma) \Rbeheq \delta'(x',\sigma)$ for every $\sigma \in
      \Sigma$,}
  \end{equation}
  with $\delta$ and $\delta'$ written in their uncurried form.

  It follows from the results of Rot et
  al.~\cite{RotEA13} that whenever two states $x \in C$ and $y \in D$
  are contained in a bisimulation up to behavioural equivalence then
  they are behaviourally equivalent.   
\end{remark}
\begin{lemma}\label{CFSameCoalgebra}
  For every coalgebra $c\colon X \to H\Poly{X}$ in $\Set$, $u$ in
  $(\Poly{X},\hat c)$ and $\Poly{\inl}(u)$ in
  $(\Poly{X+\Sigma}, c^\sharp)$ are behaviourally equivalent.
\end{lemma}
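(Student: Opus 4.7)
The plan is to show that the composite $\phi := c^{\sharp\FinalCoalgDagger}\cdot \Poly{\inl}\colon \Poly{X} \to \nu H$ coincides with the unique coalgebra morphism $\hat c^\FinalCoalgDagger\colon (\Poly{X}, \hat c) \to (\nu H, \tau)$; since $\hat c^\FinalCoalgDagger(u)$ is the behaviour of $u$ under $\hat c$ and $\phi(u)$ is the behaviour of $\Poly{\inl}(u)$ under $c^\sharp$, this equality gives exactly the required behavioural equivalence. By finality of $(\nu H, \tau)$, it suffices to verify that $\phi$ is an $H$-coalgebra homomorphism, which unfolds to the two conditions, for every $u \in \Poly{X}$ and every $\sigma\in\Sigma$ (writing $c^\sharp = \fpair{\hat o_\sharp, \hat\delta_\sharp}$): (a) $\hat o(u) = \hat o_\sharp(\Poly{\inl}(u))$, and (b) $\Poly{\inl}(\hat\delta(u, \sigma))\sim \hat\delta_\sharp(\Poly{\inl}(u),\sigma)$ in $(\Poly{X+\Sigma}, c^\sharp)$.

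Since $\hat c$ and $c^\sharp\cdot \Poly{\inl}$ are both $S$-module homomorphisms on $\Poly{X}$, it suffices to verify (a) and (b) on generators, i.e.~for words $u \in X^*$, which I would do by induction on word length. The base case $u=\epsilon$ is immediate: both determinizations send the monoid unit to $(1_S,\sigma\mapsto 0)$. For the inductive step $u = xv$ with $x\in X$ and $v\in X^*$, the property~\eqref{WEAproperty} yields
\[
    \hat\delta(xv, \sigma) = \delta(x,\sigma) * v + i(o(x)) * \hat\delta(v, \sigma),
\]
while the identity $\Poly{\inl}(xv) = \Poly{\inl}(x) * \Poly{\inl}(v)$ combined with the monoid multiplication formula from \autoref{tab:lifting} and the monoid-homomorphism property of $c^\sharp$ yields
\[
    \hat\delta_\sharp(\Poly{\inl}(xv), \sigma) = \Poly{\inl}(\delta(x,\sigma)) * \fuse{c^\sharp(\Poly{\inl}(v))} + i(o(x)) * \hat\delta_\sharp(\Poly{\inl}(v), \sigma);
\]
the analogous computations for the outputs agree directly. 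Applying $\Poly{\inl}$ to $\hat\delta(xv, \sigma)$ (which commutes with $*$, $+$, and $i$, being an $S$-algebra homomorphism) reveals that $\Poly{\inl}(\hat\delta(xv, \sigma))$ and $\hat\delta_\sharp(\Poly{\inl}(xv), \sigma)$ differ in exactly two substituted positions: $\Poly{\inl}(v)$ versus $\fuse{c^\sharp(\Poly{\inl}(v))}$, and $\Poly{\inl}(\hat\delta(v, \sigma))$ versus $\hat\delta_\sharp(\Poly{\inl}(v), \sigma)$. The first pair is behaviourally equivalent by \autoref{bisimilarAlgebra}, and the second by the induction hypothesis.

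The main obstacle is lifting these componentwise behavioural equivalences to the full sum-product expressions. For this I plan to use that $c^{\sharp\FinalCoalgDagger}$ is not merely an $H$-coalgebra morphism but in fact a $T$-algebra homomorphism into the canonical $T$-algebra structure on $\nu H$, being the unique coalgebra morphism in $\Set^T$ from the $H^T$-coalgebra $c^\sharp$ into $\nu H^T$ (whose carrier is $\nu H$ with its lifted $T$-algebra structure). Consequently $\sim$ on $\Poly{X+\Sigma}$ is a congruence for the $S$-algebra operations $+$, $*$, scalar multiplication, and the $\Sigma$-pointing, so substituting $\sim$-equivalent subterms into a polynomial expression preserves behavioural equivalence. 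This closes the induction and yields~(b), completing the proof.
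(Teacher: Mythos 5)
Your proposal is correct, and the decisive computations in the inductive step --- comparing $\Poly{\inl}(\hat\delta(xv,\sigma))$ with $\hat\delta_\sharp(\Poly{\inl}(xv),\sigma)$ via \eqref{WEAproperty}, the product formula of \autoref{tab:lifting}, \autoref{bisimilarAlgebra}, and the induction hypothesis --- are exactly those of the paper's proof. The framing, however, differs in two genuine ways. First, the paper exhibits the graph of $\Poly{\inl}$ as a bisimulation up to behavioural equivalence and invokes the soundness of that up-to technique (Rot et al.~\cite{RotEA13}); you instead show directly that $c^{\sharp\FinalCoalgDagger}\cdot\Poly{\inl}$ is a coalgebra homomorphism $(\Poly{X},\hat c)\to(\nu H,\tau)$ and conclude by finality. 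Your conditions (a) and (b) are precisely the up-to-bisimulation conditions, but the soundness step is replaced by the uniqueness of the final morphism, which makes the argument self-contained. Second, the paper inducts over the term algebra $MX$ for the signature of $S$-algebras, treating each operation ($0$, $+$, scalars, $1$, $*$) as a separate case; you exploit that $\Poly{X}$ is the free $S$-module on $X^*$ to reduce to monomials by linearity and then induct on word length, which shortens the case analysis at the cost of making the congruence property of $\sim$ do double duty --- once for the reduction to generators (condition (b) only holds up to $\sim$, so two $S$-module maps agreeing on generators up to $\sim$ need $\sim$ to be a module congruence, or equivalently one postcomposes both with $c^{\sharp\FinalCoalgDagger}$ and compares honest $S$-module maps into $\nu H$), and once inside the inductive step. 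That congruence property, which you correctly derive from $c^{\sharp\FinalCoalgDagger}$ being a $T$-algebra morphism into $\nu H^T$, is the same ingredient the paper uses when it asserts that $R$ and $\sim$ are congruences of $S$-algebras; it would be cleaner to establish it before the sentence ``it suffices to verify (a) and (b) on generators,'' since that reduction already depends on it. With that reordering the proof is complete.
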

\begin{proof}
  We use that the free $S$-algebra $\Poly{X}$ is a
  quotient of the algebra $MX$ of terms for the signature of
  $S$-algebras via the surjective map $q_X\colon MX \twoheadrightarrow
  \Poly{X}$. Similarly, we have $q_{X+\Sigma}\colon M(X+\Sigma)
  \twoheadrightarrow \Poly{X+\Sigma}$. In fact, $q\colon M \to
  \Poly{-}$ is a natural transformation (even a monad morphism), and
  therefore we have the following commutative square:
  \[
    \begin{tikzcd}
      MX \arrow{r}{q_X} \arrow{d}{M\inl}
      &
      \Poly{X}
      \arrow{d}{\Poly{\inl}}
      \\
      M(X+\Sigma) \arrow{r}{q_{X+\Sigma}}
      &
      \Poly{X+\Sigma}
    \end{tikzcd}
  \]
  Note that $M\inl$ is the embedding of terms over $X$ into the terms
  over $X + \Sigma$. We now prove that for every term $t \in MX$, its
  equivalence classes in $\Poly{X}$ and $\Poly{X+\Sigma}$ are
  behavioural equivalent. This is done by showing that the relation
  \[
    R = \{ (q_X(t),\Poly{\inl}\cdot q_X(t)) \mid t \in MX\} \subseteq \Poly{X} \times \Poly{X +
      \Sigma}
  \]
  is a bisimulation up to behavioural equivalence (see
  \autoref{rem:bisim}). We will abuse notation and often denote the
  equivalence class of a term $t \in MX$ in $\Poly{X}$ or
  $\Poly{X+\Sigma}$ by $t$ again.  Put $c = \fpair{o,\delta}$ and
  $c^\sharp = \fpair{o^\sharp,\delta^\sharp}$.

  Note first that $R$ is nothing but the map $\Poly{\inl}\colon \Poly{X}
  \to \Poly{X+\Sigma}$ considered as a relation. Hence, since this map
  is a morphism of $S$-algebras we have that $R$ is a congruence
  (w.r.t.~$S$-algebra operations). Now we proceed by induction over
  the terms $t \in MX$: 
  
  \begin{enumerate}
    \item \emph{Base Case:} For every $x\in X$, we have that
      $\hat o (x) = o(x) = o^\sharp(x)$ and
      $\hat \delta(x,\sigma) = \delta(x,\sigma)$ whereas
      $\delta^\sharp(x,\sigma) = \Poly{\inl}
      (\delta(x,\sigma))$. Thus,~\eqref{eq:bisim} holds for $t = x$. 

    \item \emph{Induction step for the $S$-module structure:} The
      definition of $\hat c = \fpair{\hat o,\hat\delta}$ on $S$-Module
      operations is coordinatewise \cite[Sect.~3+4]{jcssContextFree}
      and thus identical to the definition of
      $c^\sharp = \fpair{o^\sharp,
        \delta^\sharp}$. Hence,~\eqref{eq:bisim} holds for $t =
      t_1+t_2$, $t = 0$, and $t = s.t'$ for every $s \in S$.

  \item \emph{Induction step for the monoid structure:} The neutral
    element is mapped by $\hat c$ to $(1,\sigma\mapsto 0)$
    \cite[Sect.~4]{jcssContextFree}, and this is identical to the
    definition $c^\sharp$. Thus,~\eqref{eq:bisim} holds for $t = 1$.

    Now suppose that $v,w \in \Poly{X}$ and
    $v',w'\in \Poly{{X+\Sigma}}$, and assume that $v \mathbin{R} v'$,
    $w \mathbin{R} w'$. By induction hypothesis, we have for every $\sigma \in
    \Sigma$, 
    \[
      \begin{array}{r@{\ }c@{\ }l@{\qquad}r@{\ }c@{\ }l}
        \hat o(v) & = & o^\sharp(v'), & 
        \hat\delta(v,\sigma) & \Rbeheq & \delta^\sharp(v',\sigma),
        \\
       \hat o(w) & = & o^\sharp(w'), & 
       \hat\delta(w,\sigma) & \Rbeheq & \delta^\sharp(w',\sigma).
      \end{array}
    \]
    Then we clearly have, using the induction hypothesis in the second step and the definition of $*$ in the last one, that 
    \[
        \hat o(v*w)
        \overset{\eqref{WEAproperty}}{=} \hat o(v)\cdot \hat o(w)
        \overset{\text{IH}}{=} o^\sharp(v') \cdot o^\sharp(w')
        = o^\sharp(v'* w').
    \]
    Moreover, for every $\sigma \in \Sigma$ we have
    \[
      \begin{array}{r@{\ }c@{\ }l}
        \hat\delta(v*w, \sigma)
        & \overset{\eqref{WEAproperty}}{=} 
        & \hat\delta(v,\sigma)*w+i(\hat o(v))*\hat\delta(w,\sigma)
        \\ 
        & \mathclap{\Rbeheq} 
        & \ \ \delta^\sharp(v',\sigma)*w'+i(o^\sharp(v')) * \delta^\sharp(w',\sigma)
        \\ 
        & \overset{\mathclap{\text{\autoref{bisimilarAlgebra}}}}{\sim}
        & \ \quad \delta^\sharp(v',\sigma)*\fuse{o^\sharp(w'),\delta^\sharp(w')}
          +i(o^\sharp(v'))*\delta^\sharp(w',\sigma)
        \\ 
        & = 
        & \delta^\sharp(v'*w', \sigma),
      \end{array}
    \]
    where the second step uses the induction hypothesis as well as the
    fact that $R$ and $\sim$ are congruences of $S$-algebras, and the
    last equation uses the definition of $*$
    again. Thus,~\eqref{eq:bisim} holds for $t = t_1 * t_2$, which
    completes the proof.\qedhere
  \end{enumerate}
\end{proof}
\begin{corollary} \label{lffPowerSeries}
  The locally finite fixpoint $\vartheta H^T$ is carried by the set of
  all constructively $S$-algebraic power-series.
\end{corollary}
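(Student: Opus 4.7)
The plan is to combine \autoref{prop:LFFunion} with the coalgebraic characterization of constructively $S$-algebraic series from \citet[Theorem~23]{jcssContextFree}, using \autoref{CFSameCoalgebra} to bridge the two a priori different notions of determinization.

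By \autoref{prop:LFFunion}, $\LFF H^T$ is carried by the union $\bigcup_{c\colon X \to HTX,\, X \text{ finite}} c^{\sharp\FinalCoalgDagger} \cdot \eta^T_X[X] \subseteq \nu H = \FPS{\Sigma}$. On the other hand, \citet[Theorem~23]{jcssContextFree} identifies the constructively $S$-algebraic power series with the elements $\hat c^\FinalCoalgDagger(x) \in \FPS{\Sigma}$ for finite $c\colon X \to H\Poly{X}$ and $x \in X$, where $\hat c$ is the $S$-linear determinization described just before \autoref{CFSameCoalgebra}. The proof reduces to identifying these two sets of power series in $\FPS{\Sigma}$.

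For the inclusion from constructively $S$-algebraic series into $\LFF H^T$, I start from a finite $c\colon X \to H\Poly{X}$ and $x \in X$ and form the $HT$-coalgebra $c' = H\Poly{\inl} \cdot c\colon X \to HTX$. Since $\Poly{\inl}(x) = \eta^T_X(x)$ in $TX = \Poly{X+\Sigma}$, \autoref{CFSameCoalgebra} yields
\[
  \hat c^\FinalCoalgDagger(x) \;=\; (c')^{\sharp\FinalCoalgDagger}\bigl(\eta^T_X(x)\bigr) \;\in\; \LFF H^T.
\]
For the reverse inclusion, given finite $c\colon X \to HTX = H\Poly{X+\Sigma}$ and $s \in X$, I propose to exhibit $c^{\sharp\FinalCoalgDagger}(\eta^T_X(s))$ as the behaviour of a finite $H\Poly{-}$-coalgebra. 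The natural choice is to set $Y = X+\Sigma$ and define $d\colon Y \to H\Poly{Y}$ by $d|_X = c$ (absorbing the $\Sigma$-pointed structure of $TX$ into the state set) and $d(\sigma) = (0_S, \chi_\sigma)$ for $\sigma \in \Sigma$, with $\chi_\sigma(\sigma) = 1_{\Poly{Y}}$ and $\chi_\sigma(\tau) = 0$ otherwise. This matches exactly the value $c^\sharp(\sigma)$ forced on $\sigma \in \Sigma \subseteq \Poly{X+\Sigma}$ by $c^\sharp$ being a $T$-algebra morphism preserving the $\Sigma$-pointing recorded in \autoref{tab:lifting}. One then has to verify that $c^{\sharp\FinalCoalgDagger}(\eta^T_X(s)) = \hat d^\FinalCoalgDagger(s)$, which exhibits this element as a constructively $S$-algebraic series.

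The main obstacle is precisely this last verification. The two extensions $\hat d$ (only $S$-linear, following the inductive recipe of \citet{jcssContextFree}) and $c^\sharp$ (a full $T$-algebra morphism) agree on the generators $Y \subseteq \Poly{Y}$, but need not coincide on all of $\Poly{Y}$: the map $\hat d$ does not in general respect the monoid multiplication $*$ that $c^\sharp$ is forced to preserve. The tool to bypass this discrepancy is \autoref{CFSameCoalgebra} applied to $d$ itself, together with the universal properties of the free $S$-algebra and free $T$-algebra constructions; this reduces the comparison of $\hat d$ and $c^\sharp$ to a behavioural equivalence on the elements of $Y$, where agreement is immediate by construction.
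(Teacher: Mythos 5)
Your overall strategy coincides with the paper's: both arguments combine \eqref{LFFunion} with the characterization of constructively $S$-algebraic series from \cite[Theorem~23]{jcssContextFree}, bridged by \autoref{CFSameCoalgebra}, and your forward inclusion is exactly the paper's chain of equalities $\hat c^\FinalCoalgDagger\cdot\eta_X = c^{\sharp\FinalCoalgDagger}\cdot\Poly{\inl}\cdot\eta_X = c^{\sharp\FinalCoalgDagger}\cdot\eta^T_X$. Where you genuinely differ is in the reverse inclusion. The paper compresses it into ``the desired result now follows by \eqref{LFFunion}'', silently identifying the union over \emph{all} $HT$-coalgebras $x\colon X\to HTX$ with the union over those of the special form $H\Poly{\inl}\cdot c$ for $c\colon X\to H\Poly{X}$; you correctly observe that an arbitrary $x\colon X\to H\Poly{X+\Sigma}$ need not be of that form and must be repackaged as an $H\Poly{-}$-coalgebra $d$ on $Y=X+\Sigma$, and your choice $d|_X=x$, $d(\sigma)=(0_S,\chi_\sigma)$ is the right one. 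The single step you leave open --- that $x^{\sharp\FinalCoalgDagger}(\eta^T_X(s))=\hat d^\FinalCoalgDagger(s)$ --- does go through with the tools you name, but the decisive ingredient should be made explicit: \autoref{CFSameCoalgebra} applied to $d$ only relates $\hat d$ on $\Poly{Y}$ to the determinization $d^\sharp$ on $TY=\Poly{(X+\Sigma)+\Sigma}$, not to $x^\sharp$ on $TX$, so you still need the $T$-algebra morphism $TY\to TX$ extending $[\eta^T_X,\,j]\colon X+\Sigma\to TX$, where $j$ is the $\Sigma$-pointing of $TX$. Checking on the generators $X+\Sigma$ --- using that $x^\sharp$ preserves the pointing, which is precisely where your stipulation $d(\sigma)=(0_S,\chi_\sigma)$ enters --- shows that this morphism is an $H^T$-coalgebra homomorphism $(TY,d^\sharp)\to(TX,x^\sharp)$; composing with the unique morphisms into $\nu H^T$ then transports $d^{\sharp\FinalCoalgDagger}(\eta^T_Y(s))$ to $x^{\sharp\FinalCoalgDagger}(\eta^T_X(s))$, closing the argument. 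In short: your proof is correct modulo this one explicitly constructible homomorphism, and it is in fact more careful than the paper's own two-line proof on the inclusion $\LFF H^T\subseteq\{\text{constructively $S$-algebraic series}\}$.
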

\begin{proof}
  From \autoref{CFSameCoalgebra} we conclude that $\hat c^\FinalCoalgDagger = c^{\sharp\FinalCoalgDagger}\cdot \Poly{\inl}$ and thus their
  images in $\nu H$ are identical.
  
  By \cite[Theorem~23]{jcssContextFree}, a formal power series is
  constructively $S$-algebraic if and only if it is in the image of
  some
  \[
    \hat c^\FinalCoalgDagger \cdot \eta_X = c^{\sharp\FinalCoalgDagger}\cdot \Poly{\inl} \cdot
  \eta_X = c^{\sharp\FinalCoalgDagger}\cdot \eta^T_X,
  \]
  where $X$ is finite and $\eta_X\colon X \to \Poly{X}$ is the unit of
  the monad $\Poly{-}$.

  The desired result now follows by \eqref{LFFunion}.
\end{proof}

\subsection{Courcelle's Algebraic Trees}
\label{sec:alg}
For a fixed signature $\Sigma$ of so called \emph{givens}, a \emph{recursive
program scheme} (or \emph{rps}, for short) contains mutually recursive definitions of
new operations $\varphi_1,\ldots,\varphi_k$ (with respective arities
$n_1,\ldots,n_k$). The recursive definition of $\varphi_i$ may involve
symbols from $\Sigma$, operations $\varphi_1,\ldots,\varphi_k$ and
$n_i$ variables $x_1,\ldots,x_{n_i}$. The (uninterpreted) solution of an rps is
obtained by unfolding these recursive definitions, producing a possibly
infinite $\Sigma$-tree over $x_1,\ldots,x_{n_i}$ for each operation
$\varphi_i$. The following example shows an rps over the signature $\Sigma
=\{\nicefrac{\star}{0}, \nicefrac{×}{2},
\nicefrac{+}{2}\}$ and its solution:
\[
  \varphi(z) = z + \varphi(\star \times z)
  \qquad\qquad
  \begin{tikzpicture}[lambdatree,mathnodes, level distance=5mm]
    \node (z) {+ }
    child { node {\mathllap{\phantom{X}}z} }
    child { node[yshift=0mm] {+}
            child { node[xshift=-1mm] {×}
                child { node {\star} }
                child { node {z} } }
            child { node[xshift=1mm] {+}
                child { node {×}
                    child { node {\star} }
                    child { node {×}
                        child { node {\star} }
                        child { node {z} } } }
                child { node (ddots) {\ } }
            }
            }
    ;
            \node[anchor=north west] at (ddots.north) {\vdots};
\end{tikzpicture}
\]

In general, an \emph{algebraic $\Sigma$-tree} is a $\Sigma$-tree which
is definable by an rps over $\Sigma$ (see
Courcelle~\cite{courcelle}). Generalizing from a signature to a
finitary endofunctor $H\colon\C\to \C$ on an lfp category,
\citet{secondordermonad} describe an rps as a coalgebra for a functor
$\Hf$ on the category $H/\Mndf(\C)$ whose objects are finitary $H$-pointed
monads on $\C$, i.e.~finitary monads $M$ together with a natural
transformation $H\to M$. They introduce the \emph{context-free} monad
$C^H$ of $H$, which is an $H$-pointed monad that is a subcoalgebra of
the final coalgebra for $\Hf$ and which is the monad of Courcelle's
algebraic $\Sigma$-trees in the special case where $\C = \Set$ and $H$
is the polynomial functor associated to the signature $\Sigma$. We will
now prove that this monad is the LFF of $\Hf$, and thereby we characterize
it by a universal property; this solves an open problem
in~\cite{secondordermonad}.

The setting is again an instance of the generalized powerset
construction, but this time with the category of finitary endofunctors
on $\C$ as the base category in lieu of $\Set$. 

\begin{assumption}
  We assume that $\C$ is an lfp category in
  which the coproduct injections are monic and a coproduct of two monos is also monic.
  Moreover, $H\colon \C\to \C$ is a finitary mono-preserving endofunctor. 
\end{assumption}

Denote by $\Funf(\C)$ the category of finitary endofunctors on $\C$,
which is an lfp category (see~\cite{adamek1994locally}).

Then $H$ induces an endofunctor $H\cdot(-)+\Id$
on $\Funf(\C)$, denoted $\dot H$ and mapping an endofunctor $V$ to the functor
$X\mapsto HVX+X$.
This functor $\dot H$ gets precomposed with a monad on $\Funf(\C)$ as
we now explain.  
\begin{proposition}[Free monad,
  \cite{freealgebras,freetriples}]\label{prop:mon}
  For every object $X$ of $\C$ there exists a free $H$-algebra $F^HX$
  on $X$. Moreover, the object assignment $X \mapsto F^HX$ gives rise
  to a finitary monad on $\C$, and this monad is the \emph{free monad}
  on $H$.
\end{proposition}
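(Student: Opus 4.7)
My plan is to construct $F^HX$ as the initial algebra of the auxiliary endofunctor $H(-)+X\colon \C\to \C$. Since $H$ is finitary and $\C$ is cocomplete, so is $H(-)+X$, and hence by Adámek's initial-chain construction (or equivalently by the general existence theorem for initial algebras of finitary endofunctors on lfp categories, cf.~\cite[Theorem~6.10]{AdamekEA17} cited earlier), this initial algebra exists. Its structure map is of the form $[\varphi_X,\eta_X]\colon HF^HX + X \to F^HX$, which simultaneously yields an $H$-algebra structure $\varphi_X\colon HF^HX\to F^HX$ and a morphism $\eta_X\colon X\to F^HX$. I would then verify the universal property of the free $H$-algebra on $X$ directly: given $\alpha\colon HA\to A$ and $f\colon X\to A$, the pair $[\alpha,f]\colon HA+X\to A$ is an algebra for $H(-)+X$, and initiality yields a unique $H(-)+X$-algebra morphism $F^HX\to A$, which by splitting the coproduct source is exactly a unique $H$-algebra morphism $\hat f\colon F^HX\to A$ with $\hat f\cdot\eta_X=f$.

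Next, I would promote the assignment $X\mapsto F^HX$ to a functor using the universal property: for $g\colon X\to Y$, apply freeness to $\eta_Y\cdot g\colon X\to F^HY$ to obtain $F^Hg$. The unit of the monad is $\eta_X$ by construction. The multiplication $\mu_X\colon F^HF^HX\to F^HX$ arises by applying the universal property of $F^HF^HX$ (seen as the free $H$-algebra on $F^HX$) to the identity $\id\colon F^HX\to F^HX$ regarded as a morphism into the $H$-algebra $(F^HX,\varphi_X)$. The monad laws then follow from the uniqueness clauses in the universal property, a standard diagram chase that I would perform by verifying each law after precomposition with the appropriate unit. Naturality of $\eta$ and $\mu$ is immediate from the same uniqueness arguments.

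For finitarity, I would use that finitary endofunctors on an lfp category are closed under coproducts with constants (i.e.~$H(-)+X$ is finitary in $X$ jointly with $-$), and that the initial-algebra construction for a finitary endofunctor yields an object that depends finitarily on parameters; concretely, since the initial chain stabilises at $\omega$ and each stage $W_n$ depends finitarily on $X$ (by induction, using that $H$ and binary coproduct are finitary), the colimit $F^HX$ does as well, and directed colimits in $X$ commute with the colimit defining $F^HX$. Finally, the freeness of this monad on $H$ amounts to exhibiting the natural transformation $\iota\colon H\to F^H$ given by $\iota_X=\varphi_X\cdot H\eta_X$ and checking that every $H$-pointed monad $(M,\kappa\colon H\to M)$ factors uniquely through $(F^H,\iota)$ via a monad morphism; this factorisation is obtained by freeness of each $F^HX$ applied to the $H$-algebra on $MX$ induced by $\kappa_X$ and $\mu^M_X$, with uniqueness again from the universal property. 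The main obstacle, in my experience, is really only bookkeeping: since all three claims (existence, monad structure, freeness) reduce to repeated applications of one and the same universal property, the only substantive verification is that the initial-chain construction actually converges at $\omega$, which is where the finitarity assumption on $H$ enters decisively. As this is a classical result, I would ultimately just cite \cite{freealgebras,freetriples} for the full details.
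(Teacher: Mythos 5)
The paper gives no proof of this proposition: it is stated as a classical result and deferred entirely to the cited references \cite{freealgebras,freetriples}. Your sketch is correct and reconstructs exactly the standard argument from those sources --- $F^HX$ as the initial algebra of $H(-)+X$ obtained from the $\omega$-chain (which is also how the paper later uses $F^H$ in its proof of \autoref{L:finitary}), the monad structure from the ensuing adjunction, finitarity from the $\omega$-colimit of finitary stages, and Barr's characterization of the induced monad as the free monad on $H$ --- so there is nothing to compare beyond noting that you and the authors are pointing at the same classical proof.
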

\noindent
For example, if $H$ is the polynomial functor associated to a signature
$\Sigma$, then $F^{H}X$ is the usual term algebra that contains all
finite $\Sigma$-trees over the set of generators
$X$. Proposition~\ref{prop:mon} implies that $H \mapsto F^H$ is the
object assignment of a monad on $\Funf(\C)$. Moreover, it is not
difficult to show, using Beck's theorem (see
e.g.~\cite{lane1998categories}), that the Eilenberg-Moore category of
this monad is $\Mndf(\C)$, the category of finitary monads on $\C$.
In addition, we have the following 
\begin{lemma}\label{L:finitary}
  The monad $H \mapsto F^H$ is finitary. 
\end{lemma}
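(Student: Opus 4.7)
The plan is to reduce the claim to a pointwise statement and then to a commutation of filtered colimits. First I would recall that $\Funf(\C)$ is equivalent to $[\C_{\fp},\C]$ and that filtered colimits in this functor category are computed pointwise. Hence, given a filtered diagram $(H_i)_{i\in\mathcal{I}}$ in $\Funf(\C)$ with colimit $H$ (taken pointwise and again finitary), it suffices to exhibit, for every object $X\in\C$, a natural isomorphism $F^H X \cong \colim_i F^{H_i} X$ in $\C$, natural in $X$.

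Second, I would use the initial algebra description of the free monad: by \autoref{prop:mon} and the standard construction, $F^H X$ is the initial algebra of the finitary endofunctor $\Phi_X^H\colon Y\mapsto HY+X$ on $\C$, which is obtained as the colimit of the initial chain
\[
W_0^H = 0, \qquad W_{n+1}^H = H W_n^H + X,
\]
i.e.\ $F^H X = \colim_{n<\omega} W_n^H$; and analogously $F^{H_i} X = \colim_{n<\omega} W_n^{H_i}$.

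Third, I would show by induction on $n$ that $W_n^H \cong \colim_i W_n^{H_i}$. The base case is trivial since $W_0^H = 0 = \colim_i 0$. For the step, the induction hypothesis together with each $H_i$ being finitary gives
\[
H W_n^H \cong (\colim_i H_i)(\colim_j W_n^{H_j}) \cong \colim_i \colim_j H_i W_n^{H_j} \cong \colim_i H_i W_n^{H_i},
\]
where the first isomorphism uses that $H$ is the pointwise colimit of the $H_i$, and the last uses Fubini for filtered colimits (the diagonal $i=j$ is cofinal). Since the functor $(-)+X$ preserves all colimits in its first argument, we obtain
\[
W_{n+1}^H = H W_n^H + X \cong \colim_i (H_i W_n^{H_i} + X) = \colim_i W_{n+1}^{H_i}.
\]
Interchanging the two filtered colimits then yields
\[
F^H X = \colim_{n<\omega} W_n^H \cong \colim_{n<\omega}\colim_i W_n^{H_i} \cong \colim_i \colim_{n<\omega} W_n^{H_i} = \colim_i F^{H_i} X,
\]
which is the desired pointwise isomorphism.

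The only delicate point is checking that this isomorphism is natural in $X$ and assembles into an isomorphism of finitary endofunctors identifying $F^H$ with the colimit of $F^{H_i}$ taken in $\Funf(\C)$; this follows because the initial chain $W_n^{(-)}$ is natural in both the functor argument and in $X$, so the induced comparison natural transformation $\colim_i F^{H_i}\Rightarrow F^H$ is a pointwise isomorphism, hence an isomorphism in $\Funf(\C)$. I expect the main potential obstacle to be purely bookkeeping: ensuring that the cocone used to form $\colim_i F^{H_i}$ is the one obtained by initiality (i.e.\ the transitions $F^{H_i}\to F^{H_j}$ induced by $H_i\to H_j$), but this is automatic from the functoriality of the initial chain construction.
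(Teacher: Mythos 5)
Your proof is correct and takes essentially the same route as the paper: the paper also constructs $F^H$ as the colimit of the initial $\omega$-chain of $\dot H = H\cdot(-)+\Id$ on $\Funf(\C)$ (via Kelly's theorem) and concludes finitariness of $H\mapsto F^H$ by interchanging that chain colimit with filtered colimits in $H$. Your pointwise formulation merely spells out the induction and colimit-interchange that the paper leaves implicit; the only cosmetic slip is the claim that $(-)+X$ preserves \emph{all} colimits (it preserves connected, hence filtered, colimits, which is all you use).
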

\begin{proof}
Note that for every finitary functor $H\colon \C \to \C$, $- \cdot H$
on $\Funf(\C)$ preserves all colimits, and $H \cdot -$ preserves
filtered colimits. It follows from Kelly's
result~\cite[Theorem~23.3]{kelly80} that the free monad $F^H$ on the
finitary functor $H$ is the initial algebra for $\dot H = H \cdot (-) + \Id$
on $\Fun_f(\C)$. This initial algebra can be constructed as the
colimit of the chain of the functors $H_i$, $i < \omega$, where
$H_0 = \Id$ and $H^{i+1} = H\cdot H^i + \Id$. It follows that the
monad $H \mapsto F^H$ is finitary.  
\end{proof}
\noindent
As a consequence, we see that $\Mndf(\C)$ is an lfp category
(see~\cite[Remark~2.78]{adamek1994locally}). 

\begin{remark}\label{R:funfp=fg}
  As shown by Ad\'amek et al.~\cite[Theorem~2.16]{amv_horps_full} (see
  also~\cite[Corollary~3.31]{amsw19functor}) that in $\Fun(\Set)$ fp and fg
  objects coincide. Moreover, every fp endofunctor on $\Set$ is the
  quotient of the polynomial endofunctor associated to a finite
  signature~\cite[Lemma~3.27]{amsw19functor}.
\end{remark}
\noindent
However, in $\Mndf(\Set)$ the classes of fp and fg objects
differ~\cite[Corollary~4.13]{amsw19algebra}. This means that the rational
fixpoint of a finitary functor on $\Mndf(\Set)$ may not be fully
abstract,\twnote{is there a formal definition of being fully
  abstract? SM: Yes, there is.} and therefore its LFF is needed.
\takeout{
We show that fp and fg objects differ in $\Mndf(\Set)$ by relating
monads to monoids via an adjunction. For every monoid $(M, *, 1_M)$ we
have the free $M$-set monad $LM$ with the following object assignment,
unit and multiplication:
\[
  LM(X) = M \times X, 
  \quad
  \eta_X \colon x \mapsto (1_M,x),
  \quad
  \mu_X \colon (n, (m,x)) \mapsto (n * m, x).
\]
This extends to a functor $L\colon \Mon \to \Mndf(\Set)$. For the reverse
direction, recall first that every monad on $\Set$ is \emph{strong};
this was observed by Moggi~\cite[Proposition~3.4]{Moggi1991}. This
means that for every monad $(T, \eta, \mu)$ on $\Set$ we have a
canonical strength, i.e.~a family of morphisms
\[
  s_{X,Y}\colon TX \times Y \to T(X \times Y)
\]
natural in $X$ and $Y$ and such that the following axioms hold
\begin{eqnarray*}
  \begin{tikzcd}[column sep=0mm,row sep=4mm,baseline=(firstrow.base)]
    |[alias=firstrow]|
    TX \times 1
    \arrow{rr}{s_{1,X}}
    & & T(X\times 1)
    \\
    & TX
    \arrow[draw=none]{ul}[sloped,description]{\cong}
    \arrow[draw=none]{ur}[sloped,description]{\cong}
  \end{tikzcd}
  \quad
  \begin{tikzcd}[column sep=5mm,row sep=4mm,baseline=(firstrow.base)]
    |[alias=firstrow]|
    TX \times Y \times Z 
    \arrow{rrd}[below left]{s_{X, Y \times Z}}
    \arrow{rr}{s_{X,Y} \times Z}
    &&
    T(X \times Y) \times Z
    \arrow{d}{s_{X\times Y,Z}}
    \\
    &&
    T(X\times Y\times Z)
  \end{tikzcd}
       \\
  \begin{tikzcd}[column sep=7mm,row sep=4mm,
      every label/.append style={
        inner sep=2mm,
      },baseline=(firstrow.base)
]
    |[alias=firstrow]|
    TX \times Y 
    \arrow{r}{s_{X,Y}}
    &
    T(X \times Y)
    \\
    X \times Y
    \arrow{u}{\mathllap{\eta_X \times Y}}
    \arrow{ru}[below right]{\eta_{X\times Y}}
  \end{tikzcd}
  \begin{tikzcd}[column sep=7mm,row sep=4mm,
      every label/.append style={
        inner sep=2mm,
      },
      baseline=(firstrow.base)
    ]
    |[alias=firstrow]|
    TTX \times Y
    \arrow{d}[swap]{\mu_X \times Y}
    \arrow{r}{s_{TX,Y}}
    &
    T(TX \times Y)
    \arrow{r}{Ts_{X,Y}}
    &
    TT(X\times Y)
    \arrow{d}{\mu_{X\times Y}}
    \\
    TX \times Y
    \arrow{rr}{s_{X,Y}}
    &&
    T(X \times Y)
  \end{tikzcd}
\end{eqnarray*}
Now we have a functor $R\colon \Mndf(\Set) \to \Mon$ sending the set
monad $(T,\eta,\mu)$ with strength $s$ to the monoid $T1$ with unit
$\eta_1\colon 1\to T1$ and multiplication
\[
  m\colon T1\times T1 \xrightarrow{s_{1,T1}} T(1\times T1) \xrightarrow{\cong} TT1
  \xrightarrow{\mu_1} T1.
\]
\begin{proposition}[label = monoidAdjunction]
    We have an adjoint situation $L\dashv R$ with the following unit $\nu$ and
    counit $\epsilon$:
    \begin{align*}
      \nu_M\colon& M\xrightarrow{\ \cong\ } M\times 1 = RLM
                   \\
      \epsilon_T\colon & LRT = T1\times (-)
         \xrightarrow{s_{1,-}}
         T(1\times (-)) \xrightarrow{\ T\cong\ } T,
    \end{align*}
    where $s$ is the strength of $T$.
  \end{proposition}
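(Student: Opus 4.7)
The plan is to verify, in order, that $L$ and $R$ are functors, that $\nu_M$ is a monoid morphism and $\epsilon_T$ a monad morphism, that both transformations are natural, and finally that the two triangle identities for $L\dashv R$ hold. Functoriality is routine: a monoid homomorphism $f\colon M\to N$ gives the monad morphism $f\times(-)$ because unit and multiplication of $LM$ and $LN$ are coordinatewise, and a monad morphism $\alpha\colon T\to T'$ yields the monoid morphism $\alpha_1\colon T1\to T'1$, using that monad morphisms between $\Set$-monads commute with the canonical strengths.

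That $\nu_M\colon M\xrightarrow{\cong} M\times 1=RLM$ is a monoid morphism follows by unfolding the monoid structure on $RLM$: its unit is $\eta^{LM}_1(*)=(1_M,*)$, and its multiplication sends $((m_1,*),(m_2,*))$ first by $s^{LM}_{1,M\times 1}$ to $(m_1,(*,(m_2,*)))$, then via the canonical isomorphism to $(m_1,(m_2,*))$, and finally through $\mu^{LM}_1$ to $(m_1\cdot m_2,*)$, matching the product of $M$ under $\nu_M$.

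The only non-routine calculation is showing that $\epsilon_T$ is a monad morphism. Unit preservation: at $X$, the composite $\epsilon_{T,X}\cdot\eta^{LRT}_X$ sends $x$ to $T\pi_X\cdot s_{1,X}(\eta_1^T(*),x)$; by Moggi's strength-unit axiom $s_{1,X}\cdot(\eta_1\times X)=\eta_{1\times X}$, this becomes $T\pi_X\cdot\eta_{1\times X}(*,x)=\eta^T_X(x)$ by naturality of $\eta$. Multiplication preservation requires combining the strength-associativity axiom (to reassociate the nested strengths on $T1\times(T1\times X)$) with the strength-multiplication axiom $\mu_{X\times Y}\cdot Ts_{X,Y}\cdot s_{TX,Y}=s_{X,Y}\cdot(\mu_X\times Y)$, together with naturality of $s$ and of $\mu$. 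This diagram chase is the step that will need the most care and is the principal obstacle.

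The triangle identities are then short. For $\epsilon_{LM}\cdot L\nu_M=\id_{LM}$: at $X$, $L\nu_M$ sends $(m,x)\mapsto((m,*),x)$, and $\epsilon_{LM,X}$ applies the strength of $LM$, yielding $(m,(*,x))$ and then $(m,x)$ after the canonical isomorphism. For $R\epsilon_T\cdot\nu_{RT}=\id_{RT}$: the map $R\epsilon_T=(\epsilon_T)_1=T\pi_1\cdot s_{1,1}$ equals the product projection $T1\times 1\to T1$ by the first strength axiom (coherence with the terminal object), and precomposition with the isomorphism $\nu_{RT}\colon T1\xrightarrow{\cong} T1\times 1$ yields the identity. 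Naturality of $\nu$ is immediate, and naturality of $\epsilon$ in $T$ follows again from the compatibility of monad morphisms with strength.
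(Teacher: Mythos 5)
Your proposal is correct and follows essentially the same route as the paper's proof: check that $\nu_M$ is a monoid morphism, that $\epsilon_T$ is a monad morphism via the strength axioms (the paper leaves this as ``straightforward diagram chasing'' where you name the specific axioms used), and verify the two triangle identities, the first by associativity of the product and the second by the unit/terminal coherence axiom of the strength. The additional checks you include (functoriality of $L$ and $R$, naturality of $\epsilon$ via compatibility of monad morphisms with the canonical strength of $\Set$-monads) are correct and merely make explicit what the paper treats as routine.
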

  \begin{proof}
    It is not hard to see that $\nu_M$ is a monoid morphism, because the
    monoid structure in $M\times 1 = RLM$ boils down to the monoid structure of
    $M$. Furthermore, $\nu_M$ is clearly natural in $M$. 

    For every monad $T$, $\epsilon_T$
    is a natural  transformation $T1\times (-) \to T$ because the strength $s$
    is. The axioms of $s$ being the strength of the monad $T$ prove that
    $s_{1,-}\colon T1\times (-) \to T(1\times (-))$ is a monad morphism by
    straightforward diagram chasing. To see $\nu$ and $\epsilon$ establish an
    adjunction, it remains to check the triangle identities:
    \begin{itemize}
      \item The first identity $\epsilon_{LM}\cdot L\nu_M = \id_{LM}$ is just
        the associativity of the product:
        \[
        \begin{tikzcd}[row sep = 1 mm]
          LM
          \arrow{r}{L\nu_M}
          & LRLM \arrow{rr}{\epsilon_{LM}}
          &
          & LM
          \\
          M\times (-)
          \arrow{r}{\nu_M\times (-)}
          & (M\times 1) \times (-)
          \arrow{r}{\cong}
          & M\times (1\times (-))
          \arrow{r}{M\times \cong}
          & M \times (-)
        \end{tikzcd}
        \]
        The composition is obviously just the identity on $M\times (-)$.

      \item The second identity $R\epsilon_T\cdot \nu_{RT} = \id_{RT}$
        comes directly from the first axiom of the strength $s$ of $T$:
        \[
        \begin{tikzcd}[row sep = 4 mm, baseline=(T1.base)]
          RT
          \arrow{r}{\nu_{RT}}
          & RLRT \arrow{rr}{R\epsilon_{T}}
          &
          & RT
          \\
          |[alias=T1]|
          T1
          \arrow[shiftarr={yshift={-7mm}}]{rr}{T\nu_1}
          \arrow{r}{\nu_{T1}}
          & T1 \times 1
          \arrow{r}{s_{1,1}}
          & T (1\times 1)
          \arrow{r}{\cong}
          & T 1
        \end{tikzcd}
        \qedhere
        \]

      \end{itemize}
  \end{proof}

  Note that from the fact that the unit of the adjunction $L \dashv R$ is
  an isomorphism we see that $L$ is fully
  faithful. Thus, we may regard $\Mon$ as a full coreflective
  subcategory of $\Mndf(\Set)$. Furthermore, the right-adjoint $R$
  preserves filtered colimits; this follows from the fact that
  filtered colimits in $\Mndf(\Set)$ are created by the forgetful
  functor $U\colon \Mndf(\Set) \to \Funf(\Set)$ and are formed objectwise
  in $\Funf(\Set)$. In addition, $L$ preserves monos; in fact, for an
  injective monoid morphism $m\colon M \monoto M'$ the monad morphism
  $Lm\colon LM \to LM'$ is monomorphic since all its components
  $m \times \id_X\colon M \times X \to M' \times X$ are clearly
  injective. By \autoref{lem:fppres}, we therefore have that a monoid
  is fp (resp.~fg) if and only if
  the monad $LM$ is fp (resp.~fg). 
  
  Now it is well-known that in the category $\Mon$ of monoids fp and fg
  objects do not coincide; see Campbell et
  al.~\cite[Example~4.5]{fpsemigroups} for an example of a fg monoid
  which is not fp. Thus, we conclude with the desired result: 
\begin{corollary}
  In the category of finitary monads on $\Set$ the classes of fp and
  fg objects do not coincide.
\end{corollary}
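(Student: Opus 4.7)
The plan is to transport the known non-coincidence of fp and fg monoids to $\Mndf(\Set)$ along the adjunction $L \dashv R\colon \Mon \to \Mndf(\Set)$ established in \autoref{monoidAdjunction}. By the well-known result of Campbell et al.~\cite{fpsemigroups} cited just above the statement, there exists a monoid $M_0$ which is finitely generated but not finitely presentable. It then suffices to show that $LM_0$ is an fg but not fp object in $\Mndf(\Set)$.

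First I would assemble the hypotheses needed to invoke \autoref{lem:fppres}. Both $\Mon$ and $\Mndf(\Set)$ are lfp categories: $\Mon$ is a finitary variety, while $\Mndf(\Set)$ is lfp because, by \autoref{L:finitary}, the free-monad monad $H \mapsto F^H$ on $\Funf(\Set)$ is finitary, hence its Eilenberg-Moore category $\Mndf(\Set)$ is lfp. The adjunction $L \dashv R$ is finitary because $R$ preserves filtered colimits, as observed in the paragraph after \autoref{monoidAdjunction} (filtered colimits in $\Mndf(\Set)$ are created objectwise in $\Funf(\Set)$, and $R$ just evaluates at $1$). In the same paragraph it is noted that $L$ is fully faithful (its unit $\nu_M$ is an isomorphism) and that $L$ preserves monos (since $m \times \id_X$ is injective whenever $m$ is).

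With these ingredients in place, items (2) and (3) of \autoref{lem:fppres} yield the biconditionals: $M$ is fp in $\Mon$ iff $LM$ is fp in $\Mndf(\Set)$, and $M$ is fg in $\Mon$ iff $LM$ is fg in $\Mndf(\Set)$. Applying this to $M_0$ gives a finitely generated monad $LM_0$ on $\Set$ which fails to be finitely presentable, proving the corollary. I do not anticipate any real obstacle: all of the verifications reduce to facts already established or explicitly pointed out in the preceding discussion, and the only external input is the existence of a fg-but-not-fp monoid, which is supplied by the citation.
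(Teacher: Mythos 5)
Your proof is correct and follows essentially the same route as the paper's own argument: transporting the fg-but-not-fp monoid of Campbell et al.\ along the fully faithful, mono-preserving left adjoint $L\colon\Mon\to\Mndf(\Set)$ of the finitary adjunction $L\dashv R$, and invoking items (2) and (3) of \autoref{lem:fppres}. All the hypotheses you assemble ($R$ finitary via objectwise filtered colimits, $L$ fully faithful via the invertible unit, $L$ preserving monos componentwise) are exactly those checked in the paper, so there is nothing to add.
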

}

Let us now proceed to presenting the category and endofunctor whose
LFF will turn out to be the monad of Courcelle's algebraic trees. 
Similarly as in the case of context-free languages, we will work with
the monad $E^{(-)} = F^{H+(-)}$
(cf.~Example~\ref{exceptionmonad}). Its category of Eilenberg-Moore
algebras is isomorphic to the category $H/\Mndf(\C)$ of 
$H$-pointed finitary monads on $\C$. 

Notice that this category is equivalent to a slice category: the
universal property of the monad $F^H$ states, that for every finitary
monad $B$ the natural transformations $H\to B$ are in one-to-one
correspondence with monad morphisms $F^H \to B$.  Hence, the category
$H/\Mndf(\C)$ of finitary $H$-pointed monads on $\C$ is isomorphic to
the slice category $F^H/\Mndf(\C)$.  This finishes the description of
the base category and we now lift the functor $\dot H$ to this
category.

Consider an $H$-pointed monad
$(B,\beta\colon H\to B) \in H/\Mndf(\C)$. As shown by Ghani et
al.~\cite{monadsofcoalgebras}, the endofunctor $H\cdot B+\Id$ carries
a canonical monad structure with the unit
$\inr\colon \Id \to H \cdot B + \Id$ and the multiplication
\[
  \begin{tikzcd}
    (HB+\Id)(HB+\Id)
    \arrow[oldequal]{d}
    \\
    HB(HB+\Id) + HB + \Id
    \arrow{d}{HB[\mu \cdot \beta B, \eta] + HB + \Id}
    \\
    HBB + HB + \Id
    \arrow{d}{[H\mu, HB] + \Id}
    \\
    HB + \Id,
   \end{tikzcd}
\]
where $\eta\colon \Id \to B$ and $\mu\colon B \cdot B \to B$ are the unit and
multiplication of the monad $B$. Furthermore, we have an obvious
$H$-pointing 
\[
  H\xrightarrow{\inl\cdot H\eta} H\cdot B +\Id.
\]
Milius and Moss~\cite{mmcatsolrps} proved that this defines an
endofunctor on the category of $H$-pointed monads,
\( \Hf\colon H/\Mndf(\C)\to H/\Mndf(\C), \) which is a lifting of
$\dot H$. In order to verify that $\Hf$ is finitary, we first need to
know how filtered colimits are formed in $H/\Mndf(\C)$.

\takeout{
\begin{lemma}\label{mndfilteredcolimits}
    The forgetful $U\colon \Mndf(\C) \to \Funf(\C)$ creates filtered colimits.
\end{lemma}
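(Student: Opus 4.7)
The plan is to show that a filtered diagram $D\colon \mathcal{I}\to \Mndf(\C)$ of finitary monads has its colimit in $\Mndf(\C)$ computed pointwise in $\Funf(\C)$. Write $D_i = (T_i,\eta^i,\mu^i)$, and let $(c_i\colon T_i \to M)_{i\in\mathcal{I}}$ be a colimit cocone of $UD$ in $\Funf(\C)$; this exists because $\Funf(\C)$ is cocomplete and filtered colimits there are formed objectwise. The task is then to equip $M$ with a unique monad structure $(\eta,\mu)$ making every $c_i$ a monad morphism, and to check that $(M,\eta,\mu)$ is the colimit of $D$ in $\Mndf(\C)$.

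First I would construct the unit. The natural transformations $\eta^i\colon \Id \to T_i$ form a cocone over $D$ because each connecting morphism $f\colon D_i \to D_j$ in the diagram is a monad morphism and hence satisfies $f\cdot \eta^i = \eta^j$. The universal property of the colimit yields a unique $\eta\colon \Id \to M$ with $c_i \cdot \eta^i = \eta$ for all $i$. Next, the crucial step for the multiplication: since every $T_i$ is finitary and the diagram $D$ is filtered, I can identify $MM$ with $\colim_i T_iT_i$. Concretely, because each $T_j$ preserves filtered colimits we have $T_j M = \colim_i T_j T_i$, and because $-\cdot T_i$ preserves all colimits we have $MM = \colim_j T_j M = \colim_j\colim_i T_jT_i$; then filteredness lets me restrict to the diagonal and conclude $MM \cong \colim_i T_iT_i$, with colimit injections $c_ic_i\colon T_iT_i \to MM$. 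The morphisms $c_i \cdot \mu^i\colon T_iT_i \to M$ form a cocone (again by the fact that the connecting morphisms in $D$ are monad morphisms), so they induce a unique $\mu\colon MM \to M$ with $\mu \cdot c_ic_i = c_i\cdot \mu^i$.

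Next I would verify the three monad laws for $(M,\eta,\mu)$. For each law it suffices, by the joint epicness of the colimit injections $c_ic_i$ (and $c_ic_ic_i$ for associativity, obtained by the same cofinality argument), to precompose with these injections, at which point the law reduces to the corresponding monad law for $T_i$ together with the definitions of $\eta$ and $\mu$. This yields that $(M,\eta,\mu)$ is a finitary monad and that each $c_i$ is by construction a monad morphism, i.e.\ a morphism in $\Mndf(\C)$. A standard argument then shows that $(M,\eta,\mu)$ is a colimit of $D$ in $\Mndf(\C)$: given any cocone $(g_i\colon D_i \to B)$ in $\Mndf(\C)$, the underlying natural transformations factor uniquely as $g_i = g\cdot c_i$ via some $g\colon M \to B$ in $\Funf(\C)$, and one checks—again via joint epicness of the $c_i$ and $c_ic_i$—that $g$ is automatically a monad morphism.

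For the creation statement, uniqueness of the monad structure is immediate: any monad structure on $M$ turning each $c_i$ into a monad morphism must satisfy $\eta = c_i\cdot \eta^i$ and $\mu\cdot c_ic_i = c_i\cdot \mu^i$, and these equations determine $\eta$ and $\mu$ uniquely because the $c_i$ and $c_ic_i$ are jointly epic. The main obstacle is the identification $MM = \colim_i T_iT_i$, which is exactly where finitariness of the monads in the diagram is needed; everything else is routine diagram chasing using universal properties.
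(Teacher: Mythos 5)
Your proposal is correct and follows essentially the same route as the paper's proof: form the colimit in $\Funf(\C)$, define $\eta$ via any injection composed with $\eta^i$, define $\mu$ via the identification $MM\cong\colim_i T_iT_i$, and verify the monad laws, the monad-morphism property of the injections, and the colimit universal property by joint epicness of the injections $c_i$, $c_ic_i$ and $c_ic_ic_i$. If anything, you supply more detail than the paper at the one genuinely non-routine step, namely the double-colimit/diagonal argument showing $MM\cong\colim_i T_iT_i$ from finitariness of the $T_i$, which the paper merely asserts.
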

\begin{proof}
    Let $D\colon \D\to \Mndf(\C)$, $Di = (M_i, \eta^i, \mu^i)$ be a filtered diagram.
    Take its colimit $M = \colim D$ with injections $\inj_i\colon M_i \to M$ in
    $\Funf(\C)$ and define a monad unit by
    \[
        \eta \equiv \big(
            \Id \xrightarrow{\eta^i} M_i
                \xrightarrow{\inj_i} M
        \big),
        \quad\text{ for any }i\in \D.
    \]
    Similarly, define the monad multiplication $\mu\colon MM\to M$ as the unique
    natural transformation with
    \[
        \begin{tikzcd}[ampersand replacement=\&]
            M_iM_i
                \arrow{r}{\mu^i}
                \arrow{d}[left]{\inj_i* \inj_i}
            \& M_i
                \arrow{d}[right]{\inj_i}
            \\
            MM
                \arrow[dashed]{r}{\mu}
            \& M
        \end{tikzcd}
        \quad\text{for any }i\in \D.
    \]
    The filteredness of $D$ proves the independence of the choice of $i$: for
    any other candidate $j\in \D$ choose an upper bound $m_{i,k}\colon M_i
    \rightarrow M_k \leftarrow M_j\colon m_{j,k}$ of $M_i$ and
    $M_j$. Then we have a commutative diagram
    \[
    \begin{tikzcd}[ampersand replacement=\&]
        \&
        M_i
            \arrow{dr}[above right]{\inj_i}
            \arrow{d}{m_{i,k}}
        \\
        \Id
        \arrow{ur}[above left]{\eta^i}
        \arrow{dr}[below left]{\eta^j}
        \arrow{r}[above]{\eta^k}
        \& M_k
            \arrow{r}[above]{\inj_k}
        \& M
        \\
        \&
        M_j
            \arrow{u}[right]{m_{j,k}}
            \arrow{ur}[below right]{\inj_j}
    \end{tikzcd}
    \]
    The left-hand triangles commute because $m_{i,k}, m_{j,k}$ are monad
    morphisms and the right-hand triangles because $m_{i,k}, m_{j,k}$ are
    connecting natural transformations of $D$ and the $\inj$ the colimit
    injections.

    Note that $(M_iM_i)_{i\in \D}$ is a filtered diagram with colimit $MM$ in
    $\Funf(\C)$. Let us check the monad laws:
    \begin{itemize}
    \item Unit laws: the diagrams
    \[
        \begin{tikzcd}[ampersand replacement=\&]
            M_i
            \arrow{dd}[left]{\inj_i}
            \arrow{r}{\eta^iM_i}
            \arrow[oldequal, shiftarr={yshift=7mm}]{rr}
        \descto[fill=none]{dr}{\text{\parbox{2cm}{\centering\scriptsize
                            Naturality\\of $\eta^i$ }}}
            \&
            M_iM_i
            \arrow{d}{M_i\inj_i}
            \arrow{r}{\mu^i}
        \descto[fill=none,xshift=2mm]{ddr}{\text{\parbox{2cm}{\centering\scriptsize
                            Definition\\of $\mu$ }}}
            \&M_i
            \arrow{dd}{\inj_i}
            \\
            {}
        \descto[fill=none,pos=0.8]{dr}{\text{\parbox{2cm}{\centering\scriptsize
                            Def.~$\eta$ }}}
            \&M_iM
                \arrow{d}{\inj_iM}
            \\
            M
                \arrow{ur}[above left]{\eta^iM}
                \arrow{r}[below]{\eta M}
            \& MM \arrow{r}[below]{\mu}
            \& M
        \end{tikzcd}
        \quad\text{ and }\quad
        \begin{tikzcd}[ampersand replacement=\&]
            M_i
                \arrow{r}{M_i\eta^i}
                \arrow[oldequal, shiftarr={yshift=7mm}]{rr}
                \arrow{dr}[below left]{M_i\eta}
                \arrow{dd}[left]{\inj_i}
            \& M_iM_i
                \arrow{r}{\mu^i}
                \arrow{d}{M_i\inj_i}
        \descto[fill=none,xshift=2mm]{ddr}{\text{\parbox{2cm}{\centering\scriptsize
                            Definition\\of $\mu$ }}}
            \& M_i
                \arrow{dd}{\inj_i}
            \\
            {}
        \descto[fill=none,pos=0.8]{ur}{\text{\parbox{2cm}{\centering\scriptsize
                            Def.~$\eta$ }}}
            \& M_iM
                \arrow{d}{\inj_iM}
            \\
            M
                \arrow{r}[below]{M\eta}
            \& MM
                \arrow{r}[below]{\mu}
            \& M
        \end{tikzcd}
    \]
    commute. As the $\inj_i$ are jointly epic, $(M,\eta,\mu)$ fulfills the unit
    laws.

    \item Associativity:
    \[
    \begin{tikzcd}[ampersand replacement=\&]
        M_iM_iM_i
            \arrow{rrr}{\mu^iM_i}
            \arrow{ddd}[left]{M_i\mu^i}
            \arrow{dr}[sloped,above]{\inj_i*\inj_i*\inj_i}
        \&\&\&
        M_iM_i
            \arrow{ddd}{\mu^i}
            \arrow{dl}[sloped,above]{\inj_i*\inj_i}
        \\[2mm]
        {}
        \& MMM \arrow{r}{\mu M}
              \arrow{d}{M \mu}
        \& MM \arrow{d}{\mu}
        \\
        {}
        \& MM \arrow{r}{\mu}
        \& M
        \\[2mm]
        M_i \arrow{rrr}[below]{\mu^i}
            \arrow{ur}[sloped,above]{\inj_i*\inj_i}
        \&\&\&
        M_i
            \arrow{ul}[sloped,above]{\inj_i}
    \end{tikzcd}
    \]
    The outside commutes, and by definition of $\mu$ also all inner parts
    (except possibly for the middle square). As the $\inj_i*\inj_i*\inj_i$ are
    jointly epic, the middle square commutes as well. 
    \end{itemize}
    By definition of $\eta$ and $\mu$, each $\inj_i\colon M_i\to M$ is a monad
    morphism. In fact, $\eta$ and $\mu$ are the unique natural transformations
    making the diagrams (in the definition) commute, i.e.~are the unique monad
    structure on $M$ such that $\inj_i$ is a monad morphism.

    To see that $(M,\eta,\mu)$ is a colimiting cocone, consider another cocone
    $n_i\colon M_i \to N$ in $\Mndf(\C)$. This induced a unique natural
    transformation $m\colon M\to N$ with $n_i = m\cdot \inj_i$. To see that $m$ is
    also a monad morphism, use the jointly epicness of the $\inj_i$:
    \[
        m \cdot \eta = m\cdot \inj_i \cdot \eta^i = n_i\cdot \eta^i = \eta^N,
    \]

    Consider the following diagram:
    \[
    \begin{tikzcd}[ampersand replacement=\&]
        M_iM_i
            \arrow{rrr}{\mu^i}
            \arrow{dr}[sloped,above]{\inj_i*\inj_i}
            \arrow[bend right]{ddr}[sloped,below]{n_i*n_i}
        \&\&\& M_i
            \arrow{dl}[sloped,above]{\inj_i}
            \arrow[bend left]{ddl}[sloped,below]{n_i}
        \\
        {}
        \& MM \arrow{r}{\mu}
             \arrow{d}[left]{m*m}
             \descto{dr}{?}
        \& M \arrow{d}{m}
        \\
        {}
        \& NN \arrow{r}{\mu^N}
        \& N
    \end{tikzcd}
    \]
    The outside commutes, because $n_i$ is a monad morphism. The outer triangles
    commute on the level of $\Funf(\C)$ and the upper part commutes because
    $\inj_i$ is a monad morphism. Again, as the $\inj_i*\inj_i$ are jointly
    epic, the inner square commutes as well, hence $m$ is a monad morphism.
\end{proof}
}
It is a straightforward exercise to prove that the forgetful functor
$U\colon \Mndf(\C) \to \Funf(\C)$ is finitary. Since $U$ is also monadic,
i.e.~$\Mndf(\C)$ is isomorphic to the Eilenberg-Moore category for the
monad $H \mapsto F^H$ on $\Funf(\C)$, we see that $U$ creates filtered
colimits.

Clearly, the canonical projection functor $H/\Mndf(\C) \to \Mndf(\C)$
creates filtered colimits, too. Therefore, filtered colimits in the
slice category $H/\Mndf(\C)$ are formed on the level of $\Funf(\C)$,
i.e.~objectwise. The functor $\dot H$ is finitary on $\Funf(\C)$ and
thus also its lifting $\Hf$ is finitary (see~\autoref{sec:powerset}).
Hence, we see that all requirements from \autoref{basicassumption} are
met: we have a finitary endofunctor $\Hf$ on the lfp category
$H/\Mndf(\C)$, and by \cite[Proposition~2.23]{secondordermonad} the
monos in $H/\Mndf(\C)$ are those monad morphisms in that category
whose components are monic in $\C$. Hence $\Hf$ preserves monos: given
any monomorphism $m\colon B \to B'$ in $H/\Mndf(\C)$ we know that
$Hm_X$ is monic since $H$ preserves monos and then $Hm_X + \id_X$ is
monic since monos are assumed to be closed under coproduct in
$\C$. Thus, by \autoref{thm:final}, we obtain
\begin{corollary}
  The functor $\Hf\colon H/\Mndf(\C) \to H/\Mndf(\C)$ has a locally finite
  fixpoint.
\end{corollary}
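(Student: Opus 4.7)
The plan is to apply \autoref{thm:final} to the functor $\Hf$ on the category $H/\Mndf(\C)$; the preceding discussion in this section has in fact already assembled the hypotheses of \autoref{basicassumption} for this setting, so the proof reduces to pointing at each requirement in turn.

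First, I would confirm that $H/\Mndf(\C)$ is locally finitely presentable. The category $\Mndf(\C)$ is the Eilenberg-Moore category of the finitary monad $H \mapsto F^H$ on the lfp category $\Funf(\C)$ (by \autoref{L:finitary} together with the monadicity observed just before that lemma), hence is itself lfp. Since $H/\Mndf(\C)$ is equivalent to the slice $F^H/\Mndf(\C)$, and slices of lfp categories under a fixed object remain lfp, local finite presentability is inherited.

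Next, I would verify that $\Hf$ is finitary. Both the canonical projection $H/\Mndf(\C) \to \Mndf(\C)$ and the forgetful $U\colon \Mndf(\C) \to \Funf(\C)$ create filtered colimits, so filtered colimits in $H/\Mndf(\C)$ are computed objectwise in $\Funf(\C)$. Since $\Hf$ is a lifting of the finitary functor $\dot H = H\cdot(-) + \Id$ on $\Funf(\C)$, finitarity transfers upward, exactly as in the generalized powerset construction discussion of \autoref{sec:powerset}.

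Finally, I would check that $\Hf$ preserves non-empty monos (in fact all monos). By Proposition~2.23 of \cite{secondordermonad}, the monos in $H/\Mndf(\C)$ are precisely the monad morphisms whose components in $\C$ are monic; for such a mono $m\colon B \to B'$ the component $(\Hf m)_X = Hm_X + \id_X$ is monic because $H$ preserves monos and because coproducts of two monos are monic in $\C$ by our standing hypothesis on $\C$. With these three facts in hand, \autoref{thm:final} yields a final lfg coalgebra for $\Hf$, i.e.~$\LFF \Hf$ exists. I do not anticipate a genuine obstacle: the result is a direct corollary, and the only care required is in the bookkeeping of how filtered colimits and monos propagate through the successive layers of Eilenberg-Moore algebras, slices, and liftings.
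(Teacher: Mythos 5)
Your proposal is correct and follows essentially the same route as the paper: verify \autoref{basicassumption} for $\Hf$ on $H/\Mndf(\C)$ (lfp base category, finitarity via objectwise filtered colimits created through the projection and the monadic forgetful functor to $\Funf(\C)$, mono preservation via \cite[Proposition~2.23]{secondordermonad} and the standing hypotheses on $H$ and on coproducts of monos in $\C$) and then invoke \autoref{thm:final}. The only cosmetic difference is that you justify local finite presentability of $H/\Mndf(\C)$ via the coslice $F^H/\Mndf(\C)$, whereas the paper gets it from the identification of $H/\Mndf(\C)$ with the Eilenberg--Moore category of the finitary monad $F^{H+(-)}$ on $\Funf(\C)$; both descriptions appear in the paper and give the same conclusion.
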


\begin{remark}\label{rem:T}
  The final $\Hf$-coalgebra is not of interest to us, but that of a
  related functor is. $\Hf$ generalizes to a functor
  \( \H\colon H/\Mndc(\C) \to H/\Mndc(\C) \) on $H$-pointed countably
  accessible\footnote{A colimit is \emph{countably filtered} if its
    diagram has for every countable subcategory a cocone. A functor is
    \emph{countably accessible} if it preserves countably filtered
    colimits.} monads. For every object $X \in \C$, the finitary
  endofunctor given by $X \mapsto HX + X$ has a final coalgebra
  $TX$. Then $X \mapsto TX$ is the object assignment of a monad
  \cite{aamvcia}, the monad $T$ is countably accessible~\cite{secondordermonad},
  and it carries the final $\H$-coalgebra~\cite{mmcatsolrps}.
\end{remark}
\noindent
\citet{secondordermonad} characterize a (guarded) recursive program scheme as 
a natural transformation 
\[
  V \to H \cdot F^{H+V} + \Id
  \qquad 
  \text{with $V$ fp (in $\Funf(\C)$)}, 
\]
or equivalently, via the generalized powerset construction
w.r.t.~the monad $E^{(-)}=F^{H+(-)}$ as an $\Hf$-coalgebra 
\[
  E^V \to \Hf(E^V)
  \qquad
  \text{(in $H/\Mndf(\C)$).}
\]
These $\Hf$-coalgebras on carriers $E^{V}$ where $V\in \Funf(\C)$
is fp form the full subcategory $\EQ \subseteq \Coalg \Hf$.
\emph{Op.~cit.}~provides two equivalent ways of constructing the monad of
Courcelle's algebraic trees: one works with $\Hf$ for a polynomial endofunctor
$H_\Sigma$ on $\C = \Set$ and obtains the monad of algebraic
$\Sigma$-trees  
\begin{enumerate}
\item  as the image of $\colim \EQ$ in the final coalgebra $T$
  of~\autoref{rem:T}, and 
\item as the colimit of $\EQ[2]$, where $\EQ[2]$ is the closure of $\EQ$
under strong quotients.
\end{enumerate}
We now provide a third characterization, and show that the monad of
Courcelle's algebraic trees is the locally finite fixpoint of $\Hf$.

To this end it suffices to show that $\EQ[2]$ is precisely the diagram
of $\Hf$-coalgebras with an fg carrier. This is established with the
help of the following technical lemmas. We now assume that
$\C = \Set$.
\begin{lemma}
    \label{quasiEpiPreservation}
    $\Hf$ maps strong epis to morphisms carried by strong epi natural
    transformations.
\end{lemma}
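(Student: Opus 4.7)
The plan is to reduce the claim to a componentwise surjectivity statement in $\Funf(\Set)$, where it becomes transparent.

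First, I would characterize strong epis in $H/\Mndf(\Set)$ as precisely those morphisms whose underlying natural transformation is componentwise surjective, i.e.~a strong epi in $\Funf(\Set)$. The canonical projection $H/\Mndf(\Set)\to\Mndf(\Set)$ is a forgetful functor out of a slice category and therefore creates colimits, in particular strong epis. The further forgetful functor $U\colon \Mndf(\Set)\to\Funf(\Set)$ is monadic, with left adjoint $H\mapsto F^H$ by \autoref{prop:mon}, hence preserves and reflects regular epis; since all three categories are lfp, regular epis coincide with strong epis. Thus a morphism $q\colon (B,\beta)\twoheadrightarrow(B',\beta')$ in $H/\Mndf(\Set)$ is a strong epi iff each component $q_X\colon BX \to B'X$ is surjective in $\Set$.

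Second, by the construction of $\Hf$ as a lifting of $\dot H = H\cdot(-)+\Id$ to $H$-pointed monads, the underlying natural transformation of $\Hf q$ has component at $X$ given by
\[
  Hq_X + \id_X\colon HBX+X\longrightarrow HB'X+X.
\]
Every set functor preserves surjections (by splitting via the axiom of choice when the codomain is non-empty, and trivially when it is empty), so $Hq_X$ is surjective for each $X$. Since binary coproducts in $\Set$ preserve surjections in each argument, $Hq_X+\id_X$ is surjective as well. Consequently, $\Hf q$ is carried by a componentwise surjective natural transformation, i.e.~by a strong epi in $\Funf(\Set)$, which is exactly the claim of the lemma.

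The only slightly delicate point is the first step, namely the characterization of strong epis via the two-stage (slice, then monadic) forgetful functor; but once that machinery is unwound, the preservation argument becomes entirely routine and does not require any concrete manipulation of monad structures on $HB+\Id$.
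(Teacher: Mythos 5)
Your overall strategy coincides with the paper's: reduce the claim to showing that a strong epi $q$ in $H/\Mndf(\Set)$ (equivalently, by the slice-category observation, in $\Mndf(\Set)$) has surjective components, and then conclude because every set functor preserves surjections and binary coproducts in $\Set$ preserve them. The second half of your argument is correct and is exactly what the paper does. The problem lies in how you justify the first half. You invoke two general principles, both of which are false: (i) that a monadic functor preserves and reflects regular epis --- the monadic forgetful functor $\mathsf{Cat}\to\mathsf{Grph}$ does not preserve regular epis, since a regular epi in $\mathsf{Cat}$ need not be surjective on arrows; and (ii) that in an lfp category regular and strong epis coincide --- again $\mathsf{Cat}$ is a counterexample, since there regular epis are not closed under composition while strong epis always are. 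So as written, the key step ``strong epi in $\Mndf(\Set)$ $\Rightarrow$ componentwise surjective'' is not established. (The \emph{conclusion} of that step is true for these particular categories, since $\Funf(\Set)\simeq[\Set_{\mathsf{fin}},\Set]$ is a presheaf category and $\Mndf(\Set)$ is a many-sorted variety over it, but that requires an argument you have not given.)

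The paper gets this step without any appeal to regular epis: given a strong epi $q\colon M\to N$ in $\Mndf(\Set)$, take the componentwise (strong epi, mono)-factorization $q=m\cdot e$ in $\Set$; this assembles to a factorization in $\Funf(\Set)$ and lifts further to $\Mndf(\Set)$ (the image of a monad morphism is a submonad). Since a strong epi is in particular an extremal epi, the mono $m$ must be an isomorphism, whence $q$ is componentwise surjective. If you replace your monadicity/lfp argument by this factorization-plus-extremality argument (or by a careful proof that strong epis in $\Mndf(\Set)$ are precisely the componentwise surjective monad morphisms), the rest of your proof goes through unchanged.
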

\begin{proof}
    Strong epis in slice categories are carried by strong epis, so consider a
    strong epi $q\colon M\epito N$ in $\Mndf(\Set)$.
    Consider the (strong epi,mono)-factorizations of the components
    $q_X\colon MX \to NX$ in $\Set$. This yields a (strong epi,
    mono)-factorization of $q$ in $\Funf$:
    \[
    \begin{tikzcd}
    M \arrow[shiftarr={yshift=7mm}]{rr}{q}
       \arrow[->>]{r}{e}
    & I \arrow[>->]{r}{m}
    & N
    \end{tikzcd}
    \]
    The factorization lifts further to $\Mndf(\Set)$, i.e.~we have
    factorized the monad morphism $q$ into an epi $e$ and a mono $m$
    in $\Mndf(\Set)$. Since every strong epi is an extremal epi (see
    e.g.~\cite{ahs09}), we get that $m$ is an isomorphism. Hence $q$
    has epic components.  Every endofunctor on $\Set$ preserves (strong) epis, so
    $Hq_X+\Id$ is epic for every set $X$. Therefore, so is the natural
    transformation $Hq+\Id$.
\end{proof}
\begin{remark}\label{rem:quotmnd}
  We recall a few properties of finitary monads and endofunctors on
  sets that we shall use in the proof of the next lemma.
  \begin{enumerate}
  \item Every fg object $B$ in $\Mndf(\Set)$ is the strong quotient of
    a free monad $F^P$ where $P$ is the polynomial functor associated
    to a finite signature. To see this, recall that the category
    $\Mndf(\Set)$ is finitary monadic over $\Funf(\Set)$,
    i.e.~$\Mndf(\Set)$ is (isomorphic to) the category of
    Eilenberg-Moore algebras for the finitary monad $H \mapsto F^H$ on
    $\Funf(\Set)$
    (cf.~\autoref{L:finitary}). By~\cite[Theorem~3.5]{amsw19algebra}, we thus
    have that the fg object $B$ is a strong quotient of $F^V$, where
    $V$ is an fp object in $\Funf(\Set)$. In the latter category, the
    fp objects are precisely the quotients of the polynomial functors
    on a finite signature (see \autoref{R:funfp=fg}). Hence, we have
    some polynomial functor $P$ and strong quotient $P \epito V$ in
    $\Funf(\Set)$. Since the left-adjoint $F^{(-)}$ preserves strong
    epis we obtain the desired strong quotient in $\Mndf(\Set)$:
    \[
      F^P \epito F^V \epito B.
    \]

  \item We conclude that every $H$-pointed monad $(B, \beta)$ is the
    strong quotient in $H/\Mndf(\Set)$ of $E^P = (F^{H+P},
    \kappa \cdot \inl)$ where $\kappa\colon H + P \to F^{H+P}$ denotes the
    universal natural transformation of the free monad\twnote{... is the unit of
    the monad $F^{(-)}$ on $\Funf(\Set)$} and $P$ is a
    polynomial functor on a finite signature. Indeed, given $\beta\colon H
    \to B$ take a strong quotient $q\colon F^P \twoheadrightarrow B$ in
    $\Mndf(\Set)$ and the monad morphism $m\colon F^H \to B$ induced by
    $\beta$. Observing that $F^{H+P}$ is the coproduct of $F^H$ and
    $F^P$ in $\Mndf(\Set)$ and that copairing the strong epi $q$ with
    $m$ yields a strong epi again, we obtain the desired strong
    quotient $[m,q]\colon (F^{H+P}, \kappa \cdot \inl) \to (B,\beta)$.
  \item Recall from~\autoref{E:proj}\ref{E:proj:2} that the polynomial endofunctors on
    $\Set$ are projective.  
  \end{enumerate}
\end{remark}
We obtain the following variation of \autoref{lfpquotient}:
\begin{lemma}
    \label{EQquotient}
    Every $\Hf$-coalgebra $b\colon (B,\beta) \to \Hf(B,\beta)$, with $B$ fg, is the
    strong quotient of a coalgebra from $\EQ$.
\end{lemma}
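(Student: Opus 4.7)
The strategy mirrors the proof of \autoref{lfpquotient}, but carried out inside $H/\Mndf(\Set)$ rather than in a generic lfp category. The key observation is that the free $H$-pointed monad construction $E^{(-)} = F^{H+(-)}$ on $\Funf(\Set)$ provides a left adjoint to the forgetful functor $U\colon H/\Mndf(\Set) \to \Funf(\Set)$, which lets us transform a lifting problem in $H/\Mndf(\Set)$ into one in $\Funf(\Set)$, where polynomial functors are projective.

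Concretely, I would start by applying \autoref{rem:quotmnd}(2) to obtain a strong quotient $q\colon E^P \twoheadrightarrow (B,\beta)$ in $H/\Mndf(\Set)$ with $P$ the polynomial endofunctor on $\Set$ associated to some finite signature. Then I would invoke \autoref{quasiEpiPreservation}: it gives that $\Hf q$ is carried by a componentwise surjective natural transformation, and since strong epis in $\Mndf(\Set)$ (and hence in the slice $H/\Mndf(\Set)$) are exactly those monad morphisms with surjective components, $\Hf q$ is itself a strong epi in $H/\Mndf(\Set)$.

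Next I would produce the coalgebra structure $b'\colon E^P \to \Hf(E^P)$ on the lifted carrier. By the adjunction $E^{(-)} \dashv U$, constructing such a morphism in $H/\Mndf(\Set)$ amounts to giving a natural transformation $P \to U\Hf(E^P)$ in $\Funf(\Set)$. The morphism $b \cdot q\colon E^P \to \Hf(B,\beta)$ transposes to a natural transformation $P \to U\Hf(B,\beta)$, and since $P$ is projective in $\Funf(\Set)$ (see \autoref{E:proj}\ref{E:proj:2} or \autoref{rem:quotmnd}(3)) and $U\Hf q$ is a strong epi in $\Funf(\Set)$, we obtain the required lifting. Transposing back yields $b'\colon E^P \to \Hf(E^P)$ with $\Hf q \cdot b' = b \cdot q$, so $q\colon (E^P, b') \twoheadrightarrow (B, \beta, b)$ is the desired strong-epi coalgebra homomorphism, and $(E^P, b') \in \EQ$ since $P$ is fp in $\Funf(\Set)$.

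The main technical subtlety is making sure the projectivity argument happens in the right category: the object $E^P$ itself is not projective in $H/\Mndf(\Set)$, so one cannot directly appeal to an abstract diagonal fill-in in that category. The resolution is precisely the transposition along the adjunction $E^{(-)} \dashv U$, which reduces the lifting to the projectivity of $P$ in $\Funf(\Set)$, combined with the fact that the forgetful functor $U$ preserves strong epis so that $U\Hf q$ remains a strong epi there.
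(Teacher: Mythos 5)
Your proposal is correct and follows essentially the same route as the paper: a strong quotient $q\colon E^P\twoheadrightarrow (B,\beta)$ from \autoref{rem:quotmnd}, the componentwise surjectivity of $\Hf q$ from \autoref{quasiEpiPreservation}, and a lifting of the transpose $P\to HB+\Id$ along $Hq+\Id$ using projectivity of $P$ in $\Funf(\Set)$, transposed back via the freeness of $E^P$. Your closing remark about where the projectivity argument must take place is exactly the point the paper's proof is built around.
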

\begin{proof}
  By \autoref{rem:quotmnd}, $(B,\beta)$ is the strong quotient of
  $(F^{H+P}, \kappa\cdot \inl)$, where $P$ a polynomial functor associated to a
  finite signature and therefore a projective object in
  $\Funf(\C)$.
  The following morphism in $H/\Mndf(\Set)$ 
    \[
        \begin{tikzcd}
            E^P =
            (F^{H+P}, \kappa\cdot \inl)
            \arrow[->>]{r}{q}
            &
            (B, \beta)
            \arrow[->]{r}{b}
            &
            \H (B, \beta)
        \end{tikzcd}
    \]
    corresponds to a natural transformation
    $\overline{b\cdot q}\colon P \to HB+\Id$ (using that $E^P$ is the
    free Eilenberg-Moore algebra on $P$). Since $P$ is projective and,
    by \autoref{quasiEpiPreservation}, $\Hf q$ is epic as a natural
    transformation, we obtain a natural transformation
    $p\colon P \to HF^{H+P}+\Id$ such that the triangle on the left below
    commutes; equivalently, the square on the right below commutes
    using again the universal property of $E^P$ as a free
    Eilenberg-Moore algebra:
    \[
        \begin{tikzcd}
            P \arrow{dr}[below left]{\overline{b\cdot q}}
            \arrow[dashed]{r}{p}
            &
            HF^{H+P}+\Id
            \arrow[->>]{d}{Hq+\Id}
            \\
            {}
            &
            HB+ \Id
        \end{tikzcd}
        \quad\Longleftrightarrow
        \begin{tikzcd}
            (F^{H+P},\kappa\cdot \inl)
            \arrow{r}{p^\sharp}
            \arrow[->>]{d}[left]{q}
            &
            \Hf(F^{H+P},\hat\kappa\cdot \inl)
            \arrow{d}{Hq+\Id}
            \\
            (B,\beta)
            \arrow{r}{b}
            &
            \Hf(B,\beta)
        \end{tikzcd}
    \]
    Thus we see that the coalgebra $b$ is the strong quotient of the
    coalgebra $p^\sharp$, which is a coalgebra in $\EQ$.
\end{proof}

It follows from~\autoref{EQquotient} that $\Coalgfg \Hf$ is the same category as $\EQ[2]$; thus their colimits in $\Coalg \Hf$ are
isomorphic and we conclude:
\begin{corollary} \label{lffAlgTree}
  Let $H_\Sigma\colon \Set \to \Set$ be polynomial endofunctor on
  $\Set$. Then the locally finite fixpoint of
  $\Hf\colon H_\Sigma/\Mndf(\Set)\to H_\Sigma/\Mndf(\Set)$ is the monad of
  Courcelle's algebraic trees, mapping a set to the algebraic
  $\Sigma$-trees over it.
\end{corollary}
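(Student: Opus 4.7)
The plan is to reduce the statement directly to results established earlier in the excerpt and in the cited work~\cite{secondordermonad}. The key observation is that by \autoref{thm:final}, the locally finite fixpoint $\LFF \Hf$ is the colimit of the inclusion $\Coalgfg \Hf \hookrightarrow \Coalg \Hf$. Hence, to show that $\LFF \Hf$ is the monad of algebraic $\Sigma$-trees, it suffices to prove that the diagram $\Coalgfg \Hf$ coincides, as a full subcategory of $\Coalg \Hf$, with the subcategory $\EQ[2]$ of all strong quotients of coalgebras in $\EQ$, because Milius and Moss~\cite{secondordermonad} have already shown that $\colim \EQ[2]$ carries the monad of Courcelle's algebraic $\Sigma$-trees.

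For the inclusion $\Coalgfg \Hf \subseteq \EQ[2]$ one simply invokes \autoref{EQquotient}: every $\Hf$-coalgebra with a finitely generated carrier is the strong quotient of an object of $\EQ$ (namely one of the form $E^P$ for a polynomial $P$ on a finite signature), so it lies in $\EQ[2]$ by definition. For the reverse inclusion $\EQ[2] \subseteq \Coalgfg \Hf$, I would argue that every carrier $E^V = F^{H_\Sigma+V}$ with $V$ fp in $\Funf(\Set)$ is finitely generated in $H_\Sigma/\Mndf(\Set)$: since $H_\Sigma$ is itself a polynomial (hence fp) endofunctor and fp objects in $\Funf(\Set)$ are closed under finite coproducts, $H_\Sigma + V$ is fp; the free-monad functor $F^{(-)}\colon \Funf(\Set) \to \Mndf(\Set)$ is left adjoint to the finitary forgetful functor and so preserves fp objects; and the pointed structure is inherited from $H_\Sigma$, so $E^V$ is fp, in particular fg, in $H_\Sigma/\Mndf(\Set)$. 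Since fg objects are closed under strong quotients (as recalled in the preliminaries), every coalgebra in $\EQ[2]$ has an fg carrier and therefore lies in $\Coalgfg \Hf$.

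Having established $\Coalgfg \Hf = \EQ[2]$, the colimits of the two inclusions into $\Coalg \Hf$ agree, so $\LFF \Hf \cong \colim \EQ[2]$; and by the Milius--Moss construction~\cite{secondordermonad} the right-hand side is precisely the monad of algebraic $\Sigma$-trees together with its canonical $H_\Sigma$-pointing. Because the forgetful functor from $H_\Sigma/\Mndf(\Set)$ down to $\Funf(\Set)$ creates colimits, the underlying monad of $\LFF \Hf$ sends a set $X$ to the set of algebraic $\Sigma$-trees over $X$, which is the desired identification.

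The main obstacle I expect is the second inclusion, i.e.\ showing that the carriers $E^V$ appearing in $\EQ$ are genuinely fg in the slice $H_\Sigma/\Mndf(\Set)$; this requires being careful about how fp/fg objects transport along the left-adjoint $F^{(-)}$ and into a slice category. Everything else is a direct application of \autoref{thm:final}, \autoref{EQquotient}, and the Milius--Moss characterization of algebraic trees as $\colim \EQ[2]$.
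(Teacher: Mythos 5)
Your proposal follows the paper's proof exactly: the paper's entire argument is that \autoref{EQquotient} identifies $\Coalgfg \Hf$ with $\EQ[2]$, so the colimit defining the LFF (\autoref{thm:final}) coincides with $\colim \EQ[2]$, which Milius and Moss already showed to be the monad of algebraic $\Sigma$-trees. You are in fact more careful than the paper in spelling out the reverse inclusion $\EQ[2] \subseteq \Coalgfg \Hf$, which the paper leaves implicit. One detail in your justification of that inclusion should be repaired, though: you argue that $H_\Sigma + V$ is fp in $\Funf(\Set)$ because $H_\Sigma$ is polynomial \textqt{hence fp}, but a polynomial functor is fp only when its signature is finite (cf.~\autoref{R:funfp=fg}), and the corollary does not restrict $\Sigma$ to be finite; moreover, transporting fp-ness from $\Mndf(\Set)$ into the slice needs an argument. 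The clean route is to recall that $H_\Sigma/\Mndf(\Set)$ \emph{is} the Eilenberg--Moore category of the finitary monad $E^{(-)} = F^{H_\Sigma+(-)}$ on $\Funf(\Set)$, so $E^V$ is the free algebra on the fp object $V$ and is therefore fp (hence fg) there, since the left adjoint of a finitary adjunction preserves fp objects; closure of fg objects under strong quotients then finishes the inclusion as you say. With that substitution your argument is complete and agrees with the paper's.
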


\section{Conclusions and Future Work}
\label{sec:con}
We have introduced the locally finite fixpoint of a finitary
mono-preserving endofunctor on an lfp category. We proved that this
fixpoint is characterized by two universal properties: it is the final
lfg coalgebra and the initial \fgiterative algebra for the given
endofunctor. Moreover, we have seen many instances where the LFF is
the domain of behaviour of finite-state and finite-equation
systems. In particular, all previously known instances of the rational
fixpoint are also instances of the LFF, and we have obtained a number
of interesting further instances not captured by the rational
fixpoint.

On a more technical level, the LFF solves a problem that sometimes makes the
rational fixpoint hard to apply. The latter identifies behaviourally equivalent
states (i.e.~is a subcoalgebra of the final coalgebra) if the classes of fp and
fg objects coincide. This condition, however, may be false or unknown (and
sometimes non-trivial to establish) in a given lfp category. But the LFF always
identifies behaviourally equivalent states.  


There are a number of interesting topics for further work concerning
the LFF. First, it should be interesting to obtain further instances
of the LFF, e.g.~analyzing the behaviour of tape
machines~\cite{coalgchomsky} might lead to a description of the
recursively enumerable languages by the LFF. Second, syntactic
descriptions of the LFF are of interest.  In works such
as~\cite{brs_lmcs,bbrs_ic,bms13,myersphd} Kleene type theorems and
axiomatizations of the behaviour of finite systems are
studied. Completeness of an axiomatization is then established by
proving that expressions modulo axioms form the rational fixpoint. It
is an interesting question whether the theory of the LFF we presented
here may be of help as a tool for syntactic descriptions and
axiomatizations of further system types.

As we have mentioned already the rational fixpoint is the starting
point for the coalgebraic study of Elgot's iterative~\cite{elgot} and
Bloom and \'Esik's iteration theories~\cite{be}. A similar path could
be followed based on the LFF and this should lead to new coalgebraic
iteration/recursion principles, in particular in instances such as
context-free languages or constructively $S$-algebraic formal power
series.

Another approach to more powerful recursive definition principles are
abstract operational rules (see~\cite{Klin11} for an overview). It has
been shown that certain rule formats define operations on the rational
fixpoint~\cite{bmr12,mbmr13}, and it should be investigated whether a
similar theory can be developed based on the LFF.

Furthermore, in the special setting of Eilenberg-Moore categories, one can
base the study of finite systems on \emph{free} finitely generated
algebras (rather than all fp or all fg algebras). Urbat~\cite{Urbat17}
recently proved that this yields a third fixpoint $\varphi H^T$
besides the rational fixpoint and the LFF, and Milius~\cite{Milius17}
investigated the relationship of the three fixpoints obtaining the
following picture for a lifting $H^T$ on an Eilenberg-Moore category
$\Set^T$:
\[
  \varphi H^T \twoheadrightarrow \varrho H^T \twoheadrightarrow
  \vartheta H^T \rightarrowtail \nu H^T.
\]
In addition, \emph{op.~cit.}~establishes sufficient conditions when the
three fixpoints on the left are isomorphic, i.e.~$\varphi H^T \cong
\varrho H^T \cong \vartheta H^T$. This is the most desired situation
where the rational fixpoint is fully abstract and determined by the
$H^T$-coalgebras $TX$, where $X$ is a finite set, i.e.~precisely the
targets of the generalized  powerset construction. 

Finally, the parallelism in the technical development between rational
fixpoint and LFF indicates that there should be a general theory that
is parametric in a class of ``finite objects'' and that allows to
obtain results about the rational fixpoint, the LFF and other possible
``finite behaviour domains'' as instances. This has also been studied
by Urbat~\cite{Urbat17}, and he obtains a uniform theory which yields
some results about the four fixpoints above as special instances.




%
%
\bibliographystyle{elsarticle-harv}
\bibliography{refs,all2,delta2}

\takeout{
\begin{appendix}
  \section{Adjunction between monoids and monads}
  \renewcommand\thesection{\Alph{section}}
  For every monoid $(M, *, 1_M)$ we have the free $M$-set
  monad $LM$, i.e.~the object assignment, unit and multiplication of $L$
  are as follows: 
  \[
    LM(X) = M \times X, 
    \quad
    \eta_X : x \mapsto (1_M,x),
    \quad
    \mu_X : (n, (m,x)) \mapsto (n * m, x).
  \]
  This extends to a functor $L: \Mon \to \Mndf(\Set)$. In the reverse
  direction, we have a functor $U: \Mndf(\Set) \to \Mon$ sending a
  monad $(T,\eta,\mu)$ with strength $s$ to the monoid $T1$ with unit
  $\eta_1: 1\to T1$ and multiplication
  \[
    m: T1\times T1 \xrightarrow{s_{1,T1}} T(1\times T1) \xrightarrow{\cong} TT1
    \xrightarrow{\mu_1} T1.
  \]
  \begin{proposition}[label = monoidAdjunction]
    We have an adjoint situation $L\dashv U$ with the unit $\nu$ and
    counit $\epsilon$ as follows:
    \begin{align*}
      \nu_M\colon& M\xrightarrow{\ \cong\ } M\times 1 = ULM
                   \\
      \epsilon_T\colon & LUT = T1\times (-)
         \xrightarrow{s_{1,-}}
         T(1\times (-)) \xrightarrow{\ T\cong\ } T
    \end{align*}
    where $s$ is the strength of $T$.
  \end{proposition}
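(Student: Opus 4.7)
The plan is to establish the adjunction $L \dashv U$ by verifying the three standard ingredients: (i) that $\nu_M$ and $\epsilon_T$ are morphisms in $\Mon$ and $\Mndf(\Set)$ respectively, (ii) that both are natural in their arguments, and (iii) that the two triangle identities hold. Since the data of the adjunction is already spelled out in the statement, no choice needs to be made; what remains is a verification that uses exclusively the strength axioms of a set monad and the definition of $UT$.

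First I would dispatch $\nu_M$. By construction the monoid structure on $ULM = M \times 1$ is obtained from the unit $\eta^{LM}_1 \colon 1 \to M \times 1$, namely $\ast \mapsto (1_M, \ast)$, and from the strength-based multiplication, which unfolds to $((m,\ast),(n,\ast)) \mapsto (m * n, \ast)$. Hence $\nu_M$ is literally the monoid isomorphism $M \cong M \times 1$, and its naturality in $M$ is trivial.

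The substantive work is showing $\epsilon_T$ is a monad morphism, and this is where I expect the main obstacle: one must juggle the four strength axioms simultaneously with the definition of $UT$'s monoid structure (which already bakes in strength). For the unit law $\epsilon_T \cdot \eta^{LUT} = \eta^T$, I would paste together the unit-strength axiom (the triangle involving $\eta_X \times Y$) with the definition of $\eta^{LUT}_X = (\eta^T_1,\id_X)$; the required identity is then immediate. For the multiplication law, I would consider the square
\[
\begin{tikzcd}[column sep=large]
LUT \cdot LUT\, X \ar[r,"\epsilon_{T} \cdot LUT"] \ar[d,"\mu^{LUT}_X"'] & T \cdot LUT \, X \ar[r,"T\epsilon_{T,X}"] & TTX \ar[d,"\mu^T_X"] \\
LUT \, X \ar[rr,"\epsilon_{T,X}"'] & & TX
\end{tikzcd}
\]
and reduce it, via the definition of $\mu^{LUT}$ and the isomorphisms $1 \times Y \cong Y$, to a diagram chase that combines the associativity axiom of the strength $s_{X,Y}$ with its compatibility with $\mu^T$. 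Naturality of $\epsilon_T$ in $T$ follows from naturality of the strength in monad morphisms, which is part of the definition of a strong monad morphism (and is automatic on $\Set$ since every monad morphism preserves the canonical strength).

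Finally, the two triangle identities are largely formal. The identity $\epsilon_{LM} \cdot L\nu_M = \id_{LM}$ unfolds, after expanding $L\nu_M$, to the associativity of the product $M \times (1 \times X) \cong M \times X$, as indicated by the diagram in the statement. The identity $U\epsilon_T \cdot \nu_{UT} = \id_{UT}$ is exactly the unit-strength axiom $s_{1,1} \cdot (\id_{T1} \times \eta_1) = T(\cong) \cdot \eta_{T1}$ read in the other direction, i.e.\ the axiom stating that $s_{X,Y}$ is a retraction of $\eta \times \id$ along the first coordinate specialized to $X = 1$. Once these four verifications are done, the adjunction follows by the usual unit-counit characterization.
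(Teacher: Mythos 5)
Your proposal follows essentially the same route as the paper: check that $\nu_M$ is the evident monoid isomorphism, check that $\epsilon_T$ is a monad morphism by diagram chasing with the strength axioms, and then verify the two triangle identities, the first reducing to associativity of the product $M\times(1\times X)\cong M\times X$. One slip, though: for the second triangle identity $U\epsilon_T\cdot\nu_{UT}=\id_{UT}$ you invoke the axiom relating $s$ to the monad unit $\eta$ (and the formula you write, $s_{1,1}\cdot(\id_{T1}\times\eta_1)$, does not even typecheck, since $\id_{T1}\times\eta_1$ lands in $T1\times T1$). What is actually needed is the \emph{unit-object} axiom of the strength, i.e.\ that $s$ composed with the unitor isomorphisms $TX\times 1\cong TX$ and $T(X\times 1)\cong TX$ is the identity, specialized to $X=1$: since $\nu_{UT}\colon T1\to T1\times 1$ is the right unitor and $U\epsilon_T=T(\cong)\cdot s_{1,1}$, that axiom immediately gives the composite $\id_{T1}$. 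This is a local misattribution rather than a structural gap; the rest of the argument, including the reduction of the multiplication law for $\epsilon_T$ to the associativity and $\mu$-compatibility axioms of the strength, is sound and matches the paper's (terser) proof.
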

  \begin{proof}
    It is not hard to see that $\nu_M$ is a monoid homomorphism, because the
    monoid structure in $M\times 1 = ULM$ boils down to the monoid structure of
    $M$. Furthermore, $\nu_M$ is natural in $M$. 

    For every monad $T$, $\epsilon_T$
    is a natural  transformation $T1\times (-) \to T$ because the strength $s$
    is. The axioms of $s$ being the strength of the monad $T$ prove that
    $s_{1,-}: T1\times (-) \to T(1\times (-))$ is a monad morphism by
    straightforward diagram chasing. To see $\nu$ and $\epsilon$ establish an
    adjuction, it remains to check the triangle identities:
    \begin{itemize}
      \item The first identity $\epsilon_{LM}\cdot L\nu_M = \id_{FM}$ is just
        the associativity of the product:
        \[
        \begin{tikzcd}[row sep = 1 mm]
          LM
          \arrow{r}{L\nu_M}
          & LULM \arrow{rr}{\epsilon_{LM}}
          &
          & LM
          \\
          M\times (-)
          \arrow{r}{\nu_M\times (-)}
          & (M\times 1) \times (-)
          \arrow{r}{\cong}
          & M\times (1\times (-))
          \arrow{r}{M\times \cong}
          & M \times (-)
        \end{tikzcd}
        \]
        The composition is obviously just the identity on $M\times (-)$.

      \item The second identity $U\epsilon_T\cdot \nu_{UT} = \id_{UT}$
        comes directly from one of the axioms of the tensorial strength $s$ of $T$:
        \[
        \begin{tikzcd}[row sep = 4 mm, baseline=(T1.base)]
          UT
          \arrow{r}{\nu_{UT}}
          & ULUT \arrow{rr}{U\epsilon_{T}}
          &
          & UT
          \\
          |[alias=T1]|
          T1
          \arrow[shiftarr={yshift={-7mm}}]{rr}{T\nu_1}
          \arrow{r}{\nu_{T1}}
          & T1 \times 1
          \arrow{r}{s_{1,1}}
          & T (1\times 1)
          \arrow{r}{\cong}
          & T 1
        \end{tikzcd}
        \qedhere
        \]

      \end{itemize}
  \end{proof}
\end{appendix}
}
\end{document}